\newcommand{\rev}[1]{{\color{blue}#1}}
\newcommand{\rev}[1]{#1}
\newcommand{\yu}[1]{{\color{blue}#1}}
\newcommand{\yu}[1]{#1}
\newcommand{\aug}[1]{{\color{blue}#1}}
\newcommand{\aug}[1]{#1}
\newcommand{\augg}[1]{{\color{blue}#1}}
\newcommand{\augg}[1]{#1}
\newcommand{\auggg}[1]{{\color{blue}#1}}
\newcommand{\auggg}[1]{#1}
\newcommand{\final}[1]{{\color{blue}#1}}
\newcommand{\final}[1]{#1}
\newcommand{\res}[1]{{\color{blue}#1}}
\newcommand{\res}[1]{#1}
\newcommand{\resf}[1]{{\color{blue}#1}}
\newcommand{\resf}[1]{#1}
\newtheorem{thm}{Theorem} 
\newtheorem{obs}{Observation}
\newtheorem{pps}{Proposition}
\newcolumntype{L}[1]{>{\raggedright\arraybackslash}p{#1}}
\newcolumntype{C}[1]{>{\centering\arraybackslash}p{#1}}
\newcolumntype{R}[1]{>{\raggedleft\arraybackslash}p{#1}}
\begin{document}
%
\title{\res{Mobile Data Trading: Behavioral Economics Analysis and Algorithm Design}}


\author{\textcolor[rgb]{0,0,0}{Junlin Yu,~\IEEEmembership{Student Member,~IEEE}, Man Hon Cheung, Jianwei Huang,~\IEEEmembership{Fellow,~IEEE}, and H. Vincent Poor,~\IEEEmembership{Fellow,~IEEE}}
\thanks{Manuscript received September 22, 2016; revised January 13, 2017; accepted January 26, 2017.
This work is in part supported by the General Research Funds (Project Number CUHK 14202814, 14206315, and 14219016) established under the University Grant Committee of the Hong Kong Special Administrative Region, China, and in part by the U.S. Army Research Office under Grant W911NF-16-1-0448 and U. S. National Science Foundation under Grant ECCS-1549881. Part of this paper was presented in \cite{wiopt15}.}
\thanks{J. Yu, M. H. Cheung, and J. Huang are with the Department of Information Engineering, the Chinese University of Hong Kong, Hong Kong, China; Emails: \{yj112, mhcheung, jwhuang\}@ie.cuhk.edu.hk. H. V. Poor is with the Department of Electrical Engineering, Princeton University, Princeton, NJ, USA; Email: poor@princeton.edu. \final{The authors would like to thank Chap Yin Liu and Fong Yuen Pang for their help and support in developing the mobile data trading app.}}
}


%


\maketitle
\thispagestyle{empty}

\begin{abstract}

Motivated by the recently launched mobile data trading markets (e.g., China Mobile Hong Kong's 2nd exChange Market), in this paper we study the mobile data trading problem under the future data demand \emph{uncertainty}. \res{ We introduce a brokerage-based market, where sellers and buyers propose their selling and buying quantities, respectively, to the trading platform that matches the market supply and demand. }
To understand the users' realistic trading behaviors, a prospect theory (PT) model from behavioral economics is proposed, which includes the widely adopted expected utility theory (EUT) as a special case. Although the PT modeling leads to a challenging non-convex optimization problem, the optimal solution can be characterized by exploiting the unimodal structure of the objective function.
\res{Building upon our analysis, we design an algorithm to help estimate the user's risk preference and provide trading recommendations dynamically, considering the latest market and usage information.}
It is shown in our simulation that the risk preferences have a significant impact on the user's decision and outcome: a risk-averse dominant user can guarantee a higher minimum profit in the trading, while a risk-seeking dominant user can achieve a higher maximum profit.
By comparing with the EUT benchmark, it is shown that a PT user with a low reference point is more willing to buy mobile data. Moreover, when the probability of high future data demand is low, a PT user is more willing to buy mobile data due to the probability distortion comparing with an EUT user.
\end{abstract}

\normalsize

%
\IEEEpeerreviewmaketitle


\section{Introduction}
\subsection{Background and Motivation}
With the increasing computation and communication capabilities of mobile devices, global mobile data traffic has been growing tremendously in the past few years \cite{eri,cisco}. 
\res{One way to alleviate the tension between the mobile data demand and the network capacity is to utilize the spectrum more efficiently, for example through spectrum sharing \cite{hanzz,7118253,6354285}.
Another way is to flatten the demand curve through pricing \cite{zhang2014time,ha2012tube,7337420,6848090,6849296}. More specifically}, the mobile service providers have been experimenting with several innovative pricing schemes, such as usage-based pricing, shared
data plans, and sponsored data pricing, to extract more revenue from the growing data while sustaining a good service quality to users.
However, the above mentioned schemes do not fully take advantage of the heterogeneous demands across all mobile users, and unused data in the monthly plan will be cleared at the end of the month. 
Recently, China Mobile Hong Kong (CMHK) launched the first 4G data trading platform in the world, called the 2nd exChange Market (2CM), which allows its users to trade their monthly 4G mobile data quota directly with each other.\footnote{\res{The three major mobile operators in China (China Mobile, China Unicom, and China Telecomm) now all support such a trading platform \cite{china1,china2,china3}.}} 
In this platform,  a seller can sell some of his remaining data quota of the current month on the platform with a desirable price set by himself. 
If a buyer wants to buy some data at the listed price, the platform will help complete the transaction and transfer \textcolor[rgb]{0,0,0}{the} proper data amount from the seller's quota to the buyer's quota of that month.

However, there is a shortcoming of the current one-sided 2CM mechanism. More specifically, 2CM is a sellers' market, where a buyer cannot list his desirable buying price and quantity. \yu{This means that a buyer needs to frequently check the platform to see whether the current (lowest) selling price is acceptable, while a seller does not know whether he can sell the data at his proposed price immediately. In other words, both buyers and sellers suffer from the incomplete information of this one-sided market.} 

\res{To improve the existing CMHK mechanism, we apply the widely used Walrasian auction in the stock markets \cite{1979, stoll1990stock}. In such a mechanism, both sellers and buyers can submit their selling and buying prices and quantities to the platform.}
The platform clears some transaction whenever the highest buying price among buyers is no smaller than lowest selling price among sellers. \yu{We are interested in understanding how a user should participate in such a market under the \emph{uncertainty} of his future data usage, given his remaining data quota of the current month and the current prices and quantities of other sellers and buyers.} More specifically, we would like to answer the following questions: \emph{(i) Should a user choose to be a seller or a buyer? (ii) How much should he sell or buy?} 


\yu{\res{The key feature of the user's decision problem is the future data demand uncertainty, as there will be a \emph{satisfaction loss} if the user's realized demand exceeds his monthly data quota (after incorporating the results of data trading), and there will be a waste of money if the user's realized demand is less than his monthly data quota (if the user purchases too much data from the market)}. A typical approach of solving a user's decision problem with uncertainty is to maximize the user's expected utility, i.e., the expected utility theory (EUT) (e.g., \cite{schmeidler1989subjective}).}
Empirical evidences \cite{kahneman_pt79,tversky_ai92}, however, have shown that the EUT model can deviate from real world observations significantly due to the complicated psychological aspect of human decision-making.
Alternatively, researchers in behavioral economics have \textcolor[rgb]{0,0,0}{shown} that \emph{prospect theory} (PT), which establishes a more general theoretical model that includes EUT as a special case, provides a psychologically more accurate description of the decision making under uncertainty, and explains some human behaviors that seem to be illogical under EUT \cite{kahneman_pt79}.\footnote{\res{The expected utility theory (EUT), which is based on an axiomatic system, has an underlying assumption that decision makers are rational and risk-averse when facing uncertainties \cite{eutt}. PT is one of the most
widely used generalizations of EUT, as PT incorporates human emotions and psychology into the utility theory \cite{jin_bp08,ptr1,ptr2}.} }

More specifically, PT shows that a decision maker evaluates an outcome significantly differently from what people have commonly assumed in EUT in several aspects:
	(1) \emph{Impact of reference point}: A PT decision maker's evaluation is based on the \emph{relative} gains or losses comparing to a reference point, instead of the absolute values of the outcomes.  
  (2) \emph{The s-shaped asymmetric value function}: A PT decision maker tends to be \emph{risk-averse} when considering gains and \emph{risk-seeking} when considering losses. Furthermore, the PT decision maker is \emph{loss averse}, in the sense that he strongly prefers avoiding losses to achieving gains. 
  (3) \emph{Probability distortion}: A PT decision maker tends to \emph{overweigh} low probability events and \emph{underweigh} high probability events. 
As PT has been shown to be more accurate than EUT in predicting human behaviors \cite{barberis2001prospect, kahneman_pt79, tversky_ai92}, it has been applied to gain better understandings of \textcolor[rgb]{0,0,0}{financial markets} \cite{jin_bp08} and labor markets \cite{camerer2004behavioral}. \aug{However, there does not exist any PT-based studies in understanding the users' decisions in the mobile data trading market. }

\subsection{Contributions}\label{sec:con}
In this paper, we aim to understand a user's realistic trading behavior in a mobile data market, considering his future data demand uncertainty. 

\res{In the first part of the paper, we focus on deriving the optimal trading decision of a user based on his remaining quota and possible demand till the end of the billing cycle, without considering future possible tradings.}\footnote{For example, the billing cycle of a monthly data plan is a month. \res{In a more general case, a user may trade multiple times in the same billing cycle. We can model the user's decision problem in this general setting as a Dynamic Programing (DP) problem, which is much more challenging to solve. We leave this to the future work.}} 
\res{Specifically, we formulate the problem as a two-stage optimization
problem, where the user decides whether to be a seller or a buyer in Stage I (at a particularly given trading time), and then determines his selling quantity (as a seller) or buying quantity (as a buyer) in Stage II.}
Besides considering the optimal decision of a risk-neutral user in the EUT framework, we will also consider the impact of the user's risk preferences on the decision. 
To be more specific, a risk-seeking decision maker is aggressive and wants to achieve a high maximum profit even with the risk of a low minimum profit, while a risk-averse decision maker is conservative and wants to guarantee a satisfactory level of minimum profit.
The PT provides a comprehensive analytical framework for understanding the optimal decisions of different types of decision makers. However, the corresponding optimization is non-convex hence is challenging to solve.  
Nevertheless, by exploiting the unimodal structure in each sub-interval of the feasible set, we can obtain the globally optimal solution of the non-convex optimization problem.
We further discuss the practical insights by comparing the analysis under PT and EUT for the case with binary outcomes. 

\res{In the second part of the paper, we introduce an algorithm
for autonomous and adaptive data trading based on the theory developed in the first part. In such an algorithm, a user can trade multiple times during a billing cycle, with each trading decisions being made in a ``myopic'' fashion without considering the possible future trading opportunities. Since a user's risk preference will significantly impact the result of this algorithm, we design another algorithm to estimate the user's risk preference. We implement the algorithms on an Android app\footnote{Notice that the app is based on the real CMHK market, hence it is different from our theory in two aspects. First, there is no buyer's market, and the seller will make decision at the price slightly lower than the minimum selling price. Second, a user can make several decisions during a billing cycle, based on his current quota and future data demand uncertainty.} and evaluate our algorithm's
performances under different risk preferences through numerical examples.}

Our key contributions of this paper are summarized as follows:

\begin{itemize}
	\item \emph{Behavioral economics modeling of uncertainty}: We use prospect theory to model the user's trading behavior under future data demand uncertainty. We consider all three key characteristics of PT and derive key insights that characterize the optimal selling and buying decisions. 
	\item \emph{Characterization of the optimal trading solution}: \aug{Despite the non-convexity of the user's decision problem, we are able to obtain the globally optimal solution by exploiting the convexity and unimodality in different sub-intervals of the feasible set.} We further evaluate how different behavioral characteristics (i.e., reference point, probability distortion, and s-shaped valuation) affect this optimal decision.
	\item \emph{Engineering insights on risk preferences}: Comparing with the benchmark EUT result, we show that a PT user with a low reference point is more willing to buy mobile data and less willing to sell mobile data. Moreover, a PT user is even more willing to buy mobile data when the probability of high future data demand is small, mainly due to the probability distortion.
	\item \res{\emph{Evaluation of algorithms}: We evaluate the user's profit under our proposed algorithm numerically.} Based on this, we show that a risk-averse user can achieve the highest minimum profit, a risk-seeking user can achieve the highest maximum profit, and a risk-neutral user can achieve the highest average profit.
\end{itemize}

 Next we review the literature in Section II. In Section III, we formulate the user's utility functions under both EUT and PT. In Section IV, we compute the optimal user decision, and illustrate the insights through a special case of binary outcomes. \res{In Section V, we explain the implementation of our multi-trade algorithm on an Android app, which estimates the user's risk preferences and compute the optimal trading decisions accordingly. In Section VI, we numerically evaluate the user's optimal decision based on several model parameters, and compute the overall profit that our algorithm can achieve in a billing cycle under different risk preferences.} We conclude the paper in Section VII.

\section{Literature Review} \label{sec:review}

\subsection{Mobile Data Pricing}
\augg{Previous studies have focused on several different mobile data pricing schemes, such as usage-based, flat-rate, cap pricing, time-dependent pricing, location dependent pricing, and shared data plans\cite{zhang2014time,ha2012tube,7337420,6848090,6849296}.
For example, Zhang \emph{et al.} in \cite{zhang2014time} studied the ISP's revenue maximization problem with three different pricing schemes, including flat-rate scheme, usage based scheme, and cap scheme.
Ha \emph{et al.} in \cite{ha2012tube} showed that time dependent pricing schemes can reduce the ISPs' need of over-provision network resources at peak times, hence can reduce the ISPs' operational costs. 
Ma \emph{et al.} in \cite{7337420} showed that time and location aware pricing for mobile data traffic can incentivize users to smooth traffic and reduce network congestion.
The authors in \cite{6848090} and \cite{6849296} showed that shared data plan can decrease the average unit usage cost for the users by allowing multiple users share the same pool of data quota.
In this paper, we consider the mobile data trading among all the users subscribing to the same service (such as 4G) from a mobile operator, hence takes advantage of the heterogeneous demands of a larger number of users in the mobile data market. }

\subsection{\augg{User Decisions in Communication Networks and Smart Grids under PT}}
The research of using PT to understand user decisions in communication networks and smart grids is at its infancy stage. Due to the complexity of modeling and analysis, all previous literature have only considered one or two of the three key features of PT in the modeling. Li \emph{et al.} in \cite{li_pi12,li2014users} and Yang \emph{et al.} in \cite{yang2014impact} compared the equilibrium strategies of a binary decision game among wireless network end-users under EUT and PT, where they considered a linear value function with the probability distortion.
Xiao \emph{et al.} in \cite{6895275} and Wang \emph{et al.} in \cite{wang2014integrating} characterized the unique Nash Equilibrium of an energy exchange game among microgrids under PT, where they considered a linear value function with the probability distortion. Yu \emph{et al.} in \cite{yu2014spectrum} studied a secondary wireless operator's spectrum investment problem, where they considered a linear probability distortion and s-shaped value function. \augg{To the best of our knowledge, this paper is the first work that studies a mobile data trading problem under PT, where we capture all three characteristics of PT when modeling and analyzing the problem}. As a result, we are able to \textcolor[rgb]{0,0,0}{gain} a more thorough understanding of the user's optimal decisions based on his specific risk preferences and derive more insights.

\section{System Model}\label{sec:model}
  In this paper, we consider the trading decision of a single user\footnote{Note that there are many users in the market, and a single user' decision will not significantly impact on the market. So from a single user's point of view, the market can be viewed as exogenously given and stochastically changing.} in a mobile data trading platform. 
  We first introduce the mobile data trading market in Section \ref{sec:III-A}.
  Then we discuss the user's profile in Section \ref{sec:III-B} and his risk preferences model in Section \ref{sec:III-C}.
  In Section \ref{sec:III-D}, we formulate the user's two-stage trading decision problem.

\subsection{Mobile Data Trading Market}\label{sec:III-A}
  We consider a two-sided mobile data trading platform as shown in Fig.~\ref{fig:market}. The \emph{seller's market} lists the sellers' proposed prices and the corresponding amount of data available for sale at each price.\footnote{\res{Under the continuous double auction mechanism, every user can make his trading decision (i.e., his role, price, and quantity) at any time. He will submit his decision to the platform as his bid. When the platform receives a user's bid, it will try to match the existing bids with this new bid, or keep
this new bid online if it cannot be matched immediately.}} \res{If a buyer wants to purchase some data quota immediately, he can choose to purchase at the \emph{minimum selling price} $\pi_s^{\min}$ in the seller's market.} In Fig.~\ref{fig:market}, we have $\pi_s^{\min} = \$20$.
 Similarly, the \emph{buyer's market} lists buyers' proposed prices and the corresponding amount of data demand at each price. \res{If a seller wants to sell some data quota immediately, he can choose to sell at the \emph{maximum buying price} $\pi_b^{\max}$.} In Fig.~\ref{fig:market}, we have $\pi_b^{\max} = \$16$.\footnote{\res{As evidenced in the real CMHK market, we assume that the quantity associated with the minimum selling price is large enough, such that a single buyer who wants to complete the trade immediately can simply consider a single price $\pi_s^{\min}$. Similar to the buying decision, we assume that the quantity associated with the maximum buying price is large enough such that a single seller who wants to sell his data immediately can simply consider a single price $\pi_b^{\max}$.}}
  
  \begin{figure}	
	\centering		
	\includegraphics[width=0.48\textwidth]{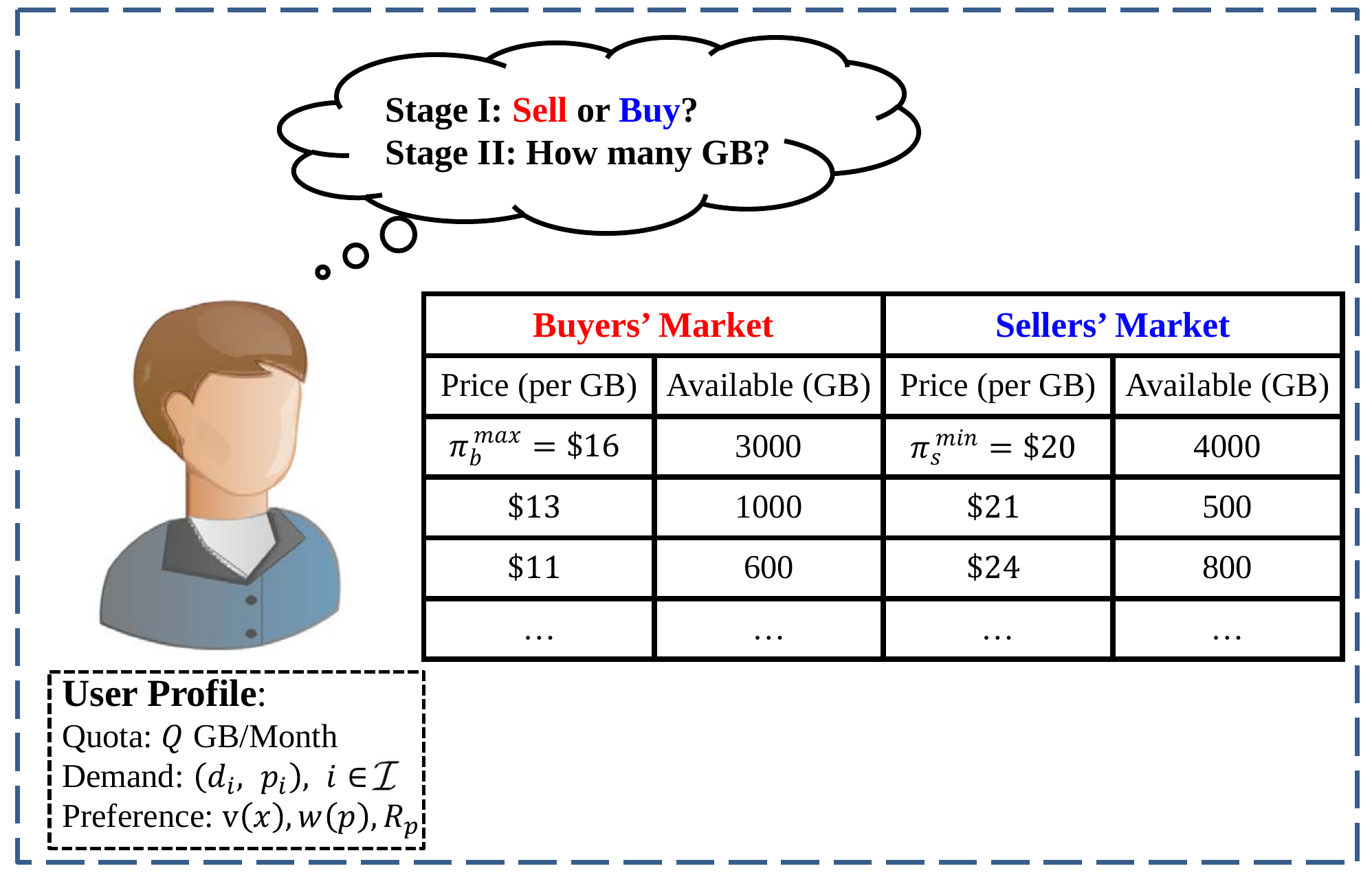}
	\caption{An example of the trading decision with the data trading platform.}
	\label{fig:market}
\end{figure}
Note that in Fig.~\ref{fig:market}, the maximum buying price ($\pi_b^{\max}=\$16$) is lower than the minimum selling price ($\pi_s^{\min}=\$20$). 
This is because those selling offers with prices less than $\$16$ have already been cleared by the market, so are those buying requests with prices higher than $\$20$. 

\subsection{User's Profile}\label{sec:III-B}

\emph{Remaining Data Quota:} \res{For the analytical model in Sections III and IV, we assume that the user makes the trading decision without considering potential future tradings in the same billing cycle.}\footnote{\res{We conducted a survey with over 50 CMHK users, and found that users' decisions are not fully rational and are usually myopic due to bounded rationality \cite{6172627}. More specifically, around 60\% of the users trade only once during one billing cycle.}} We use $Q$ to denote his remaining data quota at the time of decision. For example, if the user subscribes to a data plan of 5 GB per month and he has consumed 2 GB so far, then $Q=3$ GB for the remaining time of the billing cycle.

\emph{Demand Uncertainty:} The user has an uncertainty regarding his future data demand from now till the end of the billing cycle.  
We assume that his future data demand $d$ follows a discrete distribution over the set of $I$ possible values, $\{d_i$: $i \in \mathcal{I}=\{1,\ldots,I\}$, $d_1<\ldots<d_I\}$, with the corresponding probability mass function $\mathbb{P}(d = d_i) = p_i$ with $\sum_{i=1}^I p_i = 1$.\footnote{\res{Mathematically, when we choose the number of possible realizations $I$ to be large enough, the discrete distribution can well approximate a continuous distribution \cite{zhengp}. }} To avoid the trivial case, we assume that $d_1<Q$ and $d_I>Q$. We further define $\hat{\imath}$ as the index that $d_{\hat{\imath}}<Q$ and $d_{\hat{\imath}+1}\geq Q$. 


\emph{Satisfaction Loss:} \aug{The user's data plan has a two-part pricing tariff, where the user pays a fixed fee for the data consumption up to a monthly quota (5 GB in the previous example), and a linear high usage-based cost for any extra data consumption. Such a pricing model is widely used by major operators like AT\&T in US and CMHK in Hong Kong\cite{zhang2014time}. Specifically, the user needs to pay a price of $\kappa$ (\$/GB)\footnote{For example, for a 4G CMHK user, $\kappa=60$.} if the user's future data demand $d$ exceeds his remaining data quota $Q$. We define the \emph{satisfaction loss} of the user as the additional payment (which is a non-positive term) for exceeding the monthly quota:}
\begin{equation}\label{eq:ly}
     L(y)=\left\{
    \begin{aligned}
    &0,  &   &\text{ if }y\geq 0,\\
    &\kappa y, &   &\text{ if }y<0,\\
    \end{aligned}
    \right.
\end{equation}
\noindent where $y<0$ means that the quota is exceeded. Without data trading, $y=Q-d$.


\subsection{Risk Preferences}\label{sec:III-C}

To model the user's data trading problem under future data demand uncertainty, we consider the following three features of PT, namely  reference point $R_p$, s-shaped \emph{value function} $v(x)$, and \emph{probability distortion function} $w(p)$ \cite{kahneman_pt79, prelec1991decision}.

\subsubsection{Reference Point}
The reference point $R_p$ indicates the user's physiological target of the outcome. The user considers an outcome a \emph{gain} if it is higher than the reference point, and a \emph{loss} if it is lower than the reference point. A high reference point means that the user is more likely to treat an outcome as a loss, and a low reference point means that he is more likely to treat an outcome as a gain. 
This will significantly affect the user's subjective valuation of the outcome, as we will explain next. 

\subsubsection{S-shaped Asymmetrical Value Function}
\begin{figure}[t]
\setlength{\belowcaptionskip}{-5mm}
\begin{minipage}{0.48\linewidth}
  \centerline{\includegraphics[width=4cm]{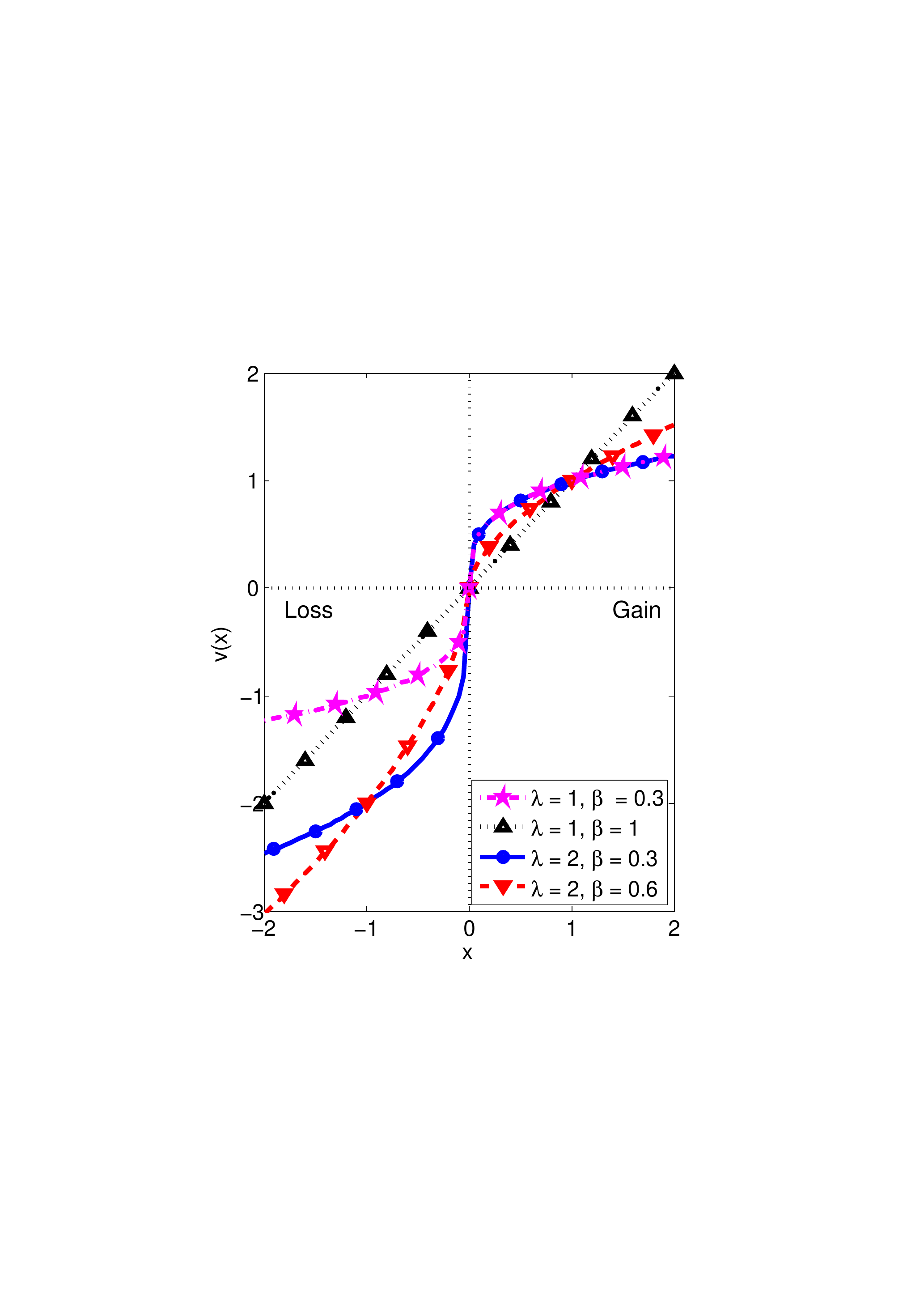}}
  \centerline{(a) $v(x)$}
\end{minipage}
\hfill
\begin{minipage}{.48\linewidth}
  \centerline{\includegraphics[width=4.5cm]{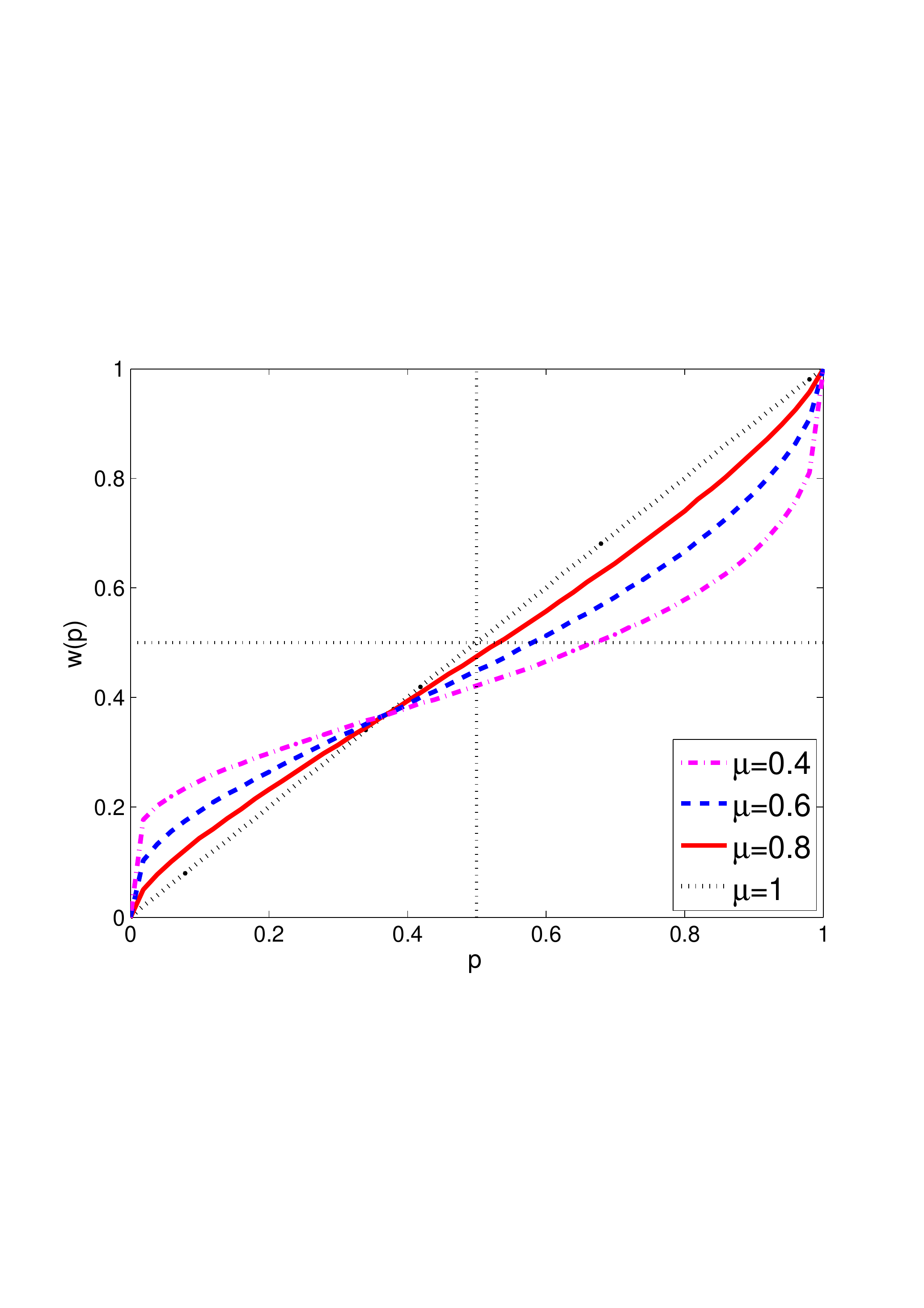}}
  \centerline{(b) $w(p)$}
\end{minipage}
\vfill
\caption{The s-shaped asymmetrical value function $v(x)$ and the probability distortion function $w(p)$ in PT.}
\label{fig:valuepdfun}
\end{figure}

Fig.~\ref{fig:valuepdfun}(a) illustrates the value function $v(x)$, which maps an objective outcome $x$ to the user's subjective valuation $v(x)$. Notice that all the outcomes are measured relatively to the reference point $R_p$, which is normalized to $x=0$ in the figure. \textcolor[rgb]{0,0,0}{Behavioral} studies show that the function $v(x)$ is s-shaped, which is concave in the gain region (i.e., $x>0$, when the outcome is larger than the reference point) and convex in the loss region (i.e., $x<0$, when the outcome is smaller than the reference point). Moreover, the impact of loss is larger than the gain, i.e., $|v(-x)|>v(x)$ for any $x>0$.\footnote{\aug{To better understand the s-shaped value function, consider the following two lottery settings. Lottery A1: 50\% to win \$200, and 50\% to win \$0; Lottery A2: 100\% to win \$100. Experimental results \cite{kahneman_pt79,tversky_ai92} showed that most people prefer Lottery A2 to A1. The result reflects that people are risk-averse in gains (i.e., $\beta<1$). Next we further consider another two lottery settings. Lottery B1: 50\% to win \$100, and 50\% to loss \$100; Lottery B2: 100\% to win \$0. Experimental results \cite{kahneman_pt79,tversky_ai92} showed that most people prefer Lottery B2 to B1. The result reflects that people are loss averse (i.e., $\lambda>1$).}} A commonly used value function in the PT literature is \cite{kahneman_pt79}
\begin{equation} \label{equ:valuefunction}
     v(x)=\left\{
    \begin{aligned}
    &x^\beta,  &   &\text{ if }x\geq 0,\\
    &-\lambda(-x)^\beta, &   &\text{ if }x<0,\\
    \end{aligned}
    \right.\\
\end{equation}
\noindent where $0<\beta\leq1$ and $\lambda\geq1$. \augg{Here $\beta$ is the \textbf{risk parameter}, where a \emph{smaller} $\beta$ means that the value function is more concave in the gain region, hence the user is more \emph{risk-averse in gains}. Meanwhile, a \emph{smaller} $\beta$ also means that the value function is more convex in the loss region, hence the user is more \emph{risk-seeking in losses}. Under a high reference point $R_p$, the user is more likely to encounter losses, hence a smaller $\beta$ means that the user is more \emph{risk-seeking dominant}. Under a low reference point $R_p$, however, a smaller $\beta$ means that the user is more \emph{risk-averse dominant}.} The valuation of the loss region is further characterized by the \textbf{loss penalty parameter} $\lambda$, where a \emph{larger} $\lambda$ indicates that the user is more \emph{loss averse}.

We note that the value function in EUT is a special case of PT, with the parameter choices $\lambda= \beta = 1$, and the value function becomes a linear function of $v(x) = x$. In this case, the choice of reference point only leads to a constant shift of the value function without affecting the user's decision. Without loss of generality, we will choose $R_p=0$ for the EUT case.

\subsubsection{Probability Distortion}
Fig.~\ref{fig:valuepdfun}(b) illustrates the probability distortion function $w(p)$, which captures humans' psychological over-weighting of low probability events and under-weighting of high probability events \cite{kahneman_pt79}. A commonly used probability distortion function is \cite{prelec1991decision}
\begin{equation}\label{equ:pdfunction}
	w(p) = \exp(-(-\ln p)^\mu), \;0<\mu\leq1,
\end{equation}
\noindent where $p$ is the \emph{objective} probability of an outcome and $w(p)$ is the corresponding \emph{subjective} probability. Here $\mu$ is the \textbf{probability distortion parameter}, which reveals how a person's subjective evaluation distorts the objective probability. A \emph{smaller} $\mu$ means a \emph{larger} distortion.\footnote{\aug{To better understand PT, consider the following lottery settings. Lottery A1: 1\% to win \$99, and 99\% to loss \$1; Lottery A2: 100\% to win \$0. Experimental results \cite{kahneman_pt79,tversky_ai92} showed that most people prefer Lottery A1 to A2. The result reflects that people will have a subjective probability distortion of small probability events (i.e., $\mu<1$). }}

When $\mu=1$, we have $w(p) = p$, which refers to the case of EUT without probability distortion. 

\subsection{Two-Stage Decision Problem}\label{sec:III-D}
Next we derive the user's expected utilities of being a buyer and a seller, respectively, with the remaining data quota $Q$ and a probability distribution of the future data demand $d$. 

\begin{figure}	
	\centering		
	\includegraphics[width=0.48\textwidth]{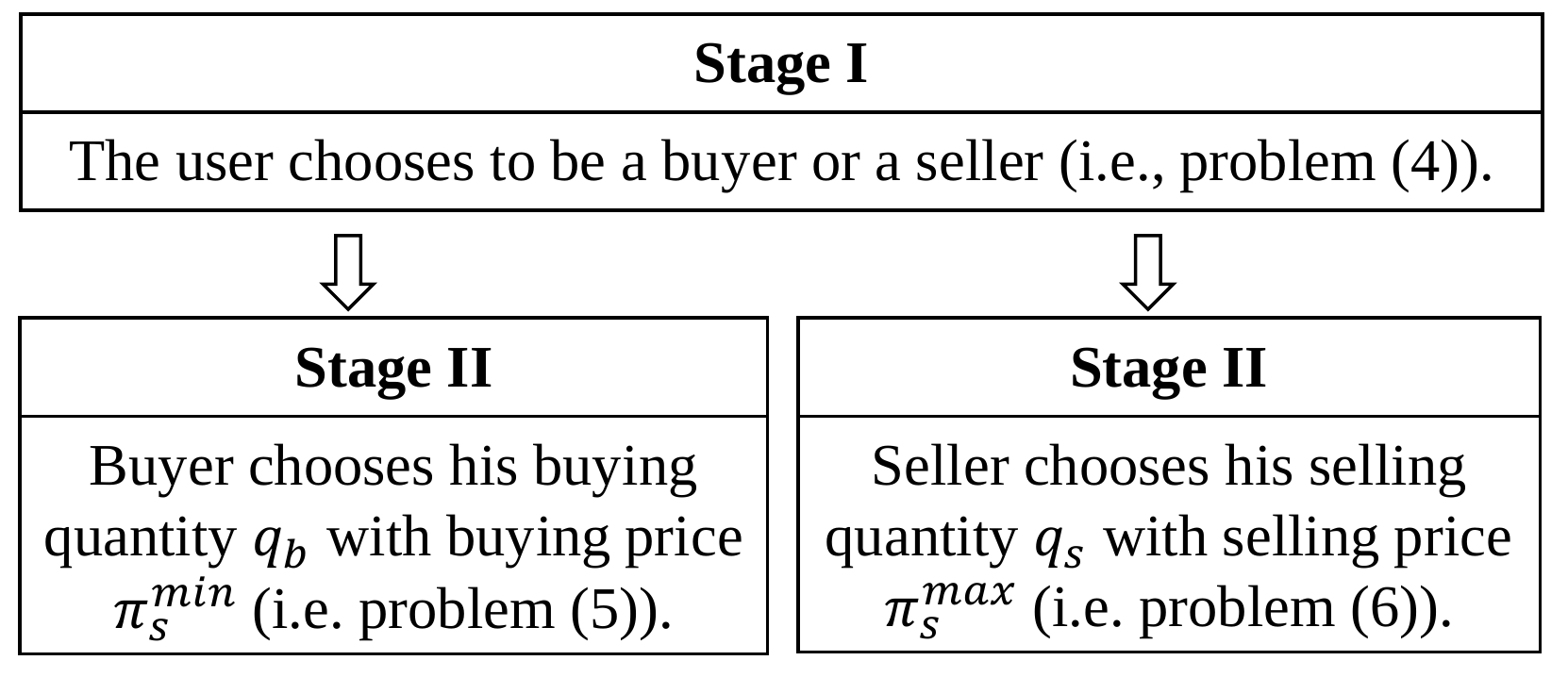}
	\caption{Two-Stage Optimization.}
	\label{fig:twostage}
\end{figure}
Fig.~\ref{fig:twostage} shows how each user makes the trading decision in two stages.\footnote{\res{The two-stage game model and the simultaneous seller-buyer decision model lead to the same result. We present it as a two-stage game for the ease of exposition.}} In Stage I, he decides whether to sell or to buy in the market. In Stage II, he decides the price and quantity as a seller or as a buyer, depending on his choice in Stage I.

\subsubsection{Stage I's Problem}
In Stage I, a user makes a decision $a\in \mathcal{A} = \{s,b\}$, where $s$ and $b$ correspond to being a seller and a buyer, respectively. 
We use $u(a)$ to denote the user' \emph{maximum utility} that can be achieved under the choice of $a$ (through the optimized decisions in  Stage II), as defined in (\ref{equ:stage2b}) and (\ref{equ:stage2s}). Then, the user's Stage I optimization problem is
\begin{align}
\max_{a\in\{s,b\}} \quad\quad &u(a). \label{equ:stage1}
\end{align}

\subsubsection{Stage II's Problem}

A buyer in Stage II needs to decide his buying quantity $q_b$, given the minimum selling price $\pi_s^{\min}$ as discussed in Section \ref{sec:III-A}. Thus, the buyer's problem is to maximize his expected utility:\footnote{For all the optimization problems discussed in this paper, we will consider the three features of PT as discussed in Section \ref{sec:III-C}, and EUT is a special case under proper parameter choices. We will not repeat this point later on.}
\begin{align}
u(b)=~\max_{q_b\geq0} ~ U(b,q_b)&=\sum_{i=1}^I w(p_{i})v(-\pi_s^{\min} q_b\notag\\&+L(Q+q_b-d_{i})-R_p),\label{equ:stage2b}
\end{align}
\noindent where $\pi_s^{\min} q_b$ is the cost for buying the data at the price $\pi_s^{\min}$, and $L(Q+q_b-d_{i})$ is the satisfaction loss after trading if the future data demand is $d_i$. 

\aug{On the other hand, a seller in Stage II needs to decide his selling quantity $q_s$, given the maximum buying price $\pi_b^{\max}$:}
\begin{align}
u(s)~=~\max_{q_s\geq 0} ~ U(s,q_s)&=\sum_{i=1}^I w(p_{i})v(\pi_b^{\max} q_s\notag\\
&+L(Q-q_s-d_{i})-R_p),\label{equ:stage2s}
\end{align}
\noindent where $\pi_b^{\max} q_s$ is the revenue obtained from selling the data at the price $\pi_b^{\max}$, and $L(Q-q_s-d_{i})$ is the satisfaction loss after trading if the future data demand is $d_i$.

In the next section, we will solve the user's two-stage optimal trading problems (\ref{equ:stage1}), (\ref{equ:stage2b}), and (\ref{equ:stage2s}) by backward induction.

%

\section{Solving The Two-stage Optimization Problem}\label{sec:solve}
In this section, we first derive the user's optimal selling or buying decision in Stage II. Then, we consider whether the user chooses to be a seller or a buyer in Stage I by comparing his maximum achievable utilities under both cases.

Problems (\ref{equ:stage2b}) and (\ref{equ:stage2s}) are challenging analytically due to the non-convexity of the s-shaped value function $v(x)$, especially under an arbitrary reference point. To obtain clear engineering insights, we focus on two choices of reference points in the following analysis: 
\begin{itemize}
\item \emph{High reference point} $R_p=0$: \aug{It reflects the user's expectation of \aug{observing} the lowest possible demand level $d_1$ hence having no excessive demand. }
\item \emph{Low reference point} $R_p=\kappa(Q-d_I)<0$: It reflects the user's expectation of observing the highest possible demand level $d_I$ and paying for the corresponding excessive demand (without trading).
\end{itemize} 

\augg{\res{The high reference point refers to the best case scenario without trading, while the low reference point refers to the worst case scenario without trading. Best case and worst case scenarios are widely used concepts in risk management \cite{artzner1999coherent}, and are frequently used as benchmarks for evaluating investment performances \cite{jin_bp08}.} For a particular given outcome, it is more likely to be considered as a gain under $R_p = \kappa(Q-d_I)$ than under $R_p=0$. }

\aug{To get around the non-convexity issue of problems (\ref{equ:stage2b}) and (\ref{equ:stage2s}), we partition the whole feasible range of the decision variable into several sub-intervals based on the piece-wise linearity of the satisfaction loss function $L(y)$ in (\ref{eq:ly}), such that the objective function in each sub-interval is either convex or unimodal. We then compute the unique optimal solution by confining the problem to each sub-interval, and finally identify the global optimum by comparing the optimal objective function values of all sub-intervals.}

\aug{In order to understand the impact of the risk parameters on the optimal trading decisions, we further consider a special case with binary possible demand $I=2$, in which case we are able to characterize the user's optimal decision in closed-form.}

\subsection{Stage II: Solving Buyer's Problem (\ref{equ:stage2b})}

\subsubsection{General Case of $I \geq 2$}
The way of solving problem (\ref{equ:stage2b}) will depend on the choice of reference point. \res{Under the high reference point $R_p=0$, we will partition the whole feasible range of $q_b$ into $I-\hat{\imath}+1$ sub-intervals based on $I$ possible realizations of $d_i$. We will show that $U(b,q_b)$ is convex in each sub-interval, which implies that the optimal $q_b^*$ for each sub-interval is one of the two boundary points.} \final{An example of $U(b,q_b)$ under $R_p=0$ is shown in Fig.~\ref{fig:examplesb}(a). In this example, we assume $I=3$, where $d_1<Q$ and $Q<d_2<d_3$, so that $\hat{\imath}=1$. As we can see from Fig.~\ref{fig:examplesb}(a), the feasible range of $q_b$ can be divided into three sub-intervals: $[0,d_2-Q]$, $[d_2-Q, d_3-Q]$, and $[d_3-Q,\infty)$. The function $U(b,q_b)$ is convex in each sub-interval, so that we can find the global optimal $q_b$ by comparing the function values at the boundary points of the sub-intervals (i.e., $U(b,0)$, $U(b,d_2-Q)$, and $U(b,d_3-Q)$).}

\res{Under the low reference point $R_p=\kappa(Q-d_I)$, we can show that $U(b,q_b)$ is a concave function in each sub-interval. Thus, the optimal $q_b^*$ for each sub-interval is either at one of the boundary points or at the critical point (where the first order derivative equals zero). As long as we obtain the optimal solution for each sub-interval, we can compute the globally optimal solution by comparing the $I-\hat{\imath}+1$ sub-intervals' optimal points.}

\begin{figure}[t]
\setlength{\belowcaptionskip}{-5mm}
\begin{minipage}{0.48\linewidth}
  \centerline{\includegraphics[width=4.7cm]{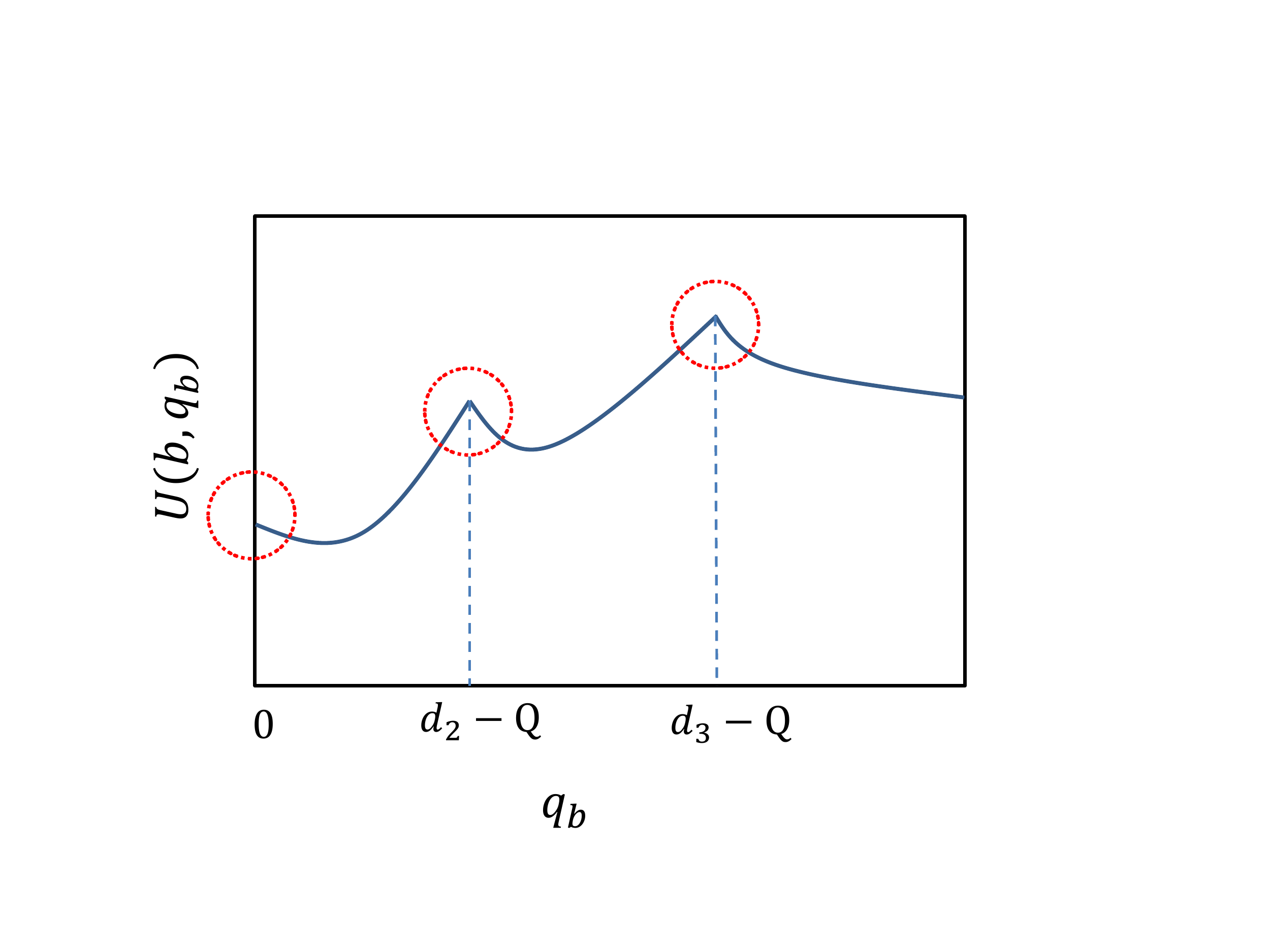}}
  \centerline{(a) $U(b,q_b)$}
\end{minipage}
\hfill
\begin{minipage}{.48\linewidth}
  \centerline{\includegraphics[width=4.7cm]{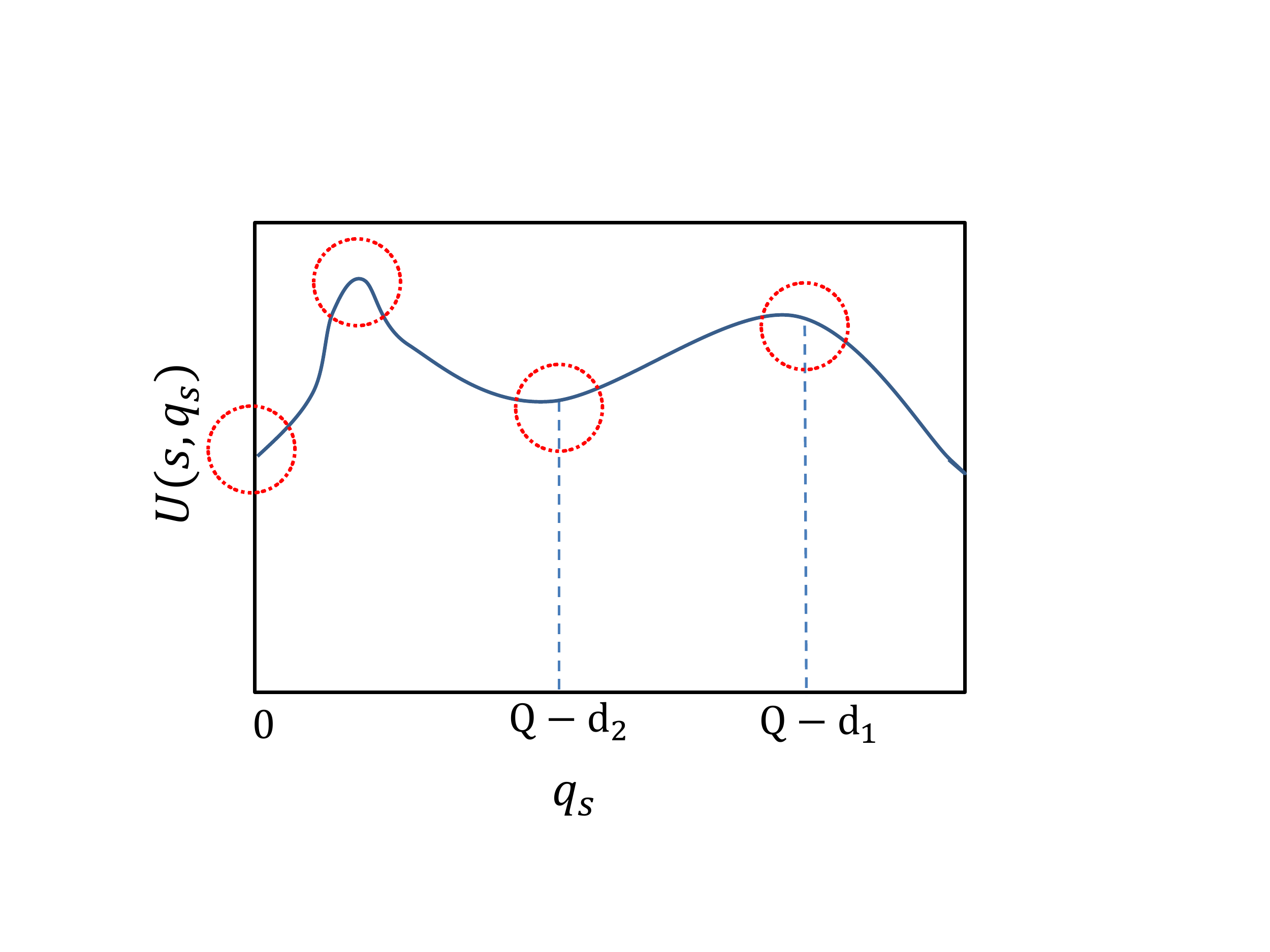}}
  \centerline{(b) $U(s,q_s)$}
\end{minipage}
\vfill
\caption{\final{Examples of (a) $U(b,q_b)$ in (\ref{equ:stage2b}) under $R_p=0$ and (b) $U(s,q_s)$ in (\ref{equ:stage2s}) under $R_p=0$.}}
\label{fig:examplesb}
\end{figure}
\auggg{Before introducing the following theorem, we first define
\begin{equation}
    \mathcal{X}_b=\{q_b:\frac{\partial U(b,q_b)}{\partial q_b}=0\text{ under } R_p=\kappa(Q-d_I)\},
\end{equation}
\noindent which is the set of critical points (the points at which the first order derivative of the user's utility equals zero). We can also prove that there are at most $I$ critical points in the whole feasible range (i.e., $|\mathcal{X}_b|\leq I$) in Appendix A.}
\begin{thm}\label{thm:buyerthm}
The buyer's optimal buying quantity by solving problem (\ref{equ:stage2b}) under the high reference point $R_p = 0$ is
\begin{equation}\label{thm:buyerh}
q_b^*=\arg\max_{q_b\in\{Q-d_i,i=\hat{\imath}+1,\ldots,I\}\cup\{0\}} \{U(b,q_b)\},
\end{equation}
\noindent and that under the low reference point $R_p=\kappa(Q-d_I)$ is
\begin{equation}\label{thm:buyerl}
q_b^*=\arg\max_{q_b\in\{Q-d_i,i=\hat{\imath}+1,\ldots,I\}\cup\mathcal{X}_b\cup\{0\}} \{U(b,q_b)\}.
\end{equation}
\end{thm}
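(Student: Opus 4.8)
The plan is to solve problem~(\ref{equ:stage2b}) by the partition-and-compare strategy foreshadowed in the text: break the feasible ray $q_b \geq 0$ into the sub-intervals determined by the breakpoints of the satisfaction loss $L(Q+q_b-d_i)$, establish a shape property (convexity under $R_p=0$, concavity under $R_p=\kappa(Q-d_I)$) of $U(b,q_b)$ on each piece, and then note that the optimum of a convex (resp.\ concave) function on an interval lies at an endpoint (resp.\ at an endpoint or an interior critical point). Finally I would collect all these candidate points across sub-intervals and take the best one, which is exactly what the $\arg\max$ in (\ref{thm:buyerh}) and (\ref{thm:buyerl}) records.

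First I would identify the breakpoints. Inside $L$, the argument $Q+q_b-d_i$ changes sign at $q_b = d_i - Q$; this is positive only for $i \geq \hat\imath+1$ (since $d_{\hat\imath} < Q \leq d_{\hat\imath+1}$), so the relevant breakpoints are $q_b = d_i - Q$ for $i = \hat\imath+1,\ldots,I$, partitioning $[0,\infty)$ into the $I-\hat\imath+1$ sub-intervals $[0, d_{\hat\imath+1}-Q], [d_{\hat\imath+1}-Q, d_{\hat\imath+2}-Q], \ldots, [d_I - Q, \infty)$. On a fixed sub-interval each $L(Q+q_b-d_i)$ is affine in $q_b$ (either identically $0$ or $\kappa(Q+q_b-d_i)$), so the whole bracket $g_i(q_b) := -\pi_s^{\min}q_b + L(Q+q_b-d_i) - R_p$ is affine with a known slope. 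Then $U(b,q_b) = \sum_i w(p_i)\, v(g_i(q_b))$ is a nonnegative combination of terms $v(\text{affine})$. Since $v$ is convex on $(-\infty,0]$ and concave on $[0,\infty)$, the sign of $g_i$ on the sub-interval controls whether $v(g_i(\cdot))$ is convex or concave there. The crux of the argument is therefore to show that under $R_p = 0$ every $g_i$ stays $\leq 0$ on every sub-interval (so each summand is convex, hence $U$ is convex and optimized at an endpoint, giving (\ref{thm:buyerh})), whereas under $R_p = \kappa(Q-d_I)$ every $g_i$ stays $\geq 0$ on every sub-interval (so each summand is concave, hence $U$ is concave and optimized at an endpoint or at a zero of $\partial U/\partial q_b$, giving (\ref{thm:buyerl})).

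To verify the sign claims: with $R_p=0$, for $q_b$ in a sub-interval with $q_b \leq d_i - Q$ we get $L(Q+q_b-d_i)=\kappa(Q+q_b-d_i) \leq 0$ and $-\pi_s^{\min}q_b \leq 0$, so $g_i \leq 0$; for $q_b \geq d_i - Q$ we get $L=0$ and $g_i = -\pi_s^{\min}q_b \leq 0$ — so indeed $g_i \leq 0$ throughout $[0,\infty)$, confirming convexity of $U(b,\cdot)$ on each sub-interval. With $R_p = \kappa(Q-d_I)$, since $d_i \leq d_I$ we have $-R_p = \kappa(d_I - Q) \geq \kappa(d_i - Q) \geq -L(Q+q_b-d_i) - 0$ after accounting for the $-\pi_s^{\min}q_b$ term; here I would spell out the two cases ($q_b \leq d_i-Q$ versus $q_b \geq d_i-Q$) and use $\pi_s^{\min}q_b$ being dominated appropriately — this monotonicity-in-$i$ bookkeeping is the one place requiring care, and it is the main obstacle, because the $-\pi_s^{\min}q_b$ term pushes $g_i$ down while the large $-R_p$ pushes it up, so one must check the worst case $q_b = d_I - Q$ on the last sub-interval and argue the slope condition keeps $g_i \geq 0$. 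Once the per-sub-interval shape is pinned down, the characterization of the within-sub-interval optimum is immediate from elementary convex analysis, and the boundary points of all sub-intervals are exactly $\{0\} \cup \{d_i - Q : i=\hat\imath+1,\ldots,I\}$ — note $q_b = d_i - Q = -(Q-d_i)$ appears as $Q - d_i$ with the sign as written since $d_i > Q$ there, matching the statement. The bound $|\mathcal{X}_b| \leq I$ (proved in Appendix~A, which I may cite) ensures (\ref{thm:buyerl}) involves only finitely many candidates, completing the proof by comparing $U(b,\cdot)$ over this finite candidate set.
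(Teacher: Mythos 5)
Your proposal is correct and follows essentially the same route as the paper's Appendix~A proof: partition $[0,\infty)$ at the breakpoints $d_i-Q$, $i=\hat\imath+1,\ldots,I$, show $U(b,\cdot)$ is convex on each sub-interval under $R_p=0$ and concave under $R_p=\kappa(Q-d_I)$ (the paper establishes this by explicit second-derivative computations, which is equivalent to your sign-of-the-affine-argument observation composed with the s-shaped $v$), and then compare the endpoint and critical-point candidates, with the last unbounded piece handled by monotone decrease. The sign check you deferred for the low reference point does go through on $[0,d_I-Q]$ since $\pi_s^{\min}<\kappa$ (an assumption the paper also uses implicitly), and as you noted, the set $\{Q-d_i\}$ in the theorem statement is a sign typo for $\{d_i-Q\}$.
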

The proof of Theorem \ref{thm:buyerthm} is given in Appendix A.

Next, we show the impact of the value function, probability distortion, and reference point in the special case of binary outcomes.

\subsubsection{Special Case of $I=2$}\label{sec:specialcase}
To better illustrate the impact of various parameters on the buyer's optimal decision, we next consider the buyer's optimization problem with $I=2$ possible demands. More specifically, there are two possible realizations of the future data demand: $d_1=d_l$ and $d_2=d_h$, with $0<d_l<Q<d_h$. The probability of observing a high demand $d_h$ is $p$, and the probability of observing low demand $d_l$ is $1-p$. 

\aug{We first define \textbf{buyer's threshold price} under different reference points. As we will show in Theorem \ref{thm:buyer2}, the optimal buying amount equals $d_h-Q$ when the minimum selling price $\pi_s^{\min}$ is below the buyer's threshold price: 
\begin{align}\label{equ:pib}
&\bar{\pi}_b^{EUT} \triangleq \kappa p, \quad\bar{\pi}_b^{PTh} \triangleq \kappa\left[\frac{w(p)}{w(p)+w(1-p)}\right]^{\frac{1}{\beta}}, \notag\\
&\bar{\pi}_b^{PTl} \triangleq \frac{\kappa w(p)}{w(1-p)+w(p)}.
\end{align}

\begin{thm}\label{thm:buyer2}
The buyer's optimal buying solution by solving problem (\ref{equ:stage2b}) under EUT is 
\begin{align}
\label{eq:table_buyereut}
        q_{b}^*=\left\{
        \begin{array}{ll}
        d_{h}-Q, 
        &\text{ if }\pi_s^{\min}<\bar{\pi}_b^{EUT} ,\\
        0, 
        &\text{ if }\pi_s^{\min}\geq \bar{\pi}_b^{EUT}.
        \end{array}
        \right.
\end{align}
\noindent His optimal buying solution by solving problem (\ref{equ:stage2b}) under PT with high reference point $R_p = 0$ is
\begin{align}
\label{eq:table_buyer}
        q_{b}^*=\left\{
        \begin{array}{ll}
        d_{h}-Q, 
        &\text{ if }\pi_s^{\min}<\bar{\pi}_b^{PTh},\\
        0, 
        &\text{ if }\pi_s^{\min}\geq \bar{\pi}_b^{PTh},
        \end{array}
        \right.
\end{align}
\noindent and that with low reference point $R_p=\kappa(Q-d_h)$ is
\begin{align}
\label{eq:table_buyerrp}
        &q_{b}^*  =  \notag\\
        &\left\{    
        \begin{array}{ll}
        d_{h}-Q,
        &\text{if }\pi_s^{\min} < \bar{\pi}_b^{PTl},\\
        0, 
        &\text{if }\pi_s^{\min} \geq \bar{\pi}_b^{PTl} ~\text{and}~ \beta=1,\\
        \frac{\kappa(Q-d_h)}{\left[ \frac{w( p )(\kappa - \pi_s^{\min})^{\beta}}{w( 1 - p )\pi_s^{\min}} \right]^{\frac{1}{\beta-1}}   +\pi_s^{\min}},
        &\text{if }\pi_s^{\min} \geq \bar{\pi}_b^{PTl}~
        \text{and} ~\beta <1.
        \end{array}
        \right.
\end{align}
\end{thm}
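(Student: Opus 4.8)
The plan is to specialize Theorem~\ref{thm:buyerthm} to $I=2$ (so $\hat{\imath}=1$) and convert its abstract characterization into closed form by substituting the power-law value function~(\ref{equ:valuefunction}) and the distortion~(\ref{equ:pdfunction}). With $d_1=d_l<Q<d_h=d_2$, the satisfaction loss has a single breakpoint at $q_b=d_h-Q$: for the low realization $L(Q+q_b-d_l)\equiv 0$, while for the high realization $L(Q+q_b-d_h)=\kappa(Q+q_b-d_h)$ on $[0,d_h-Q]$ and $L(Q+q_b-d_h)=0$ on $[d_h-Q,\infty)$. So I would write $U(b,q_b)$ explicitly on the two sub-intervals $[0,d_h-Q]$ and $[d_h-Q,\infty)$, in each case reading off whether the arguments of $v$ lie in the gain or the loss region, optimize piecewise, and then compare. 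I would work throughout in the non-trivial regime $\pi_s^{\min}<\kappa$; for $\pi_s^{\min}\ge\kappa$ buying is never profitable, so $q_b^*=0$, which is consistent with the statement since all three thresholds in~(\ref{equ:pib}) are strictly below $\kappa$.

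\emph{High reference point $R_p=0$.} On $[0,d_h-Q]$ both arguments of $v$ are non-positive, so $U(b,q_b)=-\lambda w(1-p)(\pi_s^{\min}q_b)^\beta-\lambda w(p)\bigl(\kappa(d_h-Q)-(\kappa-\pi_s^{\min})q_b\bigr)^\beta$, a sum of negated concave powers of nonnegative affine functions, hence convex; on $[d_h-Q,\infty)$ the demand is always met and $U(b,q_b)=-\lambda\bigl(w(p)+w(1-p)\bigr)(\pi_s^{\min}q_b)^\beta$ is strictly decreasing. Thus, as Theorem~\ref{thm:buyerthm} predicts, the only candidates are $q_b=0$ and $q_b=d_h-Q$, with $U(b,0)=-\lambda w(p)\bigl(\kappa(d_h-Q)\bigr)^\beta$ and $U(b,d_h-Q)=-\lambda\bigl(w(p)+w(1-p)\bigr)\bigl(\pi_s^{\min}(d_h-Q)\bigr)^\beta$. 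Cancelling the common negative factor $-\lambda(d_h-Q)^\beta$ (which reverses the inequality) shows $U(b,d_h-Q)\ge U(b,0)$ iff $(\pi_s^{\min})^\beta\le\frac{w(p)}{w(p)+w(1-p)}\kappa^\beta$, i.e. $\pi_s^{\min}\le\bar{\pi}_b^{PTh}$, which gives~(\ref{eq:table_buyer}). The EUT statement~(\ref{eq:table_buyereut}) is then the specialization $\beta=\lambda=1$, $w(p)=p$, under which $\bar{\pi}_b^{PTh}$ collapses to $\kappa p=\bar{\pi}_b^{EUT}$.

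\emph{Low reference point $R_p=\kappa(Q-d_h)$.} Here $-R_p=\kappa(d_h-Q)>0$, so on $[0,d_h-Q]$ the arguments become $x_1=\kappa(d_h-Q)-\pi_s^{\min}q_b>0$ and $x_2=(\kappa-\pi_s^{\min})q_b\ge 0$, both in the gain region (using $\pi_s^{\min}<\kappa$), hence $U(b,q_b)=w(1-p)x_1^\beta+w(p)x_2^\beta$, which is affine when $\beta=1$ and strictly concave when $\beta<1$; on $[d_h-Q,\infty)$ one has $x_1=x_2=\kappa(d_h-Q)-\pi_s^{\min}q_b$, so $U$ is strictly decreasing and contributes only the candidate $q_b=d_h-Q$. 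For $\beta<1$ I would solve $\partial U(b,q_b)/\partial q_b=0$ on $[0,d_h-Q]$, i.e. $w(p)(\kappa-\pi_s^{\min})^\beta q_b^{\beta-1}=w(1-p)\pi_s^{\min}\bigl(\kappa(d_h-Q)-\pi_s^{\min}q_b\bigr)^{\beta-1}$; dividing the two sides and raising to the power $1/(\beta-1)$ isolates $q_b$ and produces precisely the critical point written in~(\ref{eq:table_buyerrp}) (the unique element of $\mathcal{X}_b$ in this case). A short computation — clearing denominators and carefully tracking the inequality reversal caused by the negative exponent — then shows that this critical point lies in $[0,d_h-Q]$ exactly when $\pi_s^{\min}\ge\bar{\pi}_b^{PTl}$, and that $\partial U(b,q_b)/\partial q_b\to+\infty$ as $q_b\to 0^+$ while the sign of $\partial U(b,q_b)/\partial q_b$ at $q_b=d_h-Q$ flips at $\pi_s^{\min}=\bar{\pi}_b^{PTl}$. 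Combining these: if $\pi_s^{\min}<\bar{\pi}_b^{PTl}$ then $U$ is increasing on $[0,d_h-Q]$ and decreasing afterwards, so $q_b^*=d_h-Q$; if $\pi_s^{\min}\ge\bar{\pi}_b^{PTl}$ and $\beta<1$ the concave $U$ attains its global maximum at the interior critical point; and if $\pi_s^{\min}\ge\bar{\pi}_b^{PTl}$ and $\beta=1$ the affine $U$ has nonpositive slope on $[0,d_h-Q]$, so $q_b^*=0$. This is exactly~(\ref{eq:table_buyerrp}).

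\emph{Main obstacle.} The convexity/concavity bookkeeping and the threshold algebra under $R_p=0$ are routine. The delicate step is the low-reference-point case with $\beta<1$: one has to verify that the critical point obtained by formal differentiation is a genuine real number sitting in the interior of the correct sub-interval, and that the global comparison across both sub-intervals is governed by the single threshold $\bar{\pi}_b^{PTl}$. This hinges on reversing the inequality correctly when raising to negative powers, and on checking the boundary/degenerate cases ($\pi_s^{\min}=\bar{\pi}_b^{PTl}$, the limit $\beta\to 1$, and $\pi_s^{\min}\ge\kappa$) so that the three branches of~(\ref{eq:table_buyerrp}) glue together consistently.
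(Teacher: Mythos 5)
Your proposal is correct and follows essentially the same route as the paper's Appendix B: split the feasible range at $q_b=d_h-Q$, use convexity under $R_p=0$ to reduce to a comparison of the boundary values (yielding $\bar{\pi}_b^{PTh}$), use concavity plus the first-order condition under $R_p=\kappa(Q-d_h)$ with the $\beta=1$ and $\beta<1$ sub-cases, note that $U$ is decreasing on $[d_h-Q,\infty)$, and recover EUT by setting $\beta=\lambda=\mu=1$. One minor remark: your first-order condition actually yields the interior point with numerator $\kappa(d_h-Q)>0$, so the $\kappa(Q-d_h)$ printed in (\ref{eq:table_buyerrp}) (and in the paper's appendix) is a sign typo rather than something your derivation should literally reproduce.
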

}
Theorem \ref{thm:buyer2} is a special case of Theorem \ref{thm:buyerthm}, and the proof is given in Appendix B. The result in (\ref{eq:table_buyereut}) follows directly from (\ref{eq:table_buyer}) and (\ref{eq:table_buyerrp}) by setting $\beta = \mu = \lambda = 1$.

In (\ref{eq:table_buyereut}) and (\ref{eq:table_buyer}), we observe that the optimal buying quantity is discontinuous at the buyer's threshold price. This is due to the linearity of utility function in the EUT case and the convexity of utility function in the PT case with $R_p=0$.  Details are given in Appendix B.
%
 
\aug{From Theorem \ref{thm:buyer2}, we have the following observations on the impact of reference point when we fix the probability distortion parameter $\mu=1$ (hence removing the impact of probability distortion).}

\begin{obs}
 (PT vs EUT under the high reference point) When $\mu=1$ and $R_p=0$, we have $  \bar{\pi}_b^{PTh} < \bar{\pi}_b^{EUT}$. This means that under a high reference point, a PT buyer is less willing to purchase mobile data than an EUT buyer.
\end{obs}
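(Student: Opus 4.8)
The plan is to prove the inequality $\bar{\pi}_b^{PTh} < \bar{\pi}_b^{EUT}$ directly from the closed-form expressions in (\ref{equ:pib}), specializing to $\mu = 1$. When $\mu = 1$, the probability distortion function satisfies $w(p) = p$ and $w(1-p) = 1-p$, so that $\bar{\pi}_b^{EUT} = \kappa p$ and
\begin{equation*}
\bar{\pi}_b^{PTh} = \kappa\left[\frac{p}{p + (1-p)}\right]^{\frac{1}{\beta}} = \kappa\, p^{\frac{1}{\beta}}.
\end{equation*}
Thus the claim reduces to showing $\kappa\, p^{1/\beta} < \kappa\, p$, i.e., $p^{1/\beta} < p$, for the relevant range of parameters.

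The key step is then a monotonicity argument for the map $\beta \mapsto p^{1/\beta}$ on the exponent. Since $0 < p < 1$ (as $d_h > Q > d_l$ forces the high-demand probability to be strictly between $0$ and $1$ in the nontrivial case) and $0 < \beta \leq 1$, we have $1/\beta \geq 1$. Because the base $p$ lies in $(0,1)$, raising it to a larger exponent makes it smaller: $1/\beta \geq 1$ implies $p^{1/\beta} \leq p^1 = p$, with strict inequality whenever $1/\beta > 1$, i.e., whenever $\beta < 1$. I would make this precise by noting that $\ln p < 0$, so $g(t) = p^t = e^{t \ln p}$ is strictly decreasing in $t$; applying this with $t = 1/\beta$ versus $t = 1$ gives the result. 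One should remark on the boundary case $\beta = 1$: there $\bar{\pi}_b^{PTh} = \kappa p = \bar{\pi}_b^{EUT}$, consistent with PT reducing to EUT, so the strict inequality holds precisely when the user exhibits genuine risk sensitivity ($\beta < 1$); I would state the observation as it is meant, for $\beta < 1$.

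I do not anticipate a serious obstacle here — the statement is essentially an elementary inequality once the $\mu = 1$ simplification is applied. The only point requiring a small amount of care is confirming $p \in (0,1)$ strictly, which follows from the standing assumption $d_1 < Q < d_I$ specialized to $I = 2$ (so $d_l < Q < d_h$), guaranteeing both outcomes occur with positive probability; without this the quantities in (\ref{equ:pib}) would not even be well-defined in the stated form. I would also connect the algebraic fact back to the economic interpretation: a lower threshold price $\bar{\pi}_b^{PTh}$ means the set of selling prices $\pi_s^{\min}$ for which the buyer is willing to purchase $d_h - Q$ units shrinks, so the risk-averse-in-gains PT buyer (concave value function in the gain region, since under $R_p = 0$ all trading outcomes relative to the reference are evaluated partly as gains) is more reluctant to buy than the risk-neutral EUT buyer — which is exactly the content of the observation.
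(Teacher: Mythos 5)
Your proposal is correct and follows essentially the same route the paper intends: the observation is a direct consequence of the threshold formulas in (\ref{equ:pib}), since setting $\mu=1$ gives $\bar{\pi}_b^{PTh}=\kappa p^{1/\beta}$ and $\bar{\pi}_b^{EUT}=\kappa p$, and $p^{1/\beta}<p$ for $p\in(0,1)$ and $\beta<1$. Your added care about the boundary case $\beta=1$ (where the two thresholds coincide, consistent with PT reducing to EUT) is a reasonable clarification of the implicit assumption $\beta<1$ in the paper's statement.
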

\begin{obs} 
 (PT vs EUT under the low reference point) When $\mu=1$ and $R_p=\kappa(Q-d_h)$, we have $  \bar{\pi}_b^{PTl} = \bar{\pi}_b^{EUT}$. However, the optimal buying quantity $q_b^*$ of the PT buyer in (\ref{eq:table_buyerrp}) is no smaller than that of the EUT buyer in (\ref{eq:table_buyereut}) under the same price $\pi_s^{\min}$. This means that under a low reference point, a PT buyer is more willing to purchase mobile data than an EUT buyer.
\end{obs}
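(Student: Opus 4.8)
The plan is to read off Observation~2 directly from Theorem~\ref{thm:buyer2}. Two statements need to be checked: \emph{(i)} the two threshold prices coincide, $\bar{\pi}_b^{PTl}=\bar{\pi}_b^{EUT}$, whenever $\mu=1$; and \emph{(ii)} the PT buyer's optimal buying quantity is pointwise no smaller than the EUT buyer's, i.e.\ $q_b^{*,\mathrm{PT}}\ge q_b^{*,\mathrm{EUT}}$ for every posted price $\pi_s^{\min}$. Statements \emph{(i)} and \emph{(ii)} together yield the asserted conclusion that, under a low reference point, a PT buyer is (weakly) more willing to buy mobile data.

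For \emph{(i)}, I would simply observe that $\mu=1$ reduces the distortion function (\ref{equ:pdfunction}) to the identity, so $w(p)=p$ and $w(1-p)=1-p$; substituting these into the definition of $\bar{\pi}_b^{PTl}$ in (\ref{equ:pib}) gives $\bar{\pi}_b^{PTl}=\kappa p/\big((1-p)+p\big)=\kappa p=\bar{\pi}_b^{EUT}$, a one-line computation. For \emph{(ii)}, write $\bar{\pi}\triangleq\bar{\pi}_b^{EUT}=\bar{\pi}_b^{PTl}$ and compare the closed forms (\ref{eq:table_buyereut}) and (\ref{eq:table_buyerrp}) according to the position of $\pi_s^{\min}$ relative to $\bar{\pi}$. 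If $\pi_s^{\min}<\bar{\pi}$, both the EUT buyer and the PT buyer choose $q_b^*=d_h-Q$, so the quantities are equal. If $\pi_s^{\min}\ge\bar{\pi}$ and $\beta=1$, both choose $q_b^*=0$, again equal. The only case carrying content is $\pi_s^{\min}\ge\bar{\pi}$ together with $\beta<1$: there the EUT buyer buys nothing, while the PT buyer buys the interior critical point appearing in the third line of (\ref{eq:table_buyerrp}), so it suffices to verify that this critical point is strictly positive.

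To do that I would reuse the sub-interval analysis behind Theorem~\ref{thm:buyer2}. On $q_b\in[0,d_h-Q]$ both arguments of the value function are nonnegative, so with $\mu=1$ the utility is $U(b,q_b)=w(p)\big((\kappa-\pi_s^{\min})q_b\big)^\beta+w(1-p)\big(\kappa(d_h-Q)-\pi_s^{\min}q_b\big)^\beta$, which is concave in $q_b$. Setting $\partial U/\partial q_b=0$ and solving gives the interior stationary point as $\kappa(d_h-Q)$ divided by $\big[w(p)(\kappa-\pi_s^{\min})^\beta/(w(1-p)\pi_s^{\min})\big]^{1/(\beta-1)}+\pi_s^{\min}$, i.e.\ the third line of (\ref{eq:table_buyerrp}). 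Since $d_h>Q$ and $0<\pi_s^{\min}<\kappa$, the numerator is strictly positive and the denominator is a sum of a positive power term and a positive price, hence strictly positive; therefore $q_b^{*,\mathrm{PT}}>0=q_b^{*,\mathrm{EUT}}$ in this case. Combined with the two equality cases, this proves \emph{(ii)}, and \emph{(i)} together with \emph{(ii)} give Observation~2.

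The only thing requiring care --- and it is bookkeeping rather than analysis --- is to confirm that the interior critical point genuinely lies in the sub-interval $[0,d_h-Q]$ on which the piecewise form of $L(\cdot)$ used to derive it is the correct one, and that $0<\pi_s^{\min}<\kappa$ may be assumed without loss (if $\pi_s^{\min}\ge\kappa$, buying data at a price above the overage rate is never optimal, so $q_b^*=0$ and the inequality holds trivially). Both points are already settled inside the proof of Theorem~\ref{thm:buyer2}, so Observation~2 follows by assembling them; the genuinely new step is the positivity check in the $\beta<1$ regime above.
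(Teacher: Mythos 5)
Your proof is correct and follows essentially the paper's own route: Observation 2 is read off Theorem~\ref{thm:buyer2} by evaluating $\bar{\pi}_b^{PTl}$ at $w(p)=p$ and comparing (\ref{eq:table_buyereut}) with (\ref{eq:table_buyerrp}) case by case, the only substantive step being the strict positivity of the interior critical point when $\pi_s^{\min}\geq\bar{\pi}_b^{PTl}$ and $\beta<1$, exactly as established in the Appendix~B analysis of the sub-interval $[0,d_h-Q]$. One remark: your re-derived stationary point has numerator $\kappa(d_h-Q)>0$, which is the form consistent with the first-order condition (the printed numerator $\kappa(Q-d_h)$ in the third line of (\ref{eq:table_buyerrp}) carries a sign typo), so your positivity check is the right one.
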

\textbf{Notice that buying data reduces the risk that the future data demand exceeds the quota.} \aug{As we have mentioned in Section III-C, a smaller $\beta$ means that the buyer is more \emph{risk-seeking in losses} and more \emph{risk-averse in gains}. Under a high expectation (e.g., $R_p=0$), the buyer with a smaller $\beta$ (in the PT case) is more risk-seeking dominant and will not buy data. Under a low expectation (e.g., $R_p=\kappa(Q-d_h)<0$), the buyer is more risk-averse dominant, and will buy an amount equal to $d_h - Q$, which will completely eliminate the risk that the future data demand exceeds the updated quota $d_h$.  }

\subsection{Stage II: Solving Seller's Problem (\ref{equ:stage2s})}
\subsubsection{General Case of $I \geq 2$}
To solve problem (\ref{equ:stage2s}) under both $R_p = 0$ and $R_p = \kappa(Q-d_I)$, we partition the whole interval of $q_s$ into $\hat{\imath}+1$ sub-intervals. We show that $U(s,q_s)$ has a special \emph{unimodal} structure in each sub-interval. Since the first
order derivative of a unimodal function will cross zero at most once in each sub-interval, and thus the optimal $q_s^*$ for each sub-interval is either at one of the boundary points or at the critical point (where the first order derivative equals zero). Then, by comparing the $\hat{\imath}+1$ optimal points, we can find the global optimal solution. \final{An example of $U(s,q_s)$ under $R_p=0$ is shown in Fig.~\ref{fig:examplesb}(b). In this example, we assume $I=3$, where $d_1<d_2<Q$ and $Q<d_3$, so that $\hat{\imath}=2$. As we can see from Fig.~\ref{fig:examplesb}(b), the feasible range of $q_s$ can be divided into three sub-intervals: $[0,Q-d_2]$, $[Q-d_2, Q-d_1]$, and $[Q-d_1,\infty)$. The function $U(s,q_s)$ is unimodal in each sub-interval, so that we can find the global optimal $q_s^*$ by comparing the boundary function values of the sub-intervals  (i.e., $U(s,0)$, $U(s,Q-d_2)$, and $U(s,Q-d_1)$) and the function values of critical points if they exist (i.e., $U(s,q_s), q_s\in\{q_s:\frac{\partial U(s,q_s)}{\partial q_s}=0\text{ under }  R_p=0\}$).}

\auggg{Before introducing the following theorem, we first define
\begin{align}
 &\mathcal{X}_{sh} = \{q_s: \frac{\partial U(s,q_s)}{\partial q_s}=0\text{ under }  R_p=0\} \notag\\
 &\text{and   }\notag\\
 &\mathcal{X}_{sl} = \{q_s: \frac{\partial U(s,q_s)}{\partial q_s}=0\text{ under }  R_p=\kappa(Q-d_I)\},
\end{align}
\noindent which are the sets of critical points (the points at which the first order derivatives of the user's utility equal zero). We can also prove that there are at most $I$ critical points in the whole feasible range under both $R_p=0$ and $R_p=\kappa(Q-d_I)$ (i.e., $|\mathcal{X}_{sh}|\leq I$ and $|\mathcal{X}_{sh}|\leq I$) in Appendix C.}

\begin{thm}\label{thm:sellerthm}
The seller's optimal selling quantity $q_s^*$ of problem (\ref{equ:stage2s}) under PT with the high reference point $R_p = 0$ is
\begin{equation} \label{thm:sellerh}
q_s^*=\arg\max_{q_s\in\{Q-d_i,i=1,\ldots,\hat{\imath}\}\cup\mathcal{X}_{sh}\cup\{0\}} \{U(s,q_s)\},
\end{equation}
\noindent and that with the low reference point $R_p = \kappa(Q-d_I)$ is
\begin{equation}\label{thm:sellerl}
q_s^*=\arg\max_{q_s\in\{Q-d_i,i=1,\ldots,\hat{\imath}\}\cup\mathcal{X}_{sl}\cup\{0\}} \{U(s,q_s)\}.
\end{equation}
\end{thm}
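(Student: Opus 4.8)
The plan is to follow the same template as the proof of Theorem~\ref{thm:buyerthm} in Appendix~A, adapted to the seller's utility $U(s,q_s)$ in (\ref{equ:stage2s}). First I would partition the feasible set $[0,\infty)$ into the $\hat{\imath}+1$ sub-intervals delimited by $0$, the breakpoints $Q-d_{\hat{\imath}}<Q-d_{\hat{\imath}-1}<\cdots<Q-d_1$, and $+\infty$. These breakpoints are exactly the values of $q_s$ at which some satisfaction-loss term $L(Q-q_s-d_i)$ in (\ref{eq:ly}) switches between its two linear pieces (for $i>\hat{\imath}$ we have $Q-d_i\leq 0$, so $L(Q-q_s-d_i)$ is already linear on all of $[0,\infty)$ and contributes no breakpoint). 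Consequently, on each sub-interval every term $L(Q-q_s-d_i)$ is affine in $q_s$, so each summand of $U(s,q_s)$ is $w(p_i)$ times the value function $v$ evaluated at an affine function $z_i(q_s)$ of $q_s$.

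Next I would establish that $U(s,q_s)$ is unimodal on each sub-interval. The slope of $z_i$ equals $\pi_b^{\max}>0$ when the loss term for demand $d_i$ is inactive ($q_s<Q-d_i$) and equals $\pi_b^{\max}-\kappa<0$ when it is active ($q_s>Q-d_i$), using $\kappa>\pi_b^{\max}$. Thus on a given sub-interval the summands split into an ``inactive'' group whose arguments increase in $q_s$ and, under $R_p=0$, stay in the concave gain region of $v$, contributing a strictly concave and increasing function of $q_s$; and an ``active'' group whose arguments decrease in $q_s$, starting in the gain region and possibly crossing once into the convex loss region, so that each such summand is concave-then-convex but globally decreasing in $q_s$. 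Writing $U(s,q_s)$ as this strictly-concave-increasing part plus a decreasing part, I would then analyze $\partial U(s,q_s)/\partial q_s$ and show it changes sign at most once, from positive to negative, so that $U(s,q_s)$ is unimodal on the sub-interval. The low reference point $R_p=\kappa(Q-d_I)$ is handled the same way: shifting the reference point only relabels which arguments lie in the gain versus loss region and does not change the monotonicity of the $z_i$ or the slope signs, so the same unimodality argument applies, yielding $\mathcal{X}_{sl}$ in place of $\mathcal{X}_{sh}$.

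Finally I would collect the candidates. A continuous unimodal function on a closed sub-interval attains its maximum either at an endpoint or at an interior stationary point; on the unbounded rightmost sub-interval $[Q-d_1,\infty)$ all demand terms are active, so $U(s,q_s)\to-\infty$ and the maximum is again at the left endpoint or an interior stationary point. The endpoints over all sub-intervals are precisely $\{0\}\cup\{Q-d_i:i=1,\ldots,\hat{\imath}\}$, and the interior stationary points are by definition the elements of $\mathcal{X}_{sh}$ (respectively $\mathcal{X}_{sl}$); taking the best of these yields (\ref{thm:sellerh}) and (\ref{thm:sellerl}). Since unimodality gives at most one stationary point per sub-interval and there are $\hat{\imath}+1\leq I$ sub-intervals, we also get $|\mathcal{X}_{sh}|\leq I$ and $|\mathcal{X}_{sl}|\leq I$.

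I expect the unimodality step to be the main obstacle. The difficulty is that inside a single sub-interval the arguments of $v$ can still move between the concave gain branch and the convex loss branch, and the distortion weights $w(p_i)$ (with $w$ neither convex nor concave) are already folded into the constants, so one cannot simply invoke convexity or concavity of the full objective as in the buyer's high-reference-point case. The cleanest resolution is to prove that each active-group summand, although concave-then-convex, has a sign-definite (nonpositive) derivative, hence is monotone decreasing, and then argue that adding a single strictly concave increasing function cannot create a second local maximum. Care is also needed at the branch-crossing points of each $z_i$: when $\beta<1$ the value function has infinite slope at $0$, so at those points the first-order argument must be replaced by a one-sided monotonicity statement rather than a derivative computation.
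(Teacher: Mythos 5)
Your partition at $0$ and $Q-d_{\hat{\imath}}<\cdots<Q-d_1$, the affine-per-sub-interval structure, and the final collection of endpoints and stationary points all match the paper, but the step you yourself flag as the main obstacle is where the argument genuinely breaks: under $R_p=0$ the seller's objective $U(s,q_s)$ is in general \emph{not} unimodal on a sub-interval $[Q-d_j,Q-d_{j-1}]$, and the principle you invoke to get it --- that a strictly concave increasing function plus a decreasing function cannot have two local maxima --- is simply false. The mechanism is exactly the branch crossing you mention only as a technicality: an active summand with $d_i\le Q$ switches from gain to loss at $q_s=\kappa(Q-d_i)/(\kappa-\pi_b^{\max})$, which can lie strictly inside a sub-interval, and for $\beta<1$ its derivative tends to $-\infty$ on both sides of that point while all other summands have bounded slopes there. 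Hence $U(s,\cdot)$ can rise (the inactive, gain-region part dominating), dip sharply around the switch point, and rise again, producing two local maxima and two stationary points inside a single sub-interval; a concrete instance is $I=3$, $d_1<d_2<Q<d_3$, $\kappa\gg\pi_b^{\max}$, $p_1$ close to one and $p_2,p_3$ small, on the sub-interval $[Q-d_2,Q-d_1]$. This also invalidates your closing claim that unimodality yields $|\mathcal{X}_{sh}|\le I$ with at most one stationary point per sub-interval.

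The paper does not claim unimodality at this coarse level: for $R_p=0$ it subdivides each $[Q-d_j,Q-d_{j-1}]$ further at the points $\kappa(Q-d_i)/(\kappa-\pi_b^{\max})$ (the six orderings (a)--(f) in Appendix C), factors $\partial U(s,q_s)/\partial q_s$ as a positive quantity times a bracket that is monotone in $q_s$ on each finer piece, and concludes at most one zero crossing per piece; the extra interior boundaries never enter the theorem's candidate set because the one-sided slopes there are unboundedly negative, so such points cannot be maximizers and $\{Q-d_i,\,i=1,\ldots,\hat{\imath}\}\cup\mathcal{X}_{sh}\cup\{0\}$ still suffices. Your write-up could be repaired without unimodality by arguing directly that (i) any interior maximizer at which $U$ is differentiable satisfies the first-order condition and hence lies in $\mathcal{X}_{sh}$ (resp.\ $\mathcal{X}_{sl}$ for the low reference point), and (ii) the switch points, the only interior points of non-differentiability, are not local maxima because the switching summand's one-sided derivatives diverge to $-\infty$ when $\beta<1$ (the boundary case $\beta=1$, where the kink slopes stay finite, needs separate care). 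As written, however, the per-sub-interval unimodality lemma is a genuine gap, not merely a detail to be polished.
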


The proof of Theorem \ref{thm:sellerthm} is given in Appendix B. Next, we show the impact of the value function, probability distortion, and reference point for the special case of binary outcomes.

\subsubsection{Special Case of $I=2$}
To better illustrate the insights, we next consider the seller's optimization problem with $I = 2$ possible demands. 

We first define \textbf{seller's threshold price} under different risk preferences. As we will show in Theorem \ref{thm:seller2}, the optimal selling amount equals $Q-d_l$ when the maximum buying price $\pi_b^{\max}$ is above the seller's threshold price. The seller's threshold prices $\bar{\pi}_s^{EUT}$, ${\bar{\pi}_s^{PTh}}$, and $\bar{\pi}_s^{PTl}$ are the unique\footnote{The proof of the uniqueness is in Appendix D.} solutions of the following three equations:
\begin{equation}\label{equ:piseut}
\bar{\pi}_s^{EUT} = \kappa p ,
\end{equation}
\begin{equation}\label{equ:pispth}
\frac{\lambda(\kappa -\bar{\pi}_s^{PTh} )^\beta w(p)}{({\bar{\pi}_s^{PTh}})^\beta w( 1 - p )} \left(  1 + \frac{\kappa (d_{h} - Q)}{( \kappa  - \bar{\pi}_s^{PTh} )(Q - d_{l})} \right)^{\beta - 1}    = 1 \\,
\end{equation}
\begin{align}\label{equ:pisptl}
&w(1 - p)\{[(\bar{\pi}_s^{PTl} - \kappa)Q+\kappa d_h-\bar{\pi}_s^{PTl} d_l]^{\beta} - [\kappa(d_h - Q)]^{\beta}\}\notag\\
&=\lambda w(p)[(\kappa-\bar{\pi}_s^{PTl})(Q-d_l)]^{\beta}.
\end{align}
\begin{thm}\label{thm:seller2}
The seller's optimal selling quantity in problem (\ref{equ:stage2s}) under EUT is 
\begin{align}
\label{eq:table_sellereut}
        q_s^*=\left\{
        \begin{array}{ll}
        Q-d_l, 
        &\text{ if }\pi_b^{\max}>\bar{\pi}_s^{EUT},\\
        0, 
        &\text{ if }\pi_b^{\max}\leq \bar{\pi}_s^{EUT}.
        \end{array}
        \right.
\end{align}
\noindent His optimal selling quantity in problem (\ref{equ:stage2s}) under PT with the high reference point $R_p = 0$ is
\begin{align}
\label{eq:table_seller}
        &q_s^*  =  \notag\\
        &\left\{   
        \begin{array}{ll}
        Q-d_l, 
        &\text{if }\pi_b^{\max} > \bar{\pi}_s^{PTh},\\
        0, 
        &\text{if }\pi_b^{\max} \leq \bar{\pi}_s^{PTh} ~\text{and} ~\beta =1,\\
         \frac{\frac{\kappa }{\kappa -\pi_b^{\max}}\left(d_{h} - Q\right)}{\left(  \frac{w(1 - p){\pi_b^{\max}}^{\beta}}{w(p)\lambda(\kappa  - \pi_b^{\max})^{\beta}}  \right)^{\frac{1}{\beta-1}}   - 1}, 
        &\text{if }\pi_b^{\max} \leq \bar{\pi}_s^{PTh}~ \text{and}~\beta<1,
        \end{array}
        \right.
\end{align}
\noindent and that with the low reference point $R_p=\kappa(Q-d_h)$ is
\begin{align}
\label{eq:table_sellerrp}
        q_s^* = \left\{
        \begin{array}{ll}
         Q - d_l, 
        &\text{if }\pi_b^{\max}>\bar{\pi}_s^{PTh},\\
         0, 
        &\text{if }\pi_b^{\max}\leq\bar{\pi}_s^{PTh}.
        \end{array}
        \right.
\end{align}
\end{thm}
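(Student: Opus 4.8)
The plan is to obtain Theorem~\ref{thm:seller2} as the $I=2$ instance of Theorem~\ref{thm:sellerthm}: for $I=2$ the candidate set there is $\{0,Q-d_l\}$ together with the stationary points of $U(s,\cdot)$, so it suffices to locate those stationary points explicitly and compare function values. With $d_1=d_l<Q<d_h=d_2$ we have $\hat\imath=1$, so the feasible ray $[0,\infty)$ splits at the single breakpoint $q_s=Q-d_l$ coming from the kink of $L(Q-q_s-d_l)$, while the high-demand overage term $L(Q-q_s-d_h)=\kappa(Q-q_s-d_h)$ is active for every $q_s\ge 0$. On each sub-interval I would substitute the relevant linear pieces of $L$ and the chosen reference point into (\ref{equ:stage2s}), writing $U(s,q_s)=w(1-p)v(\cdot)+w(p)v(\cdot)$ with affine arguments, and then read off the shape of $U(s,\cdot)$. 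For the \emph{EUT case} ($\beta=\lambda=\mu=1$, $R_p=0$), $U(s,\cdot)$ is piecewise linear with slope $\pi_b^{\max}-\kappa p$ on $[0,Q-d_l]$ and slope $\pi_b^{\max}-\kappa$ on $[Q-d_l,\infty)$; the second slope is strictly smaller (using the standing assumption $\pi_b^{\max}<\kappa$, without which selling is unboundedly profitable), so $U(s,\cdot)$ is concave and is maximized at $Q-d_l$ when $\pi_b^{\max}>\kappa p$ and at $0$ otherwise, i.e.\ (\ref{eq:table_sellereut}) with $\bar\pi_s^{EUT}=\kappa p$ as in (\ref{equ:piseut}).

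For \emph{PT under the high reference point} $R_p=0$: on $[Q-d_l,\infty)$ both arguments of $v$ are strictly decreasing in $q_s$ and $v$ is strictly increasing, so $U(s,\cdot)$ is strictly decreasing there and its maximum on that sub-interval is attained at $q_s=Q-d_l$. On $[0,Q-d_l]$ one gets $U(s,q_s)=w(1-p)(\pi_b^{\max}q_s)^\beta-w(p)\lambda(\kappa(d_h-Q)+(\kappa-\pi_b^{\max})q_s)^\beta$. When $\beta=1$ this is affine with slope $w(1-p)\pi_b^{\max}-w(p)\lambda(\kappa-\pi_b^{\max})$, reproducing the two-line answer with the threshold obtained from (\ref{equ:pispth}) specialized to $\beta=1$. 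When $\beta<1$ I would show that the stationarity condition $\partial U/\partial q_s=0$ is equivalent to a positive constant times $\left(\frac{\pi_b^{\max}q_s}{\kappa(d_h-Q)+(\kappa-\pi_b^{\max})q_s}\right)^{\beta-1}$ equalling $1$; the inner ratio is strictly increasing in $q_s$ (its derivative has numerator $\pi_b^{\max}\kappa(d_h-Q)>0$), so the left side is strictly decreasing in $q_s$ and blows up as $q_s\to 0^+$. Hence $U(s,\cdot)$ is unimodal on $[0,Q-d_l]$ with a unique stationary point, which is a maximum; solving the first-order condition yields exactly the third line of (\ref{eq:table_seller}). Evaluating that same condition at $q_s=Q-d_l$ is precisely (\ref{equ:pispth}), so — using the uniqueness of $\bar\pi_s^{PTh}$ established in Appendix~D together with the limiting behavior of the derivative at $q_s=Q-d_l$ as $\pi_b^{\max}\to 0$ and $\pi_b^{\max}\to\kappa$ — the derivative at $q_s=Q-d_l$ is positive iff $\pi_b^{\max}>\bar\pi_s^{PTh}$. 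Thus the interior maximizer lies in $(0,Q-d_l]$ exactly when $\pi_b^{\max}\le\bar\pi_s^{PTh}$ (and equals $Q-d_l$ at the threshold), while otherwise $U(s,\cdot)$ increases throughout $[0,Q-d_l]$ and the global maximizer is $Q-d_l$; since $U(s,\cdot)$ is increasing up to its peak, the peak also beats $U(s,0)$, and merging the two sub-intervals gives (\ref{eq:table_seller}).

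For \emph{PT under the low reference point} $R_p=\kappa(Q-d_h)$: again $U(s,\cdot)$ is strictly decreasing on $[Q-d_l,\infty)$, so its maximum there is at $q_s=Q-d_l$. On $[0,Q-d_l]$ one obtains $U(s,q_s)=w(1-p)(\pi_b^{\max}q_s+\kappa(d_h-Q))^\beta-w(p)\lambda((\kappa-\pi_b^{\max})q_s)^\beta$; for $\beta<1$ its derivative equals $-\infty$ at $q_s=0^+$, and $\partial U/\partial q_s=0$ now reduces to a strictly \emph{increasing} function of $q_s$ meeting a constant, so $U(s,\cdot)$ is valley-shaped with its unique stationary point a \emph{minimum}. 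Consequently, for every $\beta\le 1$ the maximum over $[0,Q-d_l]$, hence over $[0,\infty)$, is attained at $q_s=0$ or $q_s=Q-d_l$. Since $U(s,0)=w(1-p)(\kappa(d_h-Q))^\beta$ and $U(s,Q-d_l)=w(1-p)(\pi_b^{\max}(Q-d_l)+\kappa(d_h-Q))^\beta-w(p)\lambda((\kappa-\pi_b^{\max})(Q-d_l))^\beta$, the indifference equation $U(s,0)=U(s,Q-d_l)$ rearranges exactly to (\ref{equ:pisptl}); moreover $U(s,Q-d_l)-U(s,0)$ is strictly increasing in $\pi_b^{\max}$, so $q_s^*=Q-d_l$ iff $\pi_b^{\max}>\bar\pi_s^{PTl}$ and $q_s^*=0$ otherwise, which is (\ref{eq:table_sellerrp}) (with the relevant threshold price being the solution $\bar\pi_s^{PTl}$ of (\ref{equ:pisptl})). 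Finally, setting $\beta=\lambda=\mu=1$ collapses (\ref{equ:pispth}) and (\ref{equ:pisptl}) to $\kappa p$, so all three cases reduce consistently to (\ref{eq:table_sellereut}).

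The \emph{main obstacle} is the shape analysis on the first sub-interval $[0,Q-d_l]$: showing that $\partial U/\partial q_s$ has at most one zero there and correctly identifying its nature — a \emph{maximum} under $R_p=0$ but a \emph{minimum} under $R_p=\kappa(Q-d_h)$ — since this is exactly what lets us reduce the non-convex Stage~II problem to a finite comparison. It rests on rewriting the first-order condition as a strictly monotone function of $q_s$ (decreasing in the high-$R_p$ case, increasing in the low-$R_p$ case) with the correct blow-up at $q_s=0^+$. Everything else — solving the first-order condition in closed form, verifying that the endpoint $q_s=Q-d_l$ corresponds exactly to the defining equations (\ref{equ:pispth})--(\ref{equ:pisptl}) of the threshold prices, and the monotonicity of the relevant quantities in $\pi_b^{\max}$ — is routine algebra.
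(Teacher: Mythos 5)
Your proposal is correct and follows essentially the same route as the paper's Appendix D proof: split the feasible ray at $Q-d_l$, show $U(s,\cdot)$ is decreasing on $[Q-d_l,\infty)$, and on $[0,Q-d_l]$ rewrite the first-order condition as a strictly monotone function of $q_s$ so that the unique stationary point is a maximum under $R_p=0$ but a minimum under $R_p=\kappa(Q-d_h)$, then compare boundary and critical values to recover the thresholds in (\ref{equ:pispth}) and (\ref{equ:pisptl}), with the EUT case handled by piecewise linearity. Your use of $\bar{\pi}_s^{PTl}$ as the threshold in (\ref{eq:table_sellerrp}) is indeed what the argument delivers (the $\bar{\pi}_s^{PTh}$ printed there appears to be a typo), so there is no gap.
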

Theorem \ref{thm:seller2} is a special case of Theorem \ref{thm:sellerthm}, and the proof of Theorem \ref{thm:seller2} is given in Appendix D. The result in (\ref{eq:table_sellereut}) follows directly from (\ref{eq:table_seller}) and (\ref{eq:table_sellerrp}) by setting $\beta=\mu=\lambda=1$.

In (\ref{eq:table_sellereut}) and (\ref{eq:table_sellerrp}), we observe that the optimal selling quantity $q_s^*$ is discontinuous at the seller's threshold price. This is due to the linearity of utility function in the EUT case and the unimodality of utility function in the PT case with $R_p=\kappa(Q-d_h)$. Details are given in Appendix D. 

\aug{From Theorem \ref{thm:seller2}, we have the following observations on the impact of reference point when we fix the probability distortion parameter $\mu=1$.}

\begin{obs}
(PT vs EUT under the high reference point) When $\mu=1$ and $R_p=0$, we have $  \bar{\pi}_s^{PTh} < \bar{\pi}_s^{EUT}$. This means that under a high reference point, a PT seller is more willing to sell mobile data than an EUT seller.

\end{obs}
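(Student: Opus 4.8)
The plan is to compare the two closed-form thresholds from Theorem~\ref{thm:seller2} head-on. Recall $\bar{\pi}_s^{EUT}=\kappa p$: under EUT the objective $U(s,q_s)$ is \emph{linear} in $q_s$ on the first sub-interval $[0,Q-d_l]$ with slope $\pi_b^{\max}-\kappa p$ (and strictly decreasing beyond $Q-d_l$), so the EUT seller commits to $q_s=Q-d_l$ exactly when $\pi_b^{\max}>\kappa p$. For PT with $R_p=0$ and $\mu=1$ (so $w(p)=p$, $w(1-p)=1-p$), the threshold $\bar{\pi}_s^{PTh}$ is the unique root in $(0,\kappa)$ of $H(\pi)=1$, where $H(\pi)$ denotes the left-hand side of~(\ref{equ:pispth}); equivalently it is the price at which the interior critical point of $U(s,\cdot)$ on $[0,Q-d_l]$ reaches the right endpoint $Q-d_l$. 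Hence, once $H$ is shown to be strictly decreasing, $\bar{\pi}_s^{PTh}<\bar{\pi}_s^{EUT}$ is \emph{equivalent} to the single scalar inequality $H(\kappa p)<1$, and the whole proof reduces to that.

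First I would establish the monotonicity of $H$. Writing $C\triangleq\kappa(d_h-Q)/(Q-d_l)>0$, a short manipulation collapses the product in~(\ref{equ:pispth}) into the much simpler form
\[
H(\pi)=\frac{\lambda\,w(p)}{w(1-p)}\cdot\frac{\kappa-\pi}{\pi^{\beta}\,(\kappa-\pi+C)^{1-\beta}},\qquad \pi\in(0,\kappa).
\]
Differentiating the logarithm gives $\frac{d}{d\pi}\ln H(\pi)=-\frac{1}{\kappa-\pi}-\frac{\beta}{\pi}+\frac{1-\beta}{\kappa-\pi+C}$, which is strictly negative because $\kappa-\pi+C>\kappa-\pi$ and $1-\beta\le1$. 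Thus $H$ is strictly decreasing, with $H(0^{+})=+\infty$ and $H(\kappa^{-})=0$; this also re-proves the uniqueness of $\bar{\pi}_s^{PTh}$ claimed in Appendix~D.

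Next I would evaluate $H$ at $\pi=\kappa p$. Substituting $w(p)=p$, $w(1-p)=1-p$, $\kappa-\pi=\kappa(1-p)$, and cancelling powers of $\kappa$ yields
\[
H(\kappa p)=\lambda\left[\frac{p\,(Q-d_l)}{(1-p)(Q-d_l)+(d_h-Q)}\right]^{1-\beta},
\]
so the target becomes $\lambda^{1/(1-\beta)}\,p\,(Q-d_l)<(1-p)(Q-d_l)+(d_h-Q)$ for $\beta<1$ (with the degenerate case $\beta=1$, where the bracket reduces to $\lambda$, handled separately). I expect this to be the main obstacle: it is the only point where loss aversion $\lambda$ and the demand spread $d_h-Q$ interact, and it is precisely where the \emph{s-shaped curvature} $\beta<1$ is essential — at $\beta=1$ the factor collapses to $\lambda$, so the inequality then hinges on $\lambda$ alone. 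I would attack it by bounding numerator against denominator using $d_l<Q<d_h$ and the admissible ranges $0<\beta\le1$, $\lambda\ge1$.

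Finally, once $\bar{\pi}_s^{PTh}<\kappa p$ is established, the interpretation follows directly from Theorem~\ref{thm:seller2}: for prices $\pi_b^{\max}\in(\bar{\pi}_s^{PTh},\kappa p]$ the PT seller sells the full amount $Q-d_l$ while the EUT seller sells nothing, and for $\pi_b^{\max}\le\bar{\pi}_s^{PTh}$ the PT seller with $\beta<1$ still sells the positive interior quantity in~(\ref{eq:table_seller}) whereas the EUT seller again sells nothing. In every case the PT seller's optimal selling quantity weakly dominates (and typically strictly exceeds) that of the EUT seller, i.e.\ the PT seller is more willing to sell mobile data.
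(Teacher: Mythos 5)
Your reduction is correct as far as it goes, and it is the natural one: writing $H(\pi)$ for the left-hand side of (\ref{equ:pispth}), the simplification $H(\pi)=\frac{\lambda w(p)}{w(1-p)}\cdot\frac{\kappa-\pi}{\pi^{\beta}(\kappa-\pi+C)^{1-\beta}}$ with $C=\kappa(d_h-Q)/(Q-d_l)$, the strict monotonicity of $H$ on $(0,\kappa)$ with limits $+\infty$ and $0$, and the evaluation $H(\kappa p)=\lambda\bigl[\frac{p(Q-d_l)}{(1-p)(Q-d_l)+(d_h-Q)}\bigr]^{1-\beta}$ under $\mu=1$ are all right, and they correctly convert the claim $\bar{\pi}_s^{PTh}<\bar{\pi}_s^{EUT}=\kappa p$ into the single scalar inequality $H(\kappa p)<1$. (The paper states this observation without a written proof, so there is no argument of record to match; your reduction is the honest way to test it.)

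The genuine gap is the step you defer to the end, and it is not merely the hard part --- it is unprovable under the hypotheses you allow yourself. The inequality $\lambda\bigl[\frac{p(Q-d_l)}{(1-p)(Q-d_l)+(d_h-Q)}\bigr]^{1-\beta}<1$ cannot be obtained ``by bounding numerator against denominator using $d_l<Q<d_h$, $0<\beta\le 1$, $\lambda\ge 1$'': take $\beta=1$, $\lambda=2$ (the paper's own risk-averse dominant parameters) and the left side equals $\lambda=2>1$; or take $\lambda=1$, $\beta=0.8$, $p=0.9$, $Q-d_l=10$, $d_h-Q=1$, $\kappa=1$, for which the bracket is $4.5$ and $H(\kappa p)=4.5^{0.2}\approx 1.35>1$, so by your own monotonicity argument $\bar{\pi}_s^{PTh}>\kappa p=\bar{\pi}_s^{EUT}$. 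In other words, your (correct) reduction shows that the threshold ordering is exactly equivalent to $\lambda\bigl[\frac{p(Q-d_l)}{(1-p)(Q-d_l)+(d_h-Q)}\bigr]^{1-\beta}<1$, which genuinely depends on $p$, $\lambda$, $\beta$ and the demand spread; it does hold, e.g., when $\lambda=1$, $\beta<1$ and $p(Q-d_l)<(1-p)(Q-d_l)+(d_h-Q)$ (in particular whenever $p\le 1/2$), but not for all admissible PT parameters. This is in sharp contrast to the buyer-side observation, where $\lambda$ cancels and $\bar{\pi}_b^{PTh}=\kappa p^{1/\beta}<\kappa p$ is immediate for $\beta<1$. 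To complete a proof you must either state the extra condition $H(\kappa p)<1$ (equivalently, restrictions tying $\lambda$, $\beta$, $p$ and $(d_h-Q)/(Q-d_l)$) as a hypothesis, or support the ``more willing to sell'' conclusion by a different comparison than the thresholds (e.g., comparing the optimal quantities in (\ref{eq:table_seller}) and (\ref{eq:table_sellereut}) directly); the threshold route alone cannot deliver the unconditional statement, and your concluding dominance paragraph inherits this same unproved ordering.
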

\begin{obs}
(PT vs EUT under the low reference point) When $\mu=1$ and $R_p=\kappa(Q-d_h)$, we have $\bar{\pi}_s^{PTl} > \bar{\pi}_s^{EUT}$. This means that under a low reference point, a PT seller is less willing to sell mobile data than an EUT seller.
\end{obs}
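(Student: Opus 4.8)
The plan is to reduce the claim $\bar{\pi}_s^{PTl} > \bar{\pi}_s^{EUT}$ to a single sign check on the equation (\ref{equ:pisptl}) that defines $\bar{\pi}_s^{PTl}$. Write $a := Q - d_l > 0$, $b := d_h - Q > 0$, and set $\mu = 1$, so $w(p)=p$ and $w(1-p)=1-p$. Using $(\pi-\kappa)Q+\kappa d_h-\pi d_l = \pi a+\kappa b$, equation (\ref{equ:pisptl}) says that $\bar{\pi}_s^{PTl}$ is a zero of
\begin{equation}
G(\pi) := (1-p)\big[(\pi a+\kappa b)^{\beta}-(\kappa b)^{\beta}\big]-\lambda p\,(\kappa-\pi)^{\beta}a^{\beta},
\end{equation}
which is exactly the gap $U(s,Q-d_l)-U(s,0)$ in (\ref{equ:stage2s}) evaluated at maximum buying price $\pi$ under $R_p=\kappa(Q-d_h)$; since by Theorem~\ref{thm:seller2} the low-reference optimum with $I=2$ always lies in $\{0,\,Q-d_l\}$, this zero is indeed the seller's switching threshold. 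On the EUT side, (\ref{equ:piseut}) gives the closed form $\bar{\pi}_s^{EUT}=\kappa p$, and note $\kappa p<\kappa$.

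First I would record two structural facts about $G$. (i) $G$ is strictly increasing on $[0,\kappa]$: the first bracket grows with $\pi$ because $t\mapsto t^{\beta}$ is increasing and $\pi a+\kappa b$ increases, while $-\lambda p(\kappa-\pi)^{\beta}a^{\beta}$ grows because $(\kappa-\pi)^{\beta}$ shrinks as $\pi$ increases. (ii) $G(0)=-\lambda p\kappa^{\beta}a^{\beta}<0$ and $G(\kappa)=(1-p)\kappa^{\beta}\big[(a+b)^{\beta}-b^{\beta}\big]>0$, so $G$ has a unique zero $\bar{\pi}_s^{PTl}\in(0,\kappa)$. Hence $\bar{\pi}_s^{PTl}>\kappa p=\bar{\pi}_s^{EUT}$ holds precisely when $G(\kappa p)<0$ -- the same price-monotonicity reduction that handles Observation~3 via (\ref{equ:pispth}).

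The heart of the proof is therefore the inequality $G(\kappa p)<0$. Substituting $\pi=\kappa p$ and cancelling $\kappa^{\beta}$ turns it into $(1-p)\big[(pa+b)^{\beta}-b^{\beta}\big]<\lambda p(1-p)^{\beta}a^{\beta}$, equivalently, with $s:=b/(pa)$, into
\begin{equation}
(1+s)^{\beta}-s^{\beta}<\lambda\big(p/(1-p)\big)^{1-\beta}.
\end{equation}
I would attack this with the subadditivity of $t\mapsto t^{\beta}$ for $0<\beta\le1$, which gives $(1+s)^{\beta}-s^{\beta}\le 1$, strictly whenever $s>0$ (i.e. $d_h>Q$); this immediately settles the regime $p\ge\tfrac12$, where the right-hand side is $\ge\lambda\ge1$. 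I expect the case $p<\tfrac12$ to be the main obstacle: there $\lambda\big(p/(1-p)\big)^{1-\beta}$ can fall below $1$, so bounding the left-hand side by its supremum $1$ is too lossy and one must retain its genuine dependence on the demand spread $d_h-Q$ -- the map $s\mapsto(1+s)^{\beta}-s^{\beta}$ is strictly decreasing -- together with the loss-aversion factor; the clean sufficient condition that falls out is $\lambda\ge\big((1-p)/p\big)^{1-\beta}$, which is where $\lambda>1$ genuinely enters and which is mild for realistic PT parameters. Once $G(\kappa p)<0$ is secured, fact (ii) delivers $\bar{\pi}_s^{PTl}>\bar{\pi}_s^{EUT}$: a PT seller with a low reference point requires a strictly higher price before he will sell, and is hence less willing to sell than an EUT seller.
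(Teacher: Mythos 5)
Your reduction is sound and is, in effect, the argument the paper never writes down: Observation~4 is asserted as a consequence of Theorem~\ref{thm:seller2} with no proof supplied, so the natural route is exactly yours --- identify $\bar{\pi}_s^{PTl}$ as the unique zero of $G(\pi)=U(s,Q-d_l)-U(s,0)$ (legitimate because Theorem~\ref{thm:seller2} confines the low-reference optimum to $\{0,\,Q-d_l\}$), use the strict monotonicity of $G$ on $[0,\kappa]$ together with its signs at the endpoints, and test the sign at $\pi=\kappa p$. Your algebra down to $(1+s)^{\beta}-s^{\beta}<\lambda\bigl(p/(1-p)\bigr)^{1-\beta}$ with $s=(d_h-Q)/\bigl(p(Q-d_l)\bigr)$ is correct, as is the treatment of the regime $p\ge\tfrac12$ via subadditivity of $t\mapsto t^{\beta}$.

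The unresolved case $p<\tfrac12$ is a genuine gap, but it is not one you can close: the statement as written fails there. Since $(1+s)^{\beta}-s^{\beta}\to 1$ as $s\to 0^{+}$, whenever $\lambda\bigl(p/(1-p)\bigr)^{1-\beta}<1$ you can make $d_h-Q$ small enough that your inequality reverses. Concretely (with your $a=Q-d_l$, $b=d_h-Q$), take $\mu=1$, $\beta=0.9$, $\lambda=1.2$, $p=0.1$, $a=1$, $b=10^{-3}$: then $(1-p)\bigl[(pa+b)^{\beta}-b^{\beta}\bigr]\approx 0.113$ while $\lambda p(1-p)^{\beta}a^{\beta}\approx 0.109$, so $G(\kappa p)>0$ and hence $\bar{\pi}_s^{PTl}<\kappa p=\bar{\pi}_s^{EUT}$, contradicting the observation even though $\beta<1\le\lambda$. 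So your sufficient condition $\lambda\ge\bigl((1-p)/p\bigr)^{1-\beta}$ (or $p\ge\tfrac12$) is essentially the most one can prove; as a proof of the observation as literally stated your argument is incomplete, but the incompleteness reflects a missing hypothesis in the claim itself (it does hold for typical PT calibrations such as $\lambda\approx 2.25$ unless $p$ is extremely small or the upside demand spread is negligible), not a flaw in your method.
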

\aug{\textbf{Contrary to buying data, selling data increases the risk that the future data demand exceeds the quota}. Under a high expectation (e.g., $R_p=0$), the seller with a smaller $\beta$ is more \emph{risk-seeking dominant} and will sell a large amount ($Q-d_l$). Under a low expectation (e.g., $R_p=\kappa(Q-d_h)<0$), the seller with a smaller $\beta$ is more \emph{risk-averse dominant} and will not sell data.  }

\final{In Stage I, the user decides whether to be a seller or a buyer by comparing the maximum utilities that he can achieve in both cases.}

\section{\res{Implementation of Mobile Data Trading}}\label{sec:implementation}
\rev{Building upon our theoretical analysis in Sections III and IV, here we consider several issues related to the practical implementations. We first discuss the mobile data trading algorithm that allows a user to trade multiple times during a billing cycle to adjust his trading decision in Section \ref{sec:alg2}.} Then we introduce a practical algorithm to estimate the user's risk preferences in Section \ref{sec:alg1}. 

\begin{figure}
    \centering
    \includegraphics[width=0.48\textwidth]{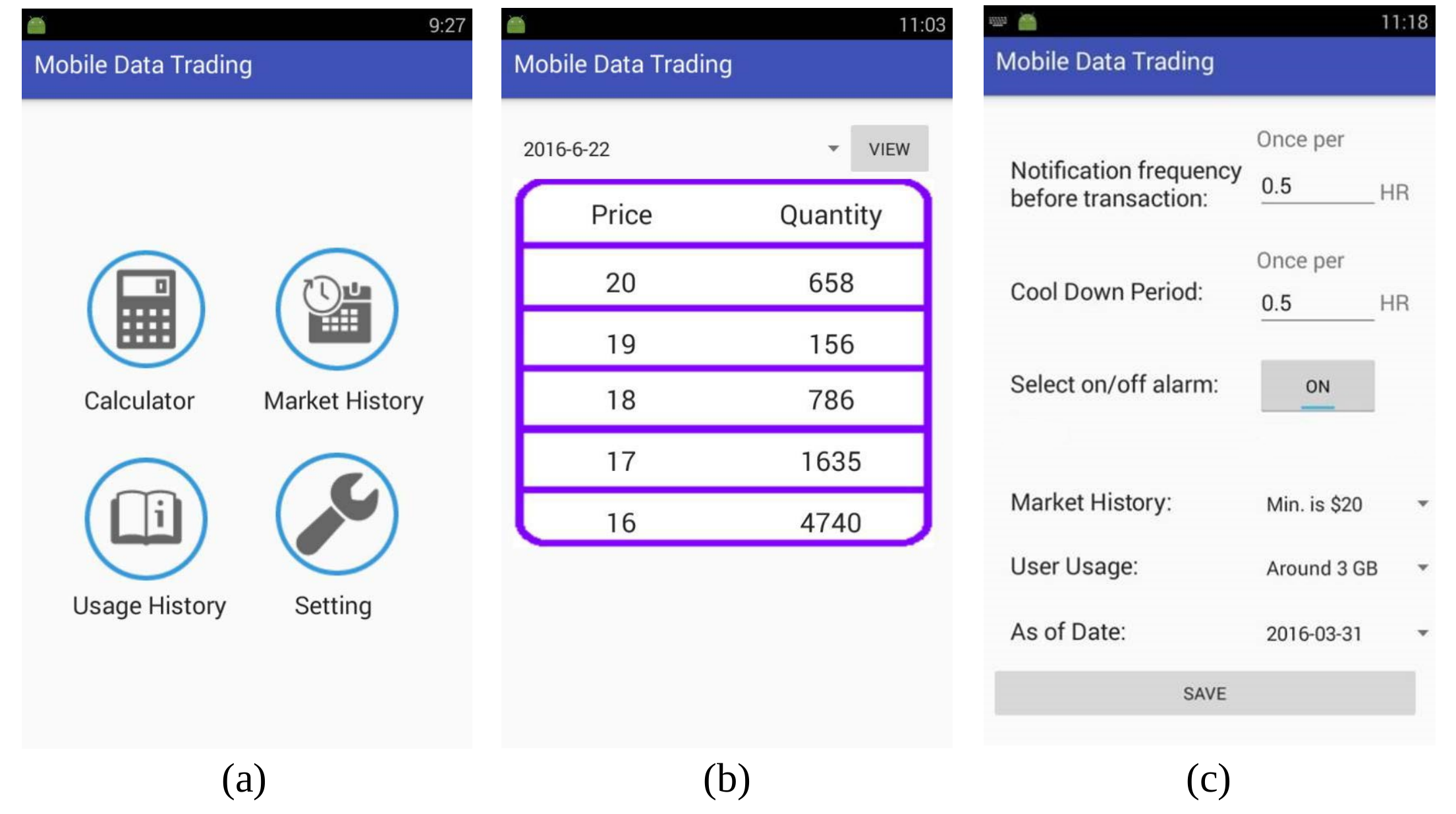}
   				\caption{Screenshots of the app: (a) Homepage, (b) Market Information, and (c) Settings.}
    \label{fig:appfigure1}
\end{figure}
Since a user's prediction of his future data demand may not be accurate, the user may want to make multiple data trading decisions as the time passes by. \res{Hence, we design a mobile data trading algorithm to facilitate a user to make smart decisions over time in a semi-automatic fashion, which reduces the user's need of frequently checking the market prices and estimating the future data demand. The mobile data trading algorithm relies on Algorithm 1 (discussed in Section V-B) to estimate the user's risk preferences, and can provide trading suggestions at any time based on the current market price, the user's current usage, and his risk preferences. Our algorithm is implemented as an Android app, the interface of which is shown in Fig.~\ref{fig:appfigure1}.} Fig.~\ref{fig:appfigure1}(a) shows the homepage screen of the Android app, which involves four areas: calculator, market history, usage history, and setting. Fig.~\ref{fig:appfigure1}(b) shows the current CMHK market information, which includes the selling prices and quantities.  Fig.~\ref{fig:appfigure1}(c) shows various system parameters that can be changed by the user, such as the trading notification frequency.\footnote{\aug{The user may not want to be disturbed by frequent notifications. He can adjust this by either turning off the notification alarm, or reduce the notification frequency to a low level, e.g., once per 24 hour.}}
\subsection{\res{Mobile Data Trading Algorithm Design}}\label{sec:alg2}

\begin{figure}
    \centering
    \includegraphics[width=0.48\textwidth]{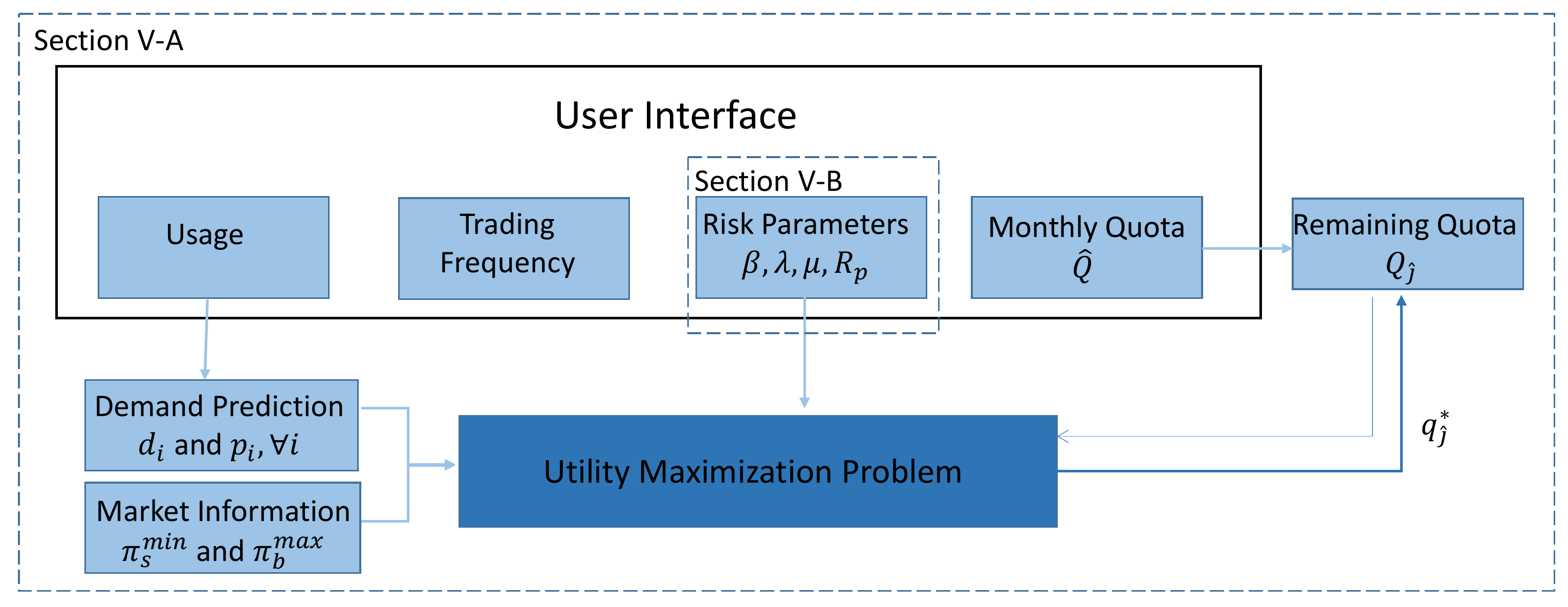}
    \caption{\res{Flowchart of the algorithm (Android app).}}
    \label{fig:appfigure}
\end{figure}

Fig.~\ref{fig:appfigure} illustrates the key function modules of the mobile app: 
\begin{itemize}
    \item Market Information: The app retrieves the CMHK mobile data trading market information, in order to determine the minimum selling price $\pi_s^{\min}$ and the maximum buying price $\pi_b^{\max}$.\footnote{Recall from Section \ref{sec:con} that there is no buyer's market in the actual CMHK platform, so $\pi_b^{\max}$ in problem (\ref{equ:stage2s}) is not well defined. To address this issue, we note that although different sellers can set different prices in the CMHK market, the system will always try to satisfy the buyers' demands with the lowest selling price. Based on the fact that the selling quantity at the minimum selling price is often very large (e.g., 4740 GB in Fig.~\ref{fig:appfigure1}(b) on June 22, 2016), the seller is not able to sell his data at a price higher than the minimum selling price, so we can assume that the maximum buying price is the same as minimum selling price, i.e., $\pi_b^{\max}=\pi_s^{\min}$.} 
    \item Trading Frequency: We assume that the app will make $T$ trading decision in a billing cycle\footnote{By default setting, the app will send every trading suggestion as a notification. The user can change the notification frequency as shown in Fig.~\ref{fig:appfigure1}(c).}. In the following discussions, for the ease of exposition we assume the trading frequency is once a day, i.e., the user makes $T=30$ trading decisions during a monthly billing cycle.\footnote{The optimal trading decision may be not to sell or buy any data, i.e., skipping some of the trading opportunities.}
    \item Usage: The app records the user's usage everyday. We denote the \emph{actual} data usage of day $\hat{\jmath}$ in month $\hat{m}$ as $\delta_{m,j}$, where month $\hat{m}$ has $T_{\hat{m}}$ days.
    \item Demand Prediction: We use an adaptive model for the user's future data demand prediction. Specifically, assuming that we are on day $\hat{\jmath}$ of month $\hat{m}$, we aim to estimate the distribution of the future data demand of the remaining time (i.e., from day $\hat{\jmath}$ to day $T_{\hat{m}}$) of month $\hat{m}$ by considering the previous $I$ month's (denoted as month $\hat{m}-1,...,\hat{m}-I$) data usage during the same time period (i.e., from day $\hat{\jmath}+1$ to day $T_{\hat{m}}$). 
    The \emph{predicted} data usage of the last $T_{\hat{m}}-\hat{\jmath}$ days in month $\hat{m}$ for $i\in \mathcal{I}$ is
        \begin{align}
        d_i:=\sum\limits_{j=\hat{\jmath}+1}^{T_{\hat{m}-i}} \delta_{\hat{m}-i,j}.\label{eq:24yu}
    \end{align}
    We will use them to predict the future data demand of the rest of month $\hat{m}$ with an equal probability. That is, $p_i = 1/I$ for $i\in\mathcal{I}$.  
    \item Quota: The remaining quota from day $\hat{\jmath}$ to the end of month $\hat{m}$ is $Q_{\hat{\jmath}}$, which corresponds to $Q$ in Sections III and IV. The value of $Q_{\hat{\jmath}}$ is an input to the utility maximization problem in (\ref{equ:stage2b}) and (\ref{equ:stage2s}), which is updated every day as follows:
        \begin{align}
        Q_{\hat{\jmath}}:=\left\{
        \begin{array}{ll}
        Q_{\hat{\jmath}-1}+q^*_{\hat{\jmath}-1}-\delta_{\hat{m},\hat{\jmath}-1}, 
        &\text{ if }\hat{\jmath}\geq 2,\\
        \hat{Q}, 
        &\text{ if }\hat{\jmath}=1.
        \end{array}
        \right.\label{eq:25wei}
    \end{align}
    Here $q^*_{{\hat{\jmath}}-1}$ is the trading quantity on day ${\hat{\jmath}}-1$ that we will discuss below, and it can be zero if no trading happens on that day.  Thus, the first line of (\ref{eq:25wei}) means that the quota is updated based on the trading quantity $q_{\hat{j}-1}^*$ and usage $\delta_{\hat{m},\hat{\jmath}-1}$, while the second line means the initialization of the month quota to $\hat{Q}$ on the first day of the month.
    \item Risk Parameters: The risk parameters include the value function parameters $\beta$ and $\lambda$ in (\ref{equ:valuefunction}), the probability distortion parameter $\mu$ in (\ref{equ:pdfunction}), and the reference point $R_p$ in (\ref{equ:stage2b}) and (\ref{equ:stage2s}).
     \item Utility Maximization Problem: The app solves problem (\ref{equ:stage1}), which involves solving (\ref{equ:stage2b}) and (\ref{equ:stage2s}), based on the market information ($\pi_b^{\max}$ and $\pi_s^{\min}$), the user's current quota $Q_{\hat{\jmath}}$, the risk parameters, usage, and future data demand prediction in (\ref{eq:24yu}). The output of the utility maximization problem on day $\hat{\jmath}$ is the optimal buying or selling quantity $q_{\hat{\jmath}}^*$, which in turn will update the quota as in (\ref{eq:25wei}). Note that a positive $q_j^*$ means the optimal buying quantity (i.e., output of (\ref{equ:stage2b})), while a negative $q_j^*$ means the optimal selling quantity (i.e., the output of (\ref{equ:stage2s}) multiplied by $(-1)$).
\end{itemize}

 \begin{algorithm}[t]
\caption{Estimation of Value Function Parameters $\lambda$ and $\beta$}
\textbf{Input}: Quota ($Q_{\hat{\jmath}}$), risk parameters ($\mu$, $R_p$), usage($\delta_{\hat{m}-i,j},i=0,\ldots,I,j=1,\ldots,T_{\hat{m}-i}$), market information ($\pi_s^{\min}$, $\pi_b^{\max}$).\\
\For{$\hat{\jmath}=1$ \text{to} $T_{\hat{m}}$} 
{
\For{$i=1$ \text{to} $I$}{
$d_{i}:= \sum\limits_{j=\hat{\jmath}+1}^{T_{\hat{m}-i}} \delta_{\hat{m}-i,j}$ and $p_i:=1/I$\\
}
$d^{\min}:=\arg\min_{i\in\mathcal{I}}d_{i}$ and $d^{\max}:=\arg\max_{i\in\mathcal{I}}d_{i}$\\
\res{Users input the indifference prices ($\pi_{ind}^b$ and $\pi_{ind}^s$)\\
Substitute $d^{\min}$, $d^{\max}$, $\pi_{ind}^b$, and $\pi_{ind}^s$ into (25) and (26), and solve them for $\lambda$ and $\beta$\\}
Set $\lambda_{\hat{\jmath}} := \lambda$ and $\beta_{\hat{\jmath}} := \beta$\\
Update $Q_{\hat{\jmath}}$ according to (24)\\
}
\textbf{Output}: $\lambda:=\frac{\sum_{\hat{\jmath}=1}^{T_{\hat{m}}}\lambda_{\hat{\jmath}}}{T_{\hat{m}}}$ and $\beta:=\frac{\sum_{\hat{\jmath}=1}^{T_{\hat{m}}} \beta_{\hat{\jmath}}}{T_{\hat{m}}}$\\
\end{algorithm}

The detailed algorithm for computing the data trading decisions with user's specific risk preferences is shown in Appendix E.

\subsection{Risk Parameter Estimation}\label{sec:alg1}
Since the trading decision is user-dependent, we need to estimate each user's specific risk preferences.
In particular, we want to estimate the user's value function parameters $\lambda$ and $\beta$ in (\ref{equ:valuefunction}), which are problem-specific.\footnote{Here, we assume that the probability distortion parameter $\mu$ and the reference point $R_p$ are known. Note that probability distortion reflects the user's weighting effect on small and large probabilities, which does not depend on a specific problem. Hence, we will assume that the user already knows his probability distortion parameter $\mu$ through some other applications, surveys, or his previous investment decisions, e.g., the surveys in \cite{kahneman_pt79,tversky_ai92}. We can also calculate parameter $\mu$ through some surveys related to lotteries. For example, ``Which Lottery do you prefer? A: 100\% to win \$2; B: 99\% to win \$0, and 1\% to win \$100''. However, there is very few literature on estimating user's reference point, because a varying reference point will make the indifference equations in (\ref{eq:ind1}) and (\ref{eq:ind2}) complicated and unsolvable. In our paper, we have shown the impact of $R_p$ in our analysis on two different values in Section IV. For simplicity, we assume the high reference point $R_p=0$ in the app design and simulations in Sections V and VI-B.} For example, the parameters in making financial investments and enjoying entertainment may be quite different even for the same user.

Algorithm 1 presents the pseudo code of our algorithm to estimate the user's value function parameters $\lambda$ and $\beta$.
The basic idea is to solve the two indifference equations below \cite{kahneman_pt79,tversky_ai92}  for $\lambda$ and $\beta$ in the value function in (\ref{equ:valuefunction}).
\begin{align}
&\sum_{i=1}^I w(p_{i})v\left(-\pi_{ind}^b (d_{I}-Q_{\hat{\jmath}})+L(d_{I}-d_{i})\right)\notag\\
&=\sum_{i=1}^I w(p_{i})v\left(L(Q_{\hat{\jmath}}-d_{i})\right), \text{ and } \label{eq:ind1} \\
&\sum_{i=1}^I w(p_{i})v\left(\pi_{ind}^s (Q_{\hat{\jmath}}-d_{1})+L(d_{1}-d_{i})\right)\notag\\
&=\sum_{i=1}^I w(p_{i})v\left(L(Q_{\hat{\jmath}}-d_{i})\right).\label{eq:ind2}
\end{align}
\noindent Here, $\pi_{ind}^b$ and $\pi_{ind}^s$ are the user's indifference prices, where $\pi_{ind}^b$ corresponds to the price below which he is willing buy data at $d^{\max}-Q_{\hat{\jmath}}$, and $\pi_{ind}^s$ is the price above which he is willing to sell data at $Q_{\hat{\jmath}}-d^{\min}$, where $d^{\max}$ and $d^{\min}$ are defined in line 5.

In Appendix F, we establish that every pair of indifference equations (\ref{eq:ind1}) and (\ref{eq:ind2}) of Algorithm 1 has a unique solution of $\lambda$ and $\beta$.

\augg{When estimating the indifference price, the user may not have an exact value in mind. Hence, to improve the estimation accuracy of Algorithm 1, we have an estimation period of $T_{\hat{m}}$ days (line 2 to \res{9}), and have $T_{\hat{m}}$ pairs of difference equations with different demand predictions. Then, we choose the average values among the solutions of the equations (line \res{10}). }

\section{Performance Evaluation}\label{sec:simulation}
In Section \ref{sec:simulation1}, we first illustrate the impact of the PT model parameters on the user's optimal decision of a single trading in the billing cycle. \res{Then we evaluate the performance of our algorithm by numerically simulating the case of making multiple decisions in a billing cycle in Section \ref{sec:simulation2}. }

The simulations illustrate the following insights for a PT user' optimal trading decision \aug{(by comparing with an EUT user)}: (i) risk-seeking dominant under a high reference point: \aug{Without considering the effect of probability distortion, a PT buyer is risk-seeking and is less willing to buy mobile data and more willing to sell mobile data than an EUT buyer}. (ii) Probability distortion: For the case of binary demand realizations, when the probability of high demand is \aug{small}, a PT buyer is risk-averse and is more willing to buy mobile data. On the other hand, when the probability of high demand is \aug{large}, a PT buyer is risk-seeking and is less willing to buy mobile data. \aug{(iii) Profit: A PT user achieves a lower average profit than an EUT user. However, a risk-seeking dominant user can achieve a higher maximum profit, while a risk-averse dominant user can guarantee a higher minimum profit.}

\begin{figure*}[t]
\begin{minipage}{0.32\linewidth}
	\centering
					\includegraphics[width = 2.17in]{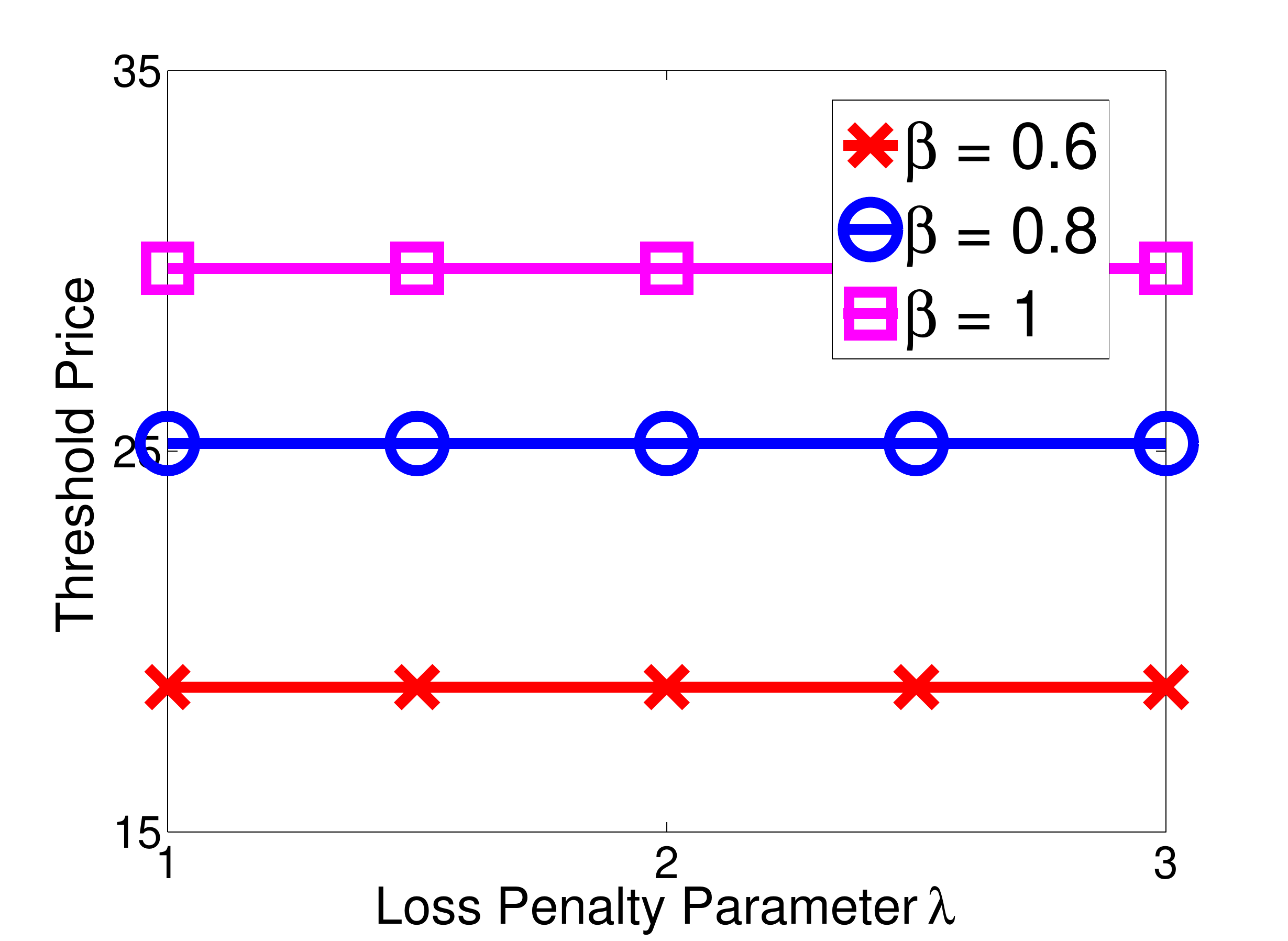}
	  \caption{Buyer's threshold price $\bar{\pi}_b^{PTh}$ versus loss penalty parameter $\lambda$ with different $\beta$.}
	  \label{fig:6}
\end{minipage}
\hfill
\begin{minipage}{0.32\linewidth}
\vspace{1.5mm}
	\centering
					\includegraphics[width = 2.18in]{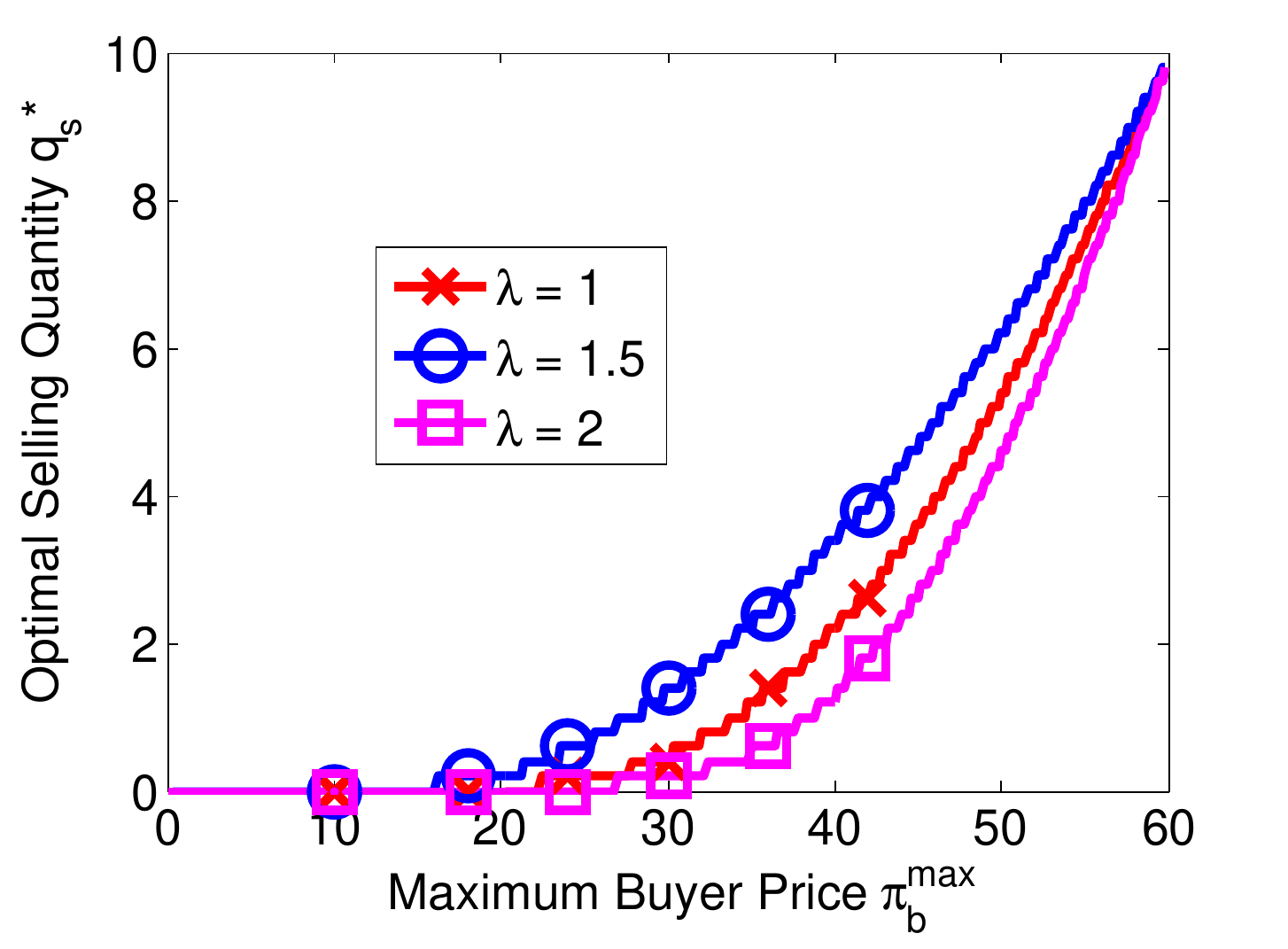}
					\vspace{-1.6mm}
				\caption{Seller's selling quantity $q_s^*$ versus maximum buying price $\pi_b^{\max}$ with different $\lambda$.}
				\label{fig:8}
\end{minipage}
\hfill
\begin{minipage}{0.32\linewidth}
\vspace{2mm}
	\centering
					\includegraphics[width = 2.17in]{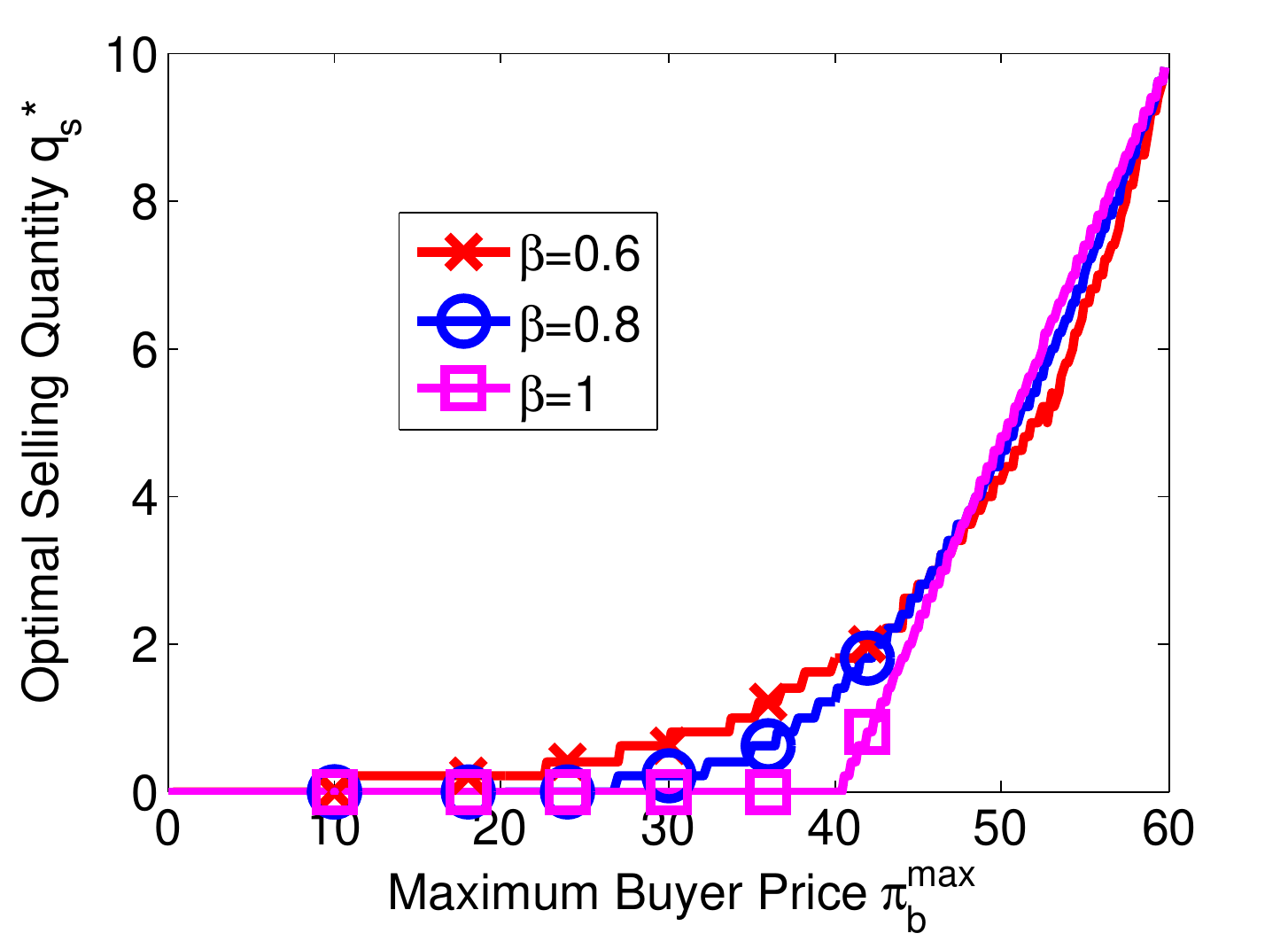}
					\vspace{-1.3mm}
				\caption{Seller's selling quantity $q_s^*$ versus maximum buying price $\pi_b^{\max}$ with different $\beta$.}
				\label{fig:9}
				\end{minipage}
\end{figure*}

\subsection{Impact of PT Model Parameters}\label{sec:simulation1}
\resf{In this subsection, we illustrate the impact of the PT model parameters ($\lambda$, $\beta$, and $\mu$) and market parameters ($\pi_s^{min}$ and $\pi_b^{max}$) on the user's optimal decision with $I=20$ possible outcomes in Figs. \ref{fig:6}, \ref{fig:8}, and \ref{fig:9}, and then illustrate the impact of the demand uncertainty parameter ($p$) with binary outcomes ($I=2$) in Fig. \ref{fig:7}.} Due to space limitations, we will only consider the high reference point $R_p=0$ for the PT case. 

\textbf{Impact of the loss penalty parameter $\lambda$ and the risk parameter $\beta$ on a buyer's threshold price $\bar{\pi}_b^{PTh}$:} \res{Here we assume $\mu = 1$ and $p_1=p_2=\ldots=p_{20}=0.05$. Fig.~\ref{fig:6} shows that the buyer threshold price $\bar{\pi}_b^{PTh}$ is increasing in $\beta$ for a fixed value of $\lambda$, and does not change in $\lambda$ for a fixed value of $\beta$. Note that a higher threshold price means that the buyer is more willing to buy mobile data.}
This is because under the high reference point $R_p=0$, the buyer will consider any possible outcome as a loss. In this case, a smaller $\beta$ means that the user is more risk-seeking in losses, so he does not need to purchase mobile data to reduce the risk that the future data demand exceeds the quota. Meanwhile, \aug{\textcolor[rgb]{0,0,0}{notice that} $\lambda$ \textcolor[rgb]{0,0,0}{is used for differentiate the value function in the loss region and gain region in (\ref{equ:valuefunction}).}} \textcolor[rgb]{0,0,0}{As the user will never encounter a gain in this case, the threshold price is independent of $\lambda$.}

\res{\textbf{Impact of the loss penalty parameter $\lambda$ and the risk parameter $\beta$ on a seller's optimal selling quantity $q_s^*$:} Fig.~\ref{fig:8} illustrates how the seller's selling quantity $q_s^*$ changes with the maximum buying price $\pi_b^{\max}$ and $\lambda$. Here we assume that $p_1=p_2=\ldots=p_{20}=0.05$, $\mu = 1$, and $\beta = 0.8$. Fig.~\ref{fig:8} shows that $q_s^*$ increases in $\pi_b^{\max}$. This is because as $\pi_b^{\max}$ increases, the seller gains more revenue from the trade, hence he wants to sell more.}
Fig.~\ref{fig:8} also shows that under the same value of $\pi_b^{\max}$, $q_s^*$ is non-increasing in $\lambda$. This is because, as $\lambda$ increases, the seller becomes more loss averse, hence \textcolor[rgb]{0,0,0}{he} will sell less in order to avoid a heavy loss when the future data demand is high.

\res{Fig.~\ref{fig:9} illustrates how the seller's selling quantity $q_s^*$ changes with the maximum buying price $\pi_b^{\max}$ and $\beta$. Here we assume that $\mu = 1$ and $\lambda = 2$. Fig.~\ref{fig:9} shows that $q_s^\ast$ is decreasing in $\beta$ under a small $\pi_b^{\max}$, and is increasing in $\beta$ under a large $\pi_b^{\max}$. This is because under the high reference point $R_p=0$, the seller will encounter either a small gain or a large loss. In this case, a smaller $\beta$ means that the user is more risk-averse dominant, hence becomes more willing to sell mobile data. However, when $\pi_b^{\max}$ is large, the seller will encounter a large gain from selling data.   In this case, a smaller $\beta$ means that the user is more risk-seeking dominant, hence becomes less willing to sell mobile data.}

\textbf{Impact of the probability distortion parameter $\mu$ on a buyer's threshold price $\bar{\pi}_b^{PTh}$ in (\ref{equ:pib}):} \res{To illustrate the impact of the probability distortion parameter, we assume binary outcomes with $I=2$.} Fig.~\ref{fig:7} considers three different probabilities of high demand: high ($p=0.8$), medium ($p=0.5$), and low ($p=0.2$). Here we assume $\beta = 0.8$ and $\lambda = 2$. We can see that $\bar{\pi}_b^{PTh}$ decreases in $\mu$ when $p=0.2$, is independent of $\mu$ when $p=0.5$, and increases in $\mu$ when $p=0.8$. As a smaller $\mu$ means that the buyer will overweigh the low probability more, he becomes more risk-averse (i.e., $\bar{\pi}_b^{PTh}$ decreases) when $p$ is small. Similarly, since a smaller $\mu$ means that the buyer will underweigh the high probability more, he is more risk-seeking (i.e., $\bar{\pi}_b^{PTh}$ increases) when the $p$ is large. 

\begin{figure*}[t]
\begin{minipage}{0.32\linewidth}
\centering
					\includegraphics[width = 2.17in]{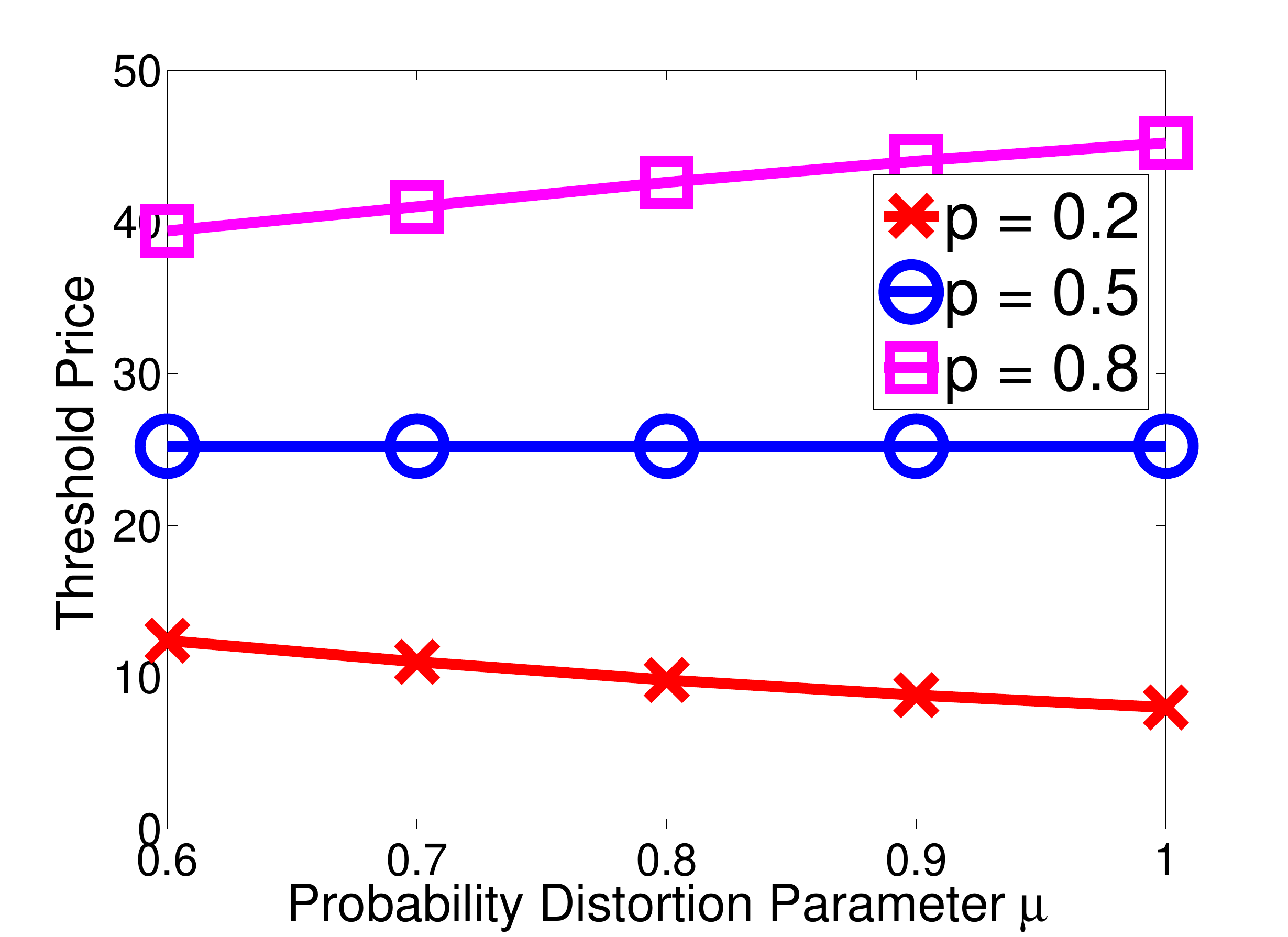}
  \caption{Buyer's threshold price $\bar{\pi}_b^{PTh}$ versus probability distortion parameter $\mu$ with different $p$.}
  \label{fig:7}
				\end{minipage}
\hfill
\begin{minipage}{0.32\linewidth}
\vspace{1.8mm}
	\centering
					\includegraphics[width = 1.9in]{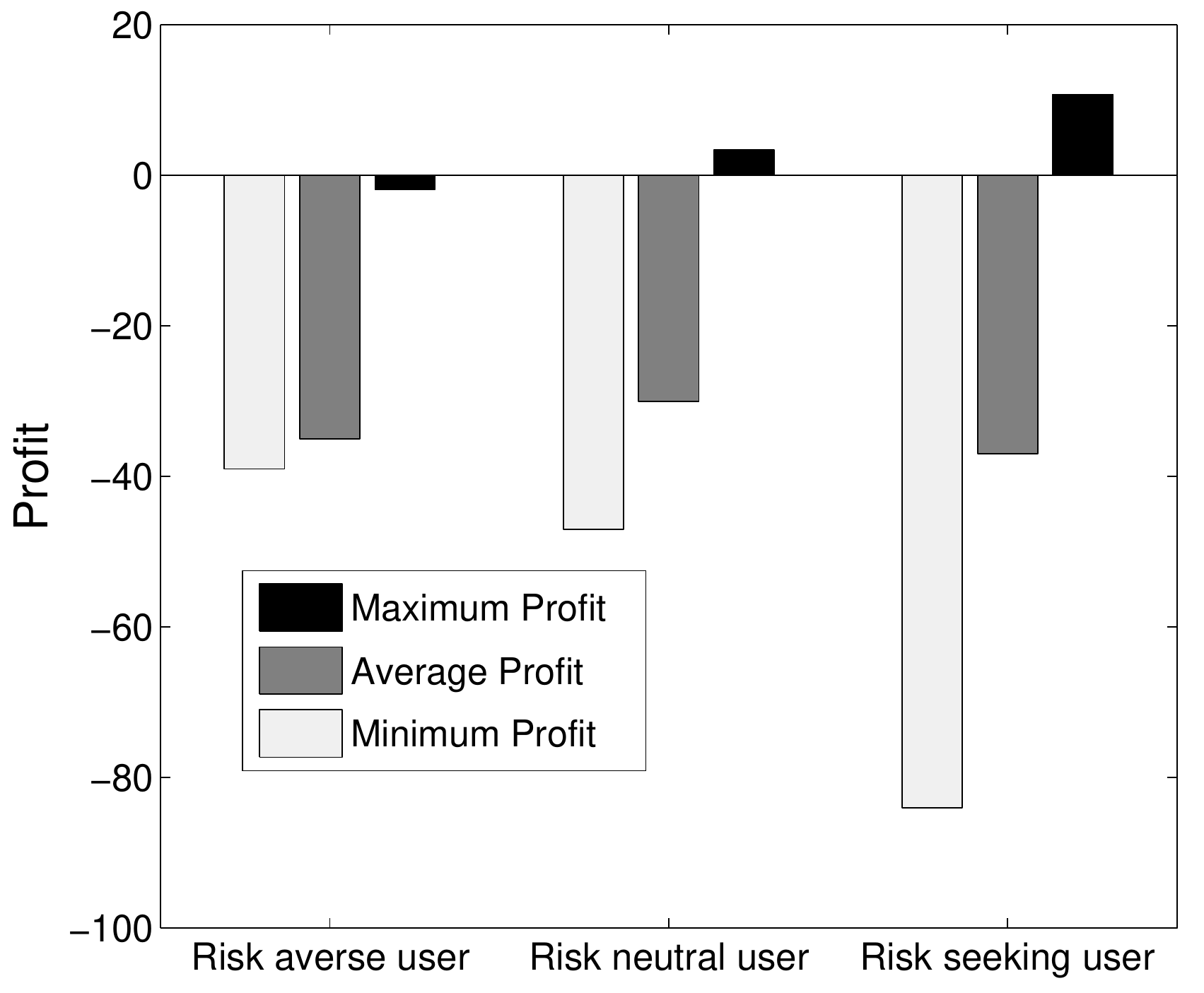}
			\vspace{-1.3mm}
				\caption{The user profit with different risk preferences.}
				\label{fig:82}
\end{minipage}
\hfill
\begin{minipage}{0.32\linewidth}
\vspace{2mm}
	\centering
					\includegraphics[width = 2.08in]{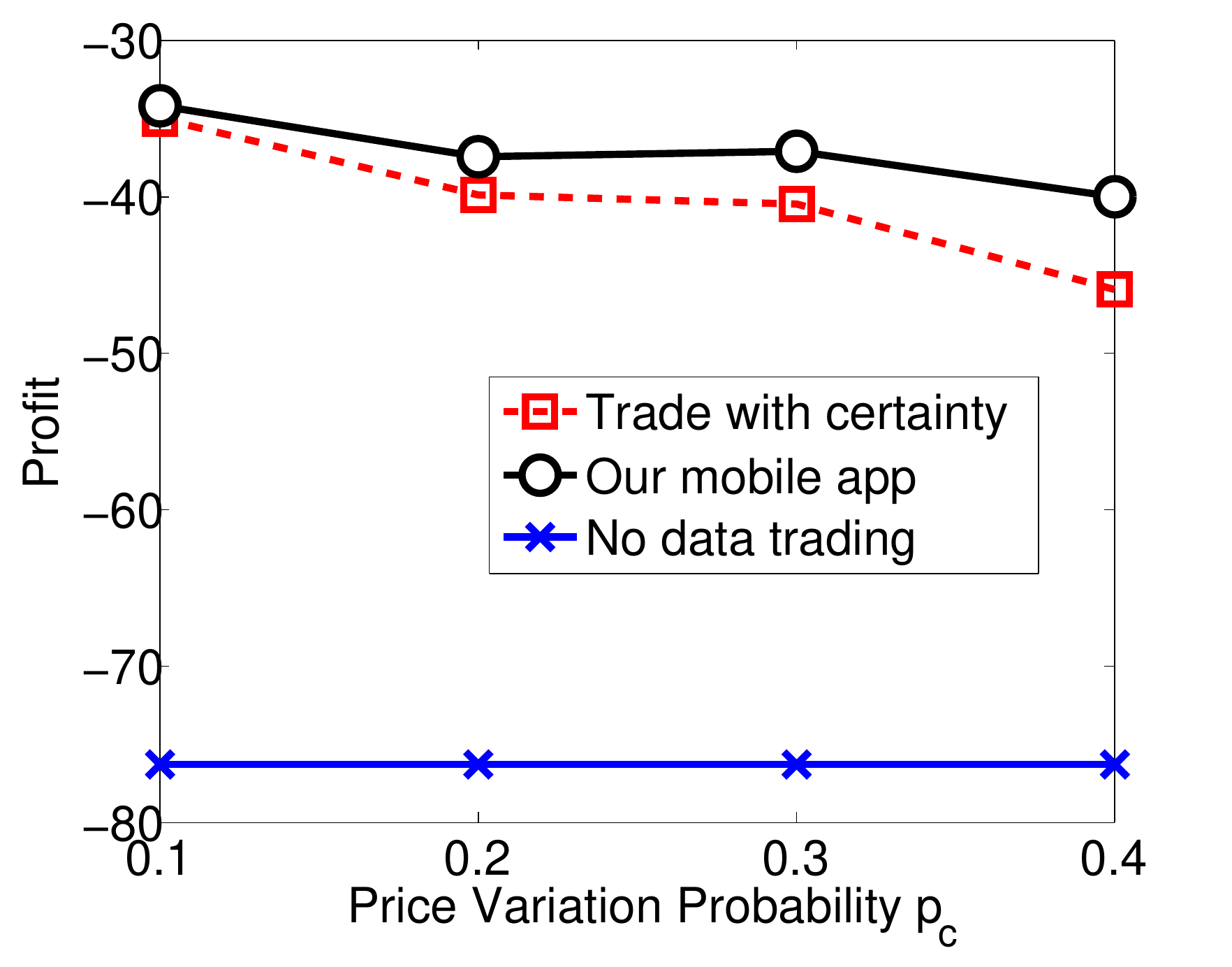}
			\vspace{-2mm}
				\caption{\res{The user profit with different price variations under the uniformly distributed usage}.}
				\label{fig:81}
				\end{minipage}
\end{figure*}

\subsection{\res{Evaluation of The Mobile Data Trading Algorithm}}\label{sec:simulation2}
\res{We then evaluate the total profit generated by our algorithm's trading decisions (introduced in Section \ref{sec:implementation}) in a billing cycle.} For each simulation, we consider a billing cycle of $T=30$ time slots. In the simulation settings, we assume that \aug{across two consecutive time slots}, the prices $\pi_b^{\max}$ and $\pi_s^{\min}$ increase by one unit (i.e., dollar) with probability $p_{c}$, decrease by one unit with probability $p_{c}$, or remain unchanged with probability $1-2p_c$. The changes of $\pi_b^{\max}$ and $\pi_s^{\min}$ are independent. We set the monthly quota $\hat{Q}=2$ GB, and randomly generate the previous $I$ months' total demand $d_i$ (defined in (\ref{eq:24yu})) with a mean value of $2$ GB\footnote{\res{In our simulation, we generate both uniformly distributed and normally distributed (with standard deviation $1/3$) demands.}}. The algorithm calculates the trading decision in every time slot based on the user's risk preferences under the high reference point $R_p=0$. Specifically, we define the profit\footnote{\augg{The profit may be negative, meaning that the total revenue due to selling data is lower than the payment due to buying data plus the payment due to satisfaction loss.}} $P_{\hat{m}}$ of month $\hat{m}$ as
\auggg{\begin{equation}
P_{\hat{m}}=\sum_{\hat{\jmath}=1}^{T_{\hat{m}}} -q_{\hat{\jmath}}^* \pi_{\hat{\jmath}} - L(Q+\sum_{{\hat{\jmath}}=1}^{T_{\hat{m}}}q_{\hat{\jmath}}^* - \sum_{\hat{\jmath}=1}^{T_{\hat{m}}} d_{\hat{m},\hat{\jmath}}), \label{eq:yu28}
\end{equation}
\noindent which consists of two parts: the net revenue due to selling or buying data, and the payment due to satisfaction loss. In (\ref{eq:yu28}), a positive $q_{\hat{\jmath}}^*$ means that the user buys data quota in day $\hat{\jmath}$, while a negative $q_{\hat{\jmath}}^*$ means that the user sells data quota in day $\hat{\jmath}$.}

By repeatedly running the simulation for 1000 billing cycles with randomly generated demands and prices, we first evaluate the impact of risk preferences on the maximum profit, minimum profit, and the average profit. \res{We then compare the average profit achieved by the algorithm implemented by our mobile app and several other benchmark strategies under different price variations with different price variations. In the first benchmark strategy \emph{``trade with certainty''}, we assume that the user is not willing to trade when he has uncertainty. This means that he will only trade once near the end of his billing cycle, when he knows the exact value of his monthly usage. In the second benchmark strategy \emph{``no data trading''}, the user does not trade at all. We compare these three strategies under the uniformly distributed usage.}

In Fig.~\ref{fig:82},  we assume $p_c=0.1$, and plot the profit of the user with different risk preferences. The risk parameters of different users are: (a) risk-averse dominant user: $\beta=1$, $\lambda = 2$; (b) risk-neutral user (EUT user): $\beta = \lambda =1 $; (c) risk-seeking dominant user: $\beta = 0.8$, $\lambda = 1$.\footnote{ A larger $\lambda$ indicates that the user is more loss averse, hence is more risk-averse. Since we have assumed a high reference point $R_p=0$, a smaller $\beta$ means the user is more risk-seeking dominant.} Since an EUT operator makes decision only by maximizing expected profit, we can see from Fig.~\ref{fig:82} that he can achieve the highest average profit. On the other hand, a PT operator makes decision by taking into account both the expected profit and its risk preferences. More specifically, although both risk-seeking dominant and risk-averse dominant PT users achieve a lower expected profit comparing to an EUT user, the risk-seeking dominant user can earn a higher maximum possible profit, while the risk-averse dominant user can guarantee a higher minimum possible profit. This is because the risk-seeking dominant user trades more quota, hence earns more when the price change is profitable, and loses more when the price change is unprofitable.

\res{In Fig.~\ref{fig:81}, we plot the profit of the risk-neutral user with different values of the price variation probability $p_c$ under the uniformly distributed usage. Fig.~\ref{fig:81} shows that the gap between the profits generated by ``our mobile app'' and the ``trade with certainty'' strategies increases with $p_c$, e.g., the gap at $p_c=0.4$ is $500\%$ larger than the gap at $p_c=0.1$. 
This is because our mobile app suggests users buy when the price is low and sell when the price is high, hence takes advantage of the price variation. Comparing with the ``no data trading'' strategy, the user significantly benefits from the data trading market (i.e., reduces his net payment by 50\%).}

\section{Conclusion}
In this paper, we have considered a mobile data trading market that is motivated by the CMHK's 2CM platform. We have analyzed the optimal trading decision of a single user under a large market regime. 
We have compared and contrasted the user's optimal decisions under prospect theory (PT) and expected utility theory (EUT), and have highlighted several key insights. Comparing with an EUT user, a PT user with a high reference point is less willing to buy mobile data and more willing to sell mobile data. Moreover, when the probability of high demand is low, a PT user is more willing to buy mobile data comparing with an EUT user. On the other hand, when the probability of high demand is high, a PT user is less willing to buy mobile data. \res{In addition, we have designed a mobile data trading algorithm to recommend multiple trading decisions based on the user's current usage and risk preferences.} Our results suggested that a  risk  averse dominant user  can  achieve  the  highest  minimum  profit,  a  risk-seeking dominant user can achieve the highest maximum profit, while a risk-neutral user can achieve the highest average profit.

This study demonstrated that a more realistic behavioral modeling based on PT can shed important insights in understanding user's behavior on mobile data trading. \res{In the future work, we will use our app to collect data from the real market to help us understand users' real behaviors in data trading, and study the trading decision equilibria among all the market users and consider the service provider's data plan optimization. It is also interesting to study how the market competition among the service providers and the user-initiated data plan trading (such as that proposed in \cite{7565653}) affect the users' decisions, and how the data trading market affect the operator's other data plans (e.g., shared data plan).}

\bibliographystyle{IEEEtran}
\bibliography{IEEEabrv,mybibfile}



\appendix

\subsection{Proof of Theorem 1}
We divide the feasible interval of buying quantity $q_b$ into $I-\hat{\imath}+1$ sub-intervals, $[0,d_{\hat{\imath}+1}-Q],\ldots, [d_I-Q,\infty)$, and analyze the optimal buying quantity $q_b^*$ that maximizes $U(b,q_b)$ within each sub-interval. Such a division is based on (\ref{eq:ly}) that $L(Q+q_b-d_i)=0$ when $q_b\geq d_i-Q$.

\noindent\emph{1) Buyer's Problem in (\ref{equ:stage2b}) Under PT with $R_p =0$:}

\textbf{Case I}: $q_b\in[d_{j-1}-Q, d_j-Q], ~j=\hat{\imath}+1, \ldots, I$. In this case, the satisfaction loss under low demands is $L(Q+q_b-d_{i})=0$ for $i =1,\ldots,j-1$, and the satisfaction loss under high demands is $L(Q+q_b-d_i)=\kappa(Q+q_b-d_i)$ for $i=j,\ldots,I$. Thus, from (\ref{equ:valuefunction}) and (\ref{equ:stage2b}), we obtain
\begin{align}
U(b,q_b)  =&  \sum_{i=1}^{j-1}w(p_i)v[-\pi q_b]  \notag\\
&+   \sum_{i=j}^I w(p_i)v[\kappa(Q - d_i) - \pi q_b + \kappa q_b]\notag\\
=&- \lambda \sum_{i=1}^{j-1}w(p_i){(\pi q_b)}^{\beta} \notag\\
&- \lambda \sum_{i=j}^{I}w(p_i){[(\pi  -  \kappa)q_b +\kappa(d_i - Q)]}^{\beta}.
\end{align}

The second order partial derivative of $U(b,q_b)$ with respect to $q_b$ is
\begin{align}
&\frac{\partial U^2(q_b )}{\partial q_b^2}  =   -  \sum_{i=1}^{j-1}w( p_i )( \beta - 1 )\lambda\beta {\pi}^{\beta} q_b^{\beta-2}\notag\\
&-   \sum_{i=j}^I w( p_{i} ){(\pi - \kappa)}^2\lambda\beta( \beta - 1 ) {[( \pi - \kappa )q_b + \kappa( d_i - Q )]}^{\beta -2}  \notag\\
&> 0,
\end{align}
\noindent which implies that $U(b,q_b)$ is a convex function in $q_b$, and the optimal solution must lie at one of the boundary points. 
Hence, $q_b^*=\arg\max_{q_b\in\{d_i-Q,i=j-1,j\}} \{U(b,q_b)\}$.

\textbf{Case II}: $q_b\in[0, d_{\hat{\imath}+1} - Q]$. In this case, the satisfaction loss under low demands is $L(Q+q_b-d_{i})=0$ for $i<\hat{\imath}$, and the satisfaction loss under high demands is $L(Q+q_b-d_i)=\kappa(Q+q_b-d_i)$ for $i\geq\hat{\imath}$. Thus, from (\ref{equ:valuefunction}) and (\ref{equ:stage2b}), we obtain
\begin{align}
U(b,q_b)  =&  \sum_{i=1}^{\hat{\imath}-1}w(p_i)v(-\pi q_b)  \notag\\
&+  \sum_{i = \hat{\imath}}^I
v[-\pi q_b  +  L(Q  - d_i  +  q_b)].
\end{align}

Similar to Case I, we can show that $U(b,q_b)$ is a convex function in $q_b$, and the optimal solution must lie at one of the boundary points. 
Hence, we can obtain that $q_b^{*} = \arg\max_{q_b\in\{0,d_{\hat{\imath}+1}-Q\}} \{U(b,q_b)\}$.

\textbf{Case III}: $q_b\in[d_I - Q,\infty]$. In this case, the satisfaction loss is $L(Q+q_b-d_{i})=0$. Thus, the expected utility from (\ref{equ:stage2b}) is decreasing in $q_b$, and the optimal $q_b$ in this range is $q_b^*=d_I-Q$.

Combining the analysis in Cases I, and III, we have
$q_b^*=\arg\max_{q_b\in\{Q-d_i,i=\hat{\imath}+1,\ldots,I\}\cup\{0\}} \{U(b,q_b)\}$.

\noindent\emph{2) Buyer's Problem in (\ref{equ:stage2b}) Under PT with $R_p =\kappa(Q - d_I)$:}
\textbf{Case I}: $q_b\in[d_{j-1}-Q,d_j-Q], ~j=\hat{\imath}+1, \ldots, I$. In this case, the satisfaction loss under low demands is $L(Q+q_b-d_{i})=0$ for $i=1,\ldots,j-1$, and the satisfaction loss under high demands is $L(Q+q_b-d_l)=\kappa(Q+q_b-d_j)$ for $i=j,\ldots,I$. Thus, from (2) and (5), we obtain
\begin{align}
&U(b,q_b) = \sum_{i=1}^{j-1}w(p_i)V[-\pi q_b - \kappa(Q - d_{I})] \notag\\
&+   \sum_{i=j}^I w(p_i)V[\kappa(Q-d_i) - \pi q_b + \kappa q_b  -  \kappa(Q - d_I)]\notag\\
&=\sum_{i=1}^{j-1}w(p_i){[-\pi q_b + \kappa(d_I-Q)]}^{\beta} \notag\\
&+\sum_{i=j}^{I}w(p_i){[(\kappa- \pi)q_b +\kappa(d_I-d_i)]}^{\beta}.
\end{align}

The second order partial derivative of $U(b,q_b)$ with respect to $q_b$ is
\begin{align}
&\frac{\partial U^2(q_b)}{\partial q_b^2} = \sum_{i=1}^{j-1}w(p_i){(-\pi)}^2(\beta - 1)\beta  {[  \kappa (d_I - Q)-\pi q_b]}^{\beta-2} \notag\\
&+  \sum_{i=j}^I w(p_{i}){(\kappa - \pi)}^2(\beta - 1)\beta {[(\kappa - \pi)q_b + \kappa(d_I - d_i)]}^{\beta -2}\notag\\
 &<0,
\end{align}
which implies that $U(b,q_b)$ is a concave function in $q_b$, and the optimal solution must lie at a critical point where $U'(b,q_b)=0$ (if such a point exists in the sub-interval) or one of the boundary points (if a critical point does not exist in the sub-interval). 
Hence, $q_b^*=\arg\max_{q_b\in\{Q-d_i,i\in\mathcal{I}\}\cup\mathcal{X}_b} \{U(b,q_b)\}$.

\textbf{Case II}: $q_b \in[0, d_{\hat{\imath}}-Q]$. In this case, the satisfaction loss under low demands is $L(Q+q_b-d_{i})=0$ for $i<\hat{\imath}$, and the satisfaction loss under high demands is $L(Q+q_b-d_l)=\kappa(Q+q_b-d_i)$ for $i\geq\hat{\imath}$. Thus, from (5) and (2), we obtain
\begin{align}
&U(b,q_b) = \sum_{i=1}^{\hat{\imath}-1}w(p_i)V[-\pi q_b - \kappa(Q  -  d_{I})] \notag\\
&+ \sum_{i=\hat{\imath}}^I w(p_i)V[\kappa(Q - d_i) - \pi q_b + \kappa q_b  -  \kappa(Q - d_I)].
\end{align}

Similar to Case I, we know that $U(b,q_b)$ is a concave function in $q_b$, which implies that $U(b,q_b)$ is a concave function in $q_b$, and the optimal solution must lie at a critical point where $U'(b,q_b)=0$ (if such a point exists in the sub-interval) or one of the boundary points (if a critical point does not exist in the sub-interval).  Hence, $q_b^*=\arg\max_{q_b\in\{0,d_{\hat{\imath}}-Q\}\cup\mathcal{X}_b} \{U(b,q_b)\}$.

\textbf{Case III}: $q_b\in[d_I - Q,\infty]$. In this case, the satisfaction loss is $L(Q+q_b-d_{i})=0$. Thus, the expected utility from (5) is decreasing in $q_b$, and the optimal $q_b$ in this range is $q_b^*=d_I-Q$.

Combining the analysis in Case I, Case II, and Case III, we can obtain that
 $q_b^*=\arg\max_{q_b\in\{Q-d_i,i\in\mathcal{I}\}\cup\mathcal{X}_b} \{U(b,q_b)\}$.

\subsection{Proof of Theorem 2}

For all three cases, we divide the feasible interval of buying quantity $q_b$ into two sub-intervals, $[0,d_h-Q]$ and $[d_h-Q,\infty)$, and analyze the optimal buying quantity $q_b^*$ that maximizes $U(b,q_b)$ within each sub-interval. Such a division is due to the fact that the satisfaction loss $L(Q+q_b-d_h)=0$ when $q_b\in [d_h-Q,\infty)$.
\subsubsection{Buyer's Problem Under EUT}
\begin{itemize}
	\item Case I: $q_b\in [0,d_h-Q]$. In this case, from (1), the satisfaction loss under low demand is $L(Q+q_b-d_l)=0$, and the satisfaction loss under high demand is $L(Q+q_b-d_h)=\kappa(Q+q_b-d_h)$. The expected utility from (5) is  
	\begin{align}
	U(b,q_b) =(\kappa p-\pi_s^{\min})q_b+\kappa p(Q-d_h),
	\end{align}
	which is a linear function in $q_b$. It is increasing in $q_b$ when $\pi_s^{\min}<\kappa p$, and decreasing in $q_b$ when $\pi_s^{\min}>\kappa p$. The optimal buying quantity is then $q_b^*=d_{h}-Q$ when $\pi_s^{\min}<\kappa p$, and $q_b^*=0$ when $\pi_s^{\min}> \kappa p$.  When $\pi_s^{\min}=\kappa p$, the utility is independent of $q_b$. Without loss of generality, we assume that $q_b^*=0$ when $\pi_s^{\min}=\kappa p$.
	\item Case II: $q_b\in [d_h-Q,\infty)$. In this case, the satisfaction loss under both low demand and high demand equals to $0$, and the utility $U(b,q_b)=-\pi_s^{\min} q_b$. Since the utility function $U(b,q_b)$ is linearly decreasing in $q_b$, we have $q_b^*=d_{h}-Q$ in this case.
\end{itemize}
Combing the above analysis, we obtain (11).

\subsubsection{Buyer's Problem Under PT with $R_p=0$ }
\begin{itemize}
	\item Case I: $q_b\in [0,d_h-Q]$. In this case, from (1), the satisfaction loss under low demand is $L(Q+q_b-d_l)=0$, and the satisfaction loss under high demand is $L(Q+q_b-d_h)=\kappa(Q+q_b-d_h)$. The expected utility from (5) is  
		\begin{align}
	U(b,q_b)=&-\lambda(\pi_s^{\min} q_b-\kappa (Q+q_b-d_{h}))^{\beta}w(p)\notag\\
	&-\lambda(\pi_s^{\min} q_b)^{\beta}w(1-p).
	\end{align}
	The second order partial derivative of $U(b,q_b)$ with respect to $q_b$ is 
\begin{align}
   &\frac{\partial U^2 (b, q_b )}{\partial^2 q_b}    =  - \lambda\beta(\beta  -  1) \Bigl[( \pi_s^{\min} )^\beta(q_b)^{\beta}w( 1  -  p ) \notag\\
   &+ ( \pi_s^{\min}  - \kappa  )^2[( \pi_s^{\min}  - \kappa  )q_b  - \kappa ( Q -  d_{h} )]^{\beta - 2}w( p ) \Bigr]\notag\\
   &>0,
\end{align}
which implies that $U(b, q_b)$ is a convex function in $q_b$, and the optimal solution must lie at one of the boundary points.\footnote{In the case $\beta =1$ and $U(b, 0)=U(b, d_h-Q)$, we will choose $q_b^*=0$ without loss of generality.} Hence $q_b^*= d_h-Q$ if $U(b, 0)<U(b, d_h-Q)$, and $q_b^*=0$ if $U(b, 0)\geq U(b, d_h-Q)$.
	
	\item Case II: $q_b\in [d_h-Q,\infty)$. In this case, the satisfaction loss under both low demand and high demand equals to $0$, and the expected utility is 
	\begin{align}
	U(b,q_b)=-\lambda[w(p)+w(1-p)](\pi_s^{\min} q_b)^{\beta}.
	\end{align} 
Since the first order partial derivative ${\partial U(b,q_b)}/{\partial q_b}<0$, $U(b,q_b)$ is a decreasing function of $q_b$, and $q_b^*=d_{h}-Q$ in this case.
\end{itemize}

Combing the above analysis, we obtain (12).

\subsubsection{Buyer's Problem Under PT with $R_p =\kappa(Q-d_h)$}

\begin{itemize}
	\item Case I: $q_b\in [0,d_h-Q]$. In this case, the satisfaction loss under low demand is $L(Q+q_b-d_l)=0$, and that under high demand is $L(Q+q_b-d_h)=\kappa(Q+q_b-d_h)$. The expected utility is 
		\begin{align}
	U(b,q_b)=&-\lambda(\pi_s^{\min} q_b+\kappa(d_h-Q))^{\beta}w(1-p)
	\notag\\
	&-\lambda((\kappa-\pi_s^{\min}) q_b)^{\beta}w(p).
	\end{align}
The second order partial derivative of $U(b,q_b)$ with respect to $q_b$ is 
	\begin{align}
	&\frac{\partial U^2(b,q_b)}{\partial^2 q_b} =\beta(\beta - 1)\{w(p)(\kappa - \pi_s^{\min})^{\beta}q_b^{\beta-2}\notag\\
	&+ w(1 - p)(\pi_s^{\min})^2[\kappa (d_{h} - Q) - \pi_s^{\min}q_b]^{\beta - 2}\} \notag\\
	&<0,
	\end{align}
	so $U(b, q_b)$ is a strictly concave function of $q_b$. As a result, the optimal solution $q_b^*$ satisfies the first order condition or lies at one of the boundary points. 
	
	We consider the first order partial derivative of $U(b,q_b)$ with respect to $q_b$:
		\begin{align}
	&\frac{\partial U(b,q_b)}{\partial q_b}=\beta\{w(p)(\kappa-\pi_s^{\min})^{\beta}q_b^{\beta-1}\notag\\
	&+ w(1 - p)(\pi_s^{\min})[\kappa (d_{h} - Q) - \pi_s^{\min}q_b]^{\beta-1}\}.
	\end{align}
	\begin{itemize}
	\item If $\beta = 1$, ${\partial U(b,q_b)}/{\partial q_b}$ is independent of $q_b$. When $\pi_s^{\min}<\frac{\kappa w(p)}{w(p)+w(1-p)}$, ${\partial U(b,q_b)}/{\partial q_b}>0$, so $q_b^*=d_{h}-Q$. When $\pi_s^{\min}\geq \frac{\kappa w(p)}{w(p)+w(1-p)}$, ${\partial U(b,q_b)}/{\partial q_b}\leq0$, so $q_b^*=0$. 
	\item If $0<\beta<1$, solving ${\partial U(b,q_b)}/{\partial q_b}=0$, we have $\tilde{q}_b=\frac{\kappa(Q-d_h)}{\left[\frac{w(p)(\kappa-\pi_s^{\min})^{\beta}}{w(1-p)\pi_s^{\min}}\right]^{\frac{1}{\beta-1}} + \pi_s^{\min}}>0$. If $\tilde{q}_b<d_h-Q$, then the optimal solution $q_b^\ast = \tilde{q}_b$. Otherwise, $q_b^\ast = d_h-Q$. 
	\end{itemize}
	\item Case II: $q_b\in [d_h-Q,\infty)$. In this case, the satisfaction losses under both low demand and high demand equal to $0$, and the expected utility is 
	\begin{align}
	U(b,q_b)=-\lambda[w(p)+w(1 - p)](\pi_s^{\min} q_b+\kappa(d_h - Q))^{\beta}.
	\end{align} 
Since the first order partial derivative ${\partial U(b,q_b)}/{\partial q_b}<0$, the utility function $U(b,q_b)$ is a decreasing function of $q_b$, so $q_b^*=d_{h}-Q$ in this case.
	\end{itemize}

Combing the above analysis, we obtain (13).

\subsection{Proof of Theorem 3}
In the proof, we divide the feasible interval of selling quantity $q_s$ into $\hat{\imath}+1$ sub-intervals, $[0,Q-d_{\hat{\imath}}],\ldots, [Q-d_2,Q-d_1], [Q-d_1,\infty)$, and analyze the optimal buying quantity $q_s^*$ that maximizes $U(b,q_s)$ within each sub-interval. Such a division is based on the fact that the satisfaction loss $L(Q+q_s-d_i)=0$ when $q_s\geq d_i-Q$.

\noindent\emph{1) Seller's Problem in (\ref{equ:stage2s}) Under PT with $R_p =0$:}

\textbf{Case I}: $q_s\in[Q-d_{j},Q-d_{j-1}], ~j=2, \ldots, \hat{\imath}$. In this case, the satisfaction loss under low demands is $L(Q-q_s-d_{i})=0$ for $i =1,\ldots,j$, and the satisfaction loss under high demands is $L(Q-q_s-d_i)=\kappa(Q-q_s-d_i)$ for $i=j+1,\ldots,I$. The expected utility from (\ref{equ:stage2s}) is
\begin{align}
U&(s,q_s) = \sum_{i=1}^I w(p_i)v[\pi q_s+L(Q-q_s-d_i)]\notag\\
=&\sum_{i=1}^j w(p_i)v[\pi q_s] + \sum_{i=j+1}^I w(p_i)v[(\pi-\kappa)q_s+\kappa(Q-d_i)]\notag\\
=&\sum_{i=1}^j w(p_i){(\pi q_s)}^{\beta} +\sum_{i=j+1}^{\hat{\imath}}w(p_i)v[(\pi - \kappa)q_s  +  \kappa(Q - d_i)] \notag\\
&-  \sum_{i=\hat{\imath}+1}^{I}w(p_i)(\lambda){[(\kappa - \pi)q_s  + \kappa(d_i - Q)]}^{\beta}.\label{eq:32}
\end{align}

Since the sign of function $v[(\pi-\kappa)q_s+\kappa(Q-d_i)]$ depends on the value of $q_s$, $Q-d_j$, $Q-d_{j-1}$, $\frac{\kappa(Q-d_{j+1})}{\kappa-\pi}$, $\frac{\kappa(Q-d_{\hat{\imath}})}{\kappa-\pi}$, we divide the discussion in the following six cases:

(a) $\frac{\kappa(Q-d_{j+1})}{k-\pi}<Q-d_j$. In this case, $q_s\in(\frac{\kappa(Q-d_{j+1})}{\kappa-\pi},Q-d_{j-1}]$, and we have $v[(\pi-\kappa)q_s+\kappa(Q-d_i)]<0, ~\forall i = j+1,\ldots,\hat{\imath}$. 
The first order derivative 
\begin{align}
&\frac{\partial U(s,q_s)}{\partial q_s} = \sum_{i=1}^{\hat{\imath}}w(p_i){\pi}^{\beta} \beta q_s^{\beta-1}\notag\\
&-    \sum_{i=\hat{\imath}+1}^I    w(p_i)\lambda(\kappa - \pi)\beta{[(\kappa - \pi)q_s  + \kappa(d_i - Q)]}^{\beta-1}\notag\\
&= \sum_{i=1}^jw(p_1){\pi}^{\beta}\beta q_s^{\beta-1}\bigg[1 -   \sum_{i=j+1}^{I}  \frac{w(p_i)}{\sum_{i=1}^j w(p_i)}\lambda\beta(\frac{k}{\pi} \notag\\
&- 1){\left[\frac{k}{\pi} - 1 + \frac{k(d_i - Q)}{\pi q_s}\right]}^{\beta-1}\bigg],
\end{align}
where $$\sum_{i=j+1}^{I}\frac{w(p_i)}{\sum_{i=1}^jw(p_i)}(-\lambda)\beta(\frac{k}{\pi}-1){\left[\frac{k}{\pi}-1 +\frac{k(d_i-Q)}{\pi q_s}\right]}^{\beta-1}$$ is increasing in $q_s$. This indicates that the equation $U'(s,q_s)=0$ has at most one solution $q_s$. Hence, we can find that the optimal $q_s^*$ is the solution of $U'(s,q_s)=0$, or at one of the boundary points $\kappa(Q-d_{j+1})/(\kappa-\pi)$ or $Q-d_{j-1}$.

(b) $\frac{\kappa(Q-d_{\hat{\imath}})}{k-\pi}<Q-d_j<\frac{\kappa(Q-d_{j+1})}{k-\pi}<Q-d_{j-1}$. In this case, without loss of generality, we assume $\frac{\kappa(Q-d_{\hat{m+1}})}{k-\pi}<Q-d_j<\frac{\kappa(Q-d_{\hat{m}})}{k-\pi}$. When $q_s\in [\frac{\kappa(Q-d_{m+1})}{\kappa-\pi},\frac{\kappa(Q-d_{m})}{\kappa-\pi}], m=j+1,\ldots,\hat{m}$, we have $v[(\pi-\kappa)q_s+\kappa(Q-d_i)]\leq0$ for $i = m+1,\ldots,\hat{\imath}$, and $v[(\pi-\kappa)q_s+\kappa(Q-d_i)]\geq0$ for $i = j+1,\ldots,m$. From (\ref{eq:32}), we obtain
\begin{align}
&\frac{\partial U(s,q_s)}{\partial q_s}=\sum_{i=1}^jw(p_i){\pi}^\beta\beta q_s^{\beta-1} \notag\\
&+   \sum_{i=j+1}^m  w(p_i)\beta(\pi - \kappa)[(\pi - \kappa)q_s + \kappa(Q - d_i)]^{\beta-1}\notag\\
&-    \sum_{i=m+1}^I    w(p_i)\beta(\kappa - \pi)\lambda[(\kappa - \pi) q_s - \kappa(Q - d_i)]^{\beta-1}\notag\\
&=\sum_{i=1}^jw(p_i){\pi}^\beta\beta q_s^{\beta-1}\Bigg[1 \notag\\
&+   \sum_{i=j+1}^m  \frac{w(p_i)}{\sum\limits_{i=1}^j  w(p_i)}(1 - \frac{\kappa}{\pi}) \left[1 - \frac{\kappa}{\pi} + \frac{\kappa(Q - d_i)}{\pi q_s}\right]^{\beta-1}
\notag\\
&-     \sum_{i=m+1}^I   \frac{w(p_i)}{\sum\limits_{i=1}^j  w(p_i)}(\frac{\kappa}{\pi}  -  1)\lambda \left[\frac{\kappa}{\pi}  -  1  -  \frac{\kappa(Q - d_i)}{\pi q_s}\right]^{\beta-1}  \Bigg],
\end{align}
which follows a unimodal structure, and we can find that the optimal $q_s^*$ must lie at the critical point (i.e., the solution of $U'(s,q_s)=0$), or at one of the boundaries $\frac{\kappa(Q-d_{m+1})}{\kappa-\pi}$ or $\frac{\kappa(Q-d_{m})}{\kappa-\pi}$.
When $q_s\in(\frac{\kappa(Q-d_{j+1})}{\kappa-\pi},Q-d_{j-1}]$, we have $v[(\pi-\kappa)q_s+\kappa(Q-d_i)]<0, ~\forall i = j+1,\ldots,\hat{\imath}$. 
The first order derivative 
\begin{align}
&\frac{\partial U(s,q_s)}{\partial q_s} = \sum_{i=1}^{\hat{\imath}}w(p_i){\pi}^{\beta} \beta q_s^{\beta-1} \notag\\
&-   \sum_{i=\hat{\imath}+1}^I   w(p_i)\lambda(\kappa - \pi)\beta{[(\kappa - \pi)q_s  + \kappa(d_i - Q)]}^{\beta-1}\notag\\
&= \sum_{i=1}^jw(p_1){\pi}^{\beta}\beta q_s^{\beta-1}\bigg[-   \sum_{i=j+1}^{I} \frac{w(p_i)}{\sum_{i=1}^jw(p_i)}\lambda\beta(\frac{k}{\pi}\notag\\
&- 1){\left[\frac{k}{\pi}-1 +\frac{k(d_i - Q)}{\pi q_s}\right]}^{\beta-1} + 1\bigg],
\end{align}
where $$\sum_{i=j+1}^{I}\frac{w(p_i)}{\sum_{i=1}^jw(p_i)}(-\lambda)\beta(\frac{k}{\pi}-1){\left[\frac{k}{\pi}-1 +\frac{k(d_i-Q)}{\pi q_s}\right]}^{\beta-1}$$ is increasing in $q_s$. This indicates that the equation $U'(s,q_s)=0$ has at most one solution $q_s$. Hence, we can find that the optimal $q_s^*$ is the solution of $U'(s,q_s)=0$, or at one of the boundary points $\kappa(Q-d_{j+1})/(\kappa-\pi)$ or $Q-d_{j-1}$.

(c) $Q-d_j<\frac{\kappa(Q-d_{\hat{\imath}})}{k-\pi}<Q-d_{j-1}<\frac{\kappa(Q-d_{j+1})}{k-\pi}$. In this case, without loss of generality, we assume $\frac{\kappa(Q-d_{\hat{m+1}})}{k-\pi}<Q-d_{j-1}<\frac{\kappa(Q-d_{\hat{m}})}{k-\pi}$. When $q_s\in[Q-d_{j},\frac{\kappa(Q-d_{\hat{\imath}})}{\kappa-\pi})$, we have $v[(\pi-\kappa)q_s+\kappa(Q-d_i)]>0, ~\forall i = j+1,\ldots,\hat{\imath}$. From (\ref{eq:32}), we obtain
\begin{align}
&\frac{\partial U(s,q_s)}{\partial q_s} = \sum_{i=1}^{\hat{\imath}}w(p_i){\pi}^{\beta} \beta q_s^{\beta-1} \notag\\
&+  \sum_{i=\hat{\imath}+1}^I (-\lambda)(\kappa - \pi)\beta{[(\kappa - \pi)q_s  + \kappa(d_i - Q)]}^{\beta-1},
\end{align}
which also follows a unimodal structure, and we can find that the optimal $q_s^*$ must lie at the critical point (i.e., the solution of $U'(s,q_s)=0$), or at one of the boundaries $Q-d_{j}$ or $\frac{\kappa(Q-d_{\hat{\imath}})}{\kappa-\pi}$.
When $q_s\in [\frac{\kappa(Q-d_{m+1})}{\kappa-\pi},\frac{\kappa(Q-d_{m})}{\kappa-\pi}], m=\hat{m}+1,\ldots,\hat{\imath-1}$, we have $v[(\pi-\kappa)q_s+\kappa(Q-d_i)]\leq0$ for $i = m+1,\ldots,\hat{\imath}$, and $v[(\pi-\kappa)q_s+\kappa(Q-d_i)]\geq0$ for $i = j+1,\ldots,m$. From (\ref{eq:32}), we obtain
\begin{align}
&\frac{\partial U(s,q_s)}{\partial q_s}=\sum_{i=1}^jw(p_i){\pi}^\beta\beta q_s^{\beta-1} \notag\\
&+   \sum_{i=j+1}^m  w(p_i)\beta(\pi - \kappa)[(\pi - \kappa)q_s + \kappa(Q - d_i)]^{\beta-1}\notag\\
&-    \sum_{i=m+1}^I   w(p_i)\beta(\kappa - \pi)\lambda[(\kappa - \pi) q_s - \kappa(Q - d_i)]^{\beta-1}\notag\\
&=\sum_{i=1}^jw(p_i){\pi}^\beta\beta q_s^{\beta-1}\Bigg[1 + \sum_{i=j+1}^m\frac{w(p_i)}{\sum_{i=1}^jw(p_i)}(1 \notag\\
&- \frac{\kappa}{\pi})\left[1 - \frac{\kappa}{\pi}+\frac{\kappa(Q - d_i)}{\pi q_s}\right]^{\beta-1}
\notag\\
&-     \sum_{i=m+1}^I   \frac{w(p_i)}{\sum\limits_{i=1}^jw(p_i)}(\frac{\kappa}{\pi}  -  1)\lambda \left[\frac{\kappa}{\pi}  -  1  -  \frac{\kappa(Q - d_i)}{\pi q_s}\right]^{\beta-1}  \Bigg],
\end{align}
which follows a unimodal structure, and we can find that the optimal $q_s^*$ must lie at the critical point (i.e., the solution of $U'(s,q_s)=0$), or at one of the boundaries $\frac{\kappa(Q-d_{m+1})}{\kappa-\pi}$ or $\frac{\kappa(Q-d_{m})}{\kappa-\pi}$.

(d) $Q-d_j<\frac{\kappa(Q-d_{\hat{\imath}})}{k-\pi}<\frac{\kappa(Q-d_{j+1})}{k-\pi}<Q-d_{j-1}$. When $q_s\in(\frac{\kappa(Q-d_{j+1})}{\kappa-\pi},Q-d_{j-1}]$, we have $v[(\pi-\kappa)q_s+\kappa(Q-d_i)]<0, ~\forall i = j+1,\ldots,\hat{\imath}$. 
The first order derivative 
\begin{align}
&\frac{\partial U(s,q_s)}{\partial q_s} = \sum_{i=1}^{\hat{\imath}}w(p_i){\pi}^{\beta} \beta q_s^{\beta-1} \notag\\
&-   \sum_{i=\hat{\imath}+1}^I   w(p_i)\lambda(\kappa - \pi)\beta{[(\kappa - \pi)q_s + \kappa(d_i - Q)]}^{\beta-1}\notag\\
&= \sum_{i=1}^jw(p_1){\pi}^{\beta}\beta q_s^{\beta-1}\bigg[1 -    \sum_{i=j+1}^{I} \frac{w(p_i)}{\sum_{i=1}^jw(p_i)}\lambda\beta(\frac{k}{\pi}\notag\\
&-1)\left[\frac{k}{\pi}
\notag\-1 +\frac{k(d_i - Q)}{\pi q_s}\right]^{\beta-1}\bigg],
\end{align}
where $$\sum_{i=j+1}^{I}\frac{w(p_i)}{\sum_{i=1}^jw(p_i)}(-\lambda)\beta(\frac{k}{\pi}-1){\left[\frac{k}{\pi}-1 +\frac{k(d_i-Q)}{\pi q_s}\right]}^{\beta-1}$$ is increasing in $q_s$. This indicates that the equation $U'(s,q_s)=0$ has at most one solution $q_s$. Hence, we can find that the optimal $q_s^*$ is the solution of $U'(s,q_s)=0$, or at one of the boundary points $\kappa(Q-d_{j+1})/(\kappa-\pi)$ or $Q-d_{j-1}$.
When $q_s\in [\frac{\kappa(Q-d_{m+1})}{\kappa-\pi},\frac{\kappa(Q-d_{m})}{\kappa-\pi}], m=j+1,\ldots,\hat{\imath}-1$, we have $v[(\pi-\kappa)q_s+\kappa(Q-d_i)]\leq0$ for $i = m+1,\ldots,\hat{\imath}$, and $v[(\pi-\kappa)q_s+\kappa(Q-d_i)]\geq0$ for $i = j+1,\ldots,m$. From (\ref{eq:32}), we obtain
\begin{align}
&\frac{\partial U(s,q_s)}{\partial q_s}=\sum_{i=1}^jw(p_i){\pi}^\beta\beta q_s^{\beta-1}\notag\\
&+  \sum_{i=j+1}^m w(p_i)\beta(\pi - \kappa)[(\pi - \kappa)q_s + \kappa(Q - d_i)]^{\beta-1}\notag\\
&-    \sum_{i=m+1}^I   w(p_i)\beta(\kappa - \pi)\lambda[(\kappa - \pi) q_s - \kappa(Q - d_i)]^{\beta-1}\notag\\
&=\sum_{i=1}^jw(p_i){\pi}^\beta\beta q_s^{\beta-1}\Bigg[1 \notag\\
&+   \sum_{i=j+1}^m  \frac{w(p_i)}{\sum_{i=1}^jw(p_i)}(1  -  \frac{\kappa}{\pi}) \left[1  -  \frac{\kappa}{\pi}  +  \frac{\kappa(Q - d_i)}{\pi q_s}\right]^{\beta-1}
\notag\\
&+\sum_{i=m+1}^I\frac{w(p_i)}{\sum_{i=1}^jw(p_i)}(\frac{\kappa}{\pi} - 1)(-\lambda)\bigg[\frac{\kappa}{\pi} \notag\\
&- 1-\frac{\kappa(Q - d_i)}{\pi q_s}\bigg]^{\beta-1}\Bigg],
\end{align}
which follows a unimodal structure, and we can find that the optimal $q_s^*$ must lie at the critical point (i.e., the solution of $U'(s,q_s)=0$), or at one of the boundaries $\frac{\kappa(Q-d_{m+1})}{\kappa-\pi}$ or $\frac{\kappa(Q-d_{m})}{\kappa-\pi}$.
When $q_s\in[Q-d_{j},\frac{\kappa(Q-d_{\hat{\imath}})}{\kappa-\pi})$, we have $v[(\pi-\kappa)q_s+\kappa(Q-d_i)]>0, ~\forall i = j+1,\ldots,\hat{\imath}$. From (\ref{eq:32}), we obtain
\begin{align}
&\frac{\partial U(s,q_s)}{\partial q_s} = \sum_{i=1}^{\hat{\imath}}w(p_i){\pi}^{\beta} \beta q_s^{\beta-1} \notag\\
&+   \sum_{i=\hat{\imath}+1}^I(-\lambda)(\kappa - \pi)\beta{[(\kappa - \pi)q_s  + \kappa(d_i - Q)]}^{\beta-1},
\end{align}
which also follows a unimodal structure, and we can find that the optimal $q_s^*$ must lie at the critical point (i.e., the solution of $U'(s,q_s)=0$), or at one of the boundaries $Q-d_{j}$ or $\frac{\kappa(Q-d_{\hat{\imath}})}{\kappa-\pi}$.

(e) $\frac{\kappa(Q-d_{\hat{\imath}})}{k-\pi}<Q-d_j<Q-d_{j-1}<\frac{\kappa(Q-d_{j+1})}{k-\pi}$. In this case, without loss of generality, we assume $\frac{\kappa(Q-d_{\hat{m+1}})}{k-\pi}<Q-d_{j}<\frac{\kappa(Q-d_{\hat{m}})}{k-\pi}$, and $\frac{\kappa(Q-d_{\hat{l+1}})}{k-\pi}<Q-d_{j}<\frac{\kappa(Q-d_{\hat{l}})}{k-\pi}$.
When $q_s\in [\frac{\kappa(Q-d_{m+1})}{\kappa-\pi},\frac{\kappa(Q-d_{m})}{\kappa-\pi}], m=\hat{m+1},\ldots,\hat{l}$, we have $v[(\pi-\kappa)q_s+\kappa(Q-d_i)]\leq0$ for $i = m+1,\ldots,\hat{\imath}$, and $v[(\pi-\kappa)q_s+\kappa(Q-d_i)]\geq0$ for $i = j+1,\ldots,m$. From (\ref{eq:32}), we obtain
\begin{align}
&\frac{\partial U(s,q_s)}{\partial q_s}=\sum_{i=1}^jw(p_i){\pi}^\beta\beta q_s^{\beta-1} \notag\\
&+   \sum_{i=j+1}^m w(p_i)\beta(\pi - \kappa)[(\pi - \kappa)q_s + \kappa(Q - d_i)]^{\beta-1}\notag\\
&-     \sum_{i=m+1}^I  w(p_i)\beta(\kappa - \pi)\lambda[(\kappa - \pi) q_s - \kappa(Q - d_i)]^{\beta-1}\notag\\
&=\sum_{i=1}^jw(p_i){\pi}^\beta\beta q_s^{\beta-1}\Bigg[1 \notag\\
&+     \sum_{i=j+1}^m   \frac{w(p_i)}{\sum_{i=1}^jw(p_i)}(1 - \frac{\kappa}{\pi})  \left[1  -  \frac{\kappa}{\pi}  +  \frac{\kappa(Q - d_i)}{\pi q_s}\right]^{\beta-1}
\notag\\
&-       \sum_{i=m+1}^I   \frac{w(p_i)\lambda}{\sum_{i=1}^jw(p_i)}(\frac{\kappa}{\pi}  -  1)  \left[ \frac{\kappa}{\pi}  -  1  -  \frac{\kappa(Q - d_i)}{\pi q_s}\right]^{\beta-1}   \Bigg],
\end{align}
which follows a unimodal structure, and we can find that the optimal $q_s^*$ must lie at the critical point (i.e., the solution of $U'(s,q_s)=0$), or at one of the boundaries $\frac{\kappa(Q-d_{m+1})}{\kappa-\pi}$ or $\frac{\kappa(Q-d_{m})}{\kappa-\pi}$.

(f) $\frac{\kappa(Q-d_{\hat{\imath}})}{k-\pi}>Q-d_{j-1}$. In this case, $q_s\in[Q-d_{j},\frac{\kappa(Q-d_{\hat{\imath}})}{\kappa-\pi})$, and we have $v[(\pi-\kappa)q_s+\kappa(Q-d_i)]>0, ~\forall i = j+1,\ldots,\hat{\imath}$. From (\ref{eq:32}), we obtain
\begin{align}
&\frac{\partial U(s,q_s)}{\partial q_s} = \sum_{i=1}^{\hat{\imath}}w(p_i){\pi}^{\beta} \beta q_s^{\beta-1} 
\notag\\
&+    \sum_{i=\hat{\imath}+1}^I (-\lambda)(\kappa - \pi)\beta{[(\kappa - \pi)q_s  + \kappa(d_i - Q)]}^{\beta-1},
\end{align}
which also follows a unimodal structure, and we can find that the optimal $q_s^*$ must lie at the critical point (i.e., the solution of $U'(s,q_s)=0$), or at one of the boundaries $Q-d_{j}$ or $\frac{\kappa(Q-d_{\hat{\imath}})}{\kappa-\pi}$.

Hence, combining the above six cases, we have the optimal selling quantity $$q_s^*=\arg\max_{q_s\in\{Q-d_i,i=1,\ldots,\hat{\imath}\}\cup\mathcal{X}_{sh}} \{U(s,q_s)\}.$$

\textbf{Case II}: $q_s\in[Q-d_1,\infty)$. In this case, the expected utility from (\ref{equ:stage2s}) is decreasing in $q_s$, hence the optimal $q_s$ in this range is $q_s^*=Q-d_1$.

\textbf{Case III}: $q_s\in[0,Q-d_{\hat{\imath}}]$. In this case, the satisfaction loss under low demands is $L(Q-q_s-d_{i})=0$ for $i=1,\ldots,\hat{\imath}$, and the satisfaction loss under high demands is $L(Q-q_s-d_{i})=\kappa(Q-q_s-d_i)$ for $i=\hat{\imath}+1,\ldots,I$. The expected utility from (\ref{equ:stage2s}) is 
\begin{align}
&U(s,q_s)  = \sum_{i=1}^{\hat{\imath}} w(p_i)v[\pi q_s]  \notag\\
&+     \sum_{i=\hat{\imath}+1}^I   w(p_i)v[(\pi - \kappa)q_s + \kappa(Q - d_i)]\notag\\
&=\sum_{i=1}^{\hat{\imath}} w(p_i){(\pi q_s)}^{\beta} \notag\\
&-    \sum_{i=\hat{\imath}+1}^{I}w(p_i)\lambda{[(\kappa - \pi)q_s  + \kappa(d_i - Q)]}^{\beta}.
\end{align}
The first order derivative $\partial U(s,q_s)/\partial q_s$ follows a unimodal structure, and we can find that $q_s^*$ must lie at the critical point (i.e., the solution of $U'(s,q_s)=0$), or at one of the boundaries $0$ or $Q-d_{\hat{\imath}}$.

Combining the analysis in Cases I, II, and III, we have\\
$q_s^*=\arg\max_{q_s\in\{Q-d_i,i=1,\ldots,\hat{\imath}\}\cup\mathcal{X}_{sh}\cup\{0\}} \{U(s,q_s)\}$.

\noindent\emph{2) Seller's Problem in (\ref{equ:stage2s}) Under PT with $R_p =\kappa(Q - d_I)$:}
\textbf{Case I}: $q_s\in[Q-d_{j},Q-d_{j-1}], ~j=2,\ldots,\hat{\imath}$. In this case, the satisfaction loss under low demands is $L(Q+q_s-d_{i})=0$ for $i=1,\ldots,j-1$, and the satisfaction loss under high demands is $L(Q+q_s-d_l)=\kappa(Q+q_s-d_j)$ for $i=j,\ldots,I$. The expected utility from (6) is
\begin{align}
&U(s,q_s) = \sum_{i=1}^{j-1}w(p_i)V[-\pi q_s - \kappa(Q - d_{I})] \notag\\
&+  \sum_{i=j}^I w(p_i)V[\kappa(Q-d_i) - \pi q_s + \kappa q_s  -  \kappa(Q - d_I)]\notag\\
&=\sum_{i=1}^{j-1}w(p_i){[-\pi q_s + \kappa(d_I-Q)]}^{\beta}\notag\\
&+\sum_{i=j}^{I}w(p_i){[(\kappa- \pi)q_s +\kappa(d_I-d_i)]}^{\beta}.
\end{align}

The second order partial derivative of $U(s,q_s)$ with respect to $q_s$ is
\begin{align}
&\frac{\partial U^2(q_s)}{\partial q_s^2}  =  \sum_{i=1}^{j-1}w(p_i){(-\pi)}^2(\beta-1)\beta  {[  \kappa (d_I - Q)-\pi q_s  ]}^{\beta-2} \notag\\
&+   \sum_{i=j}^I w(p_{i}){(\kappa - \pi)}^2(\beta - 1)\beta {[(\kappa - \pi)q_s + \kappa(d_I - d_i)]}^{\beta -2} \notag\\
&< 0,
\end{align}
which implies that $U(s,q_s)$ is a concave function in $q_s$, and the optimal solution must lie at the critical point, where $U'(s,q_s)=0$, or one of the boundary points. 

Hence, $q_s^*=\arg\max_{q_s\in\{Q-d_i,i\in\mathcal{I}\}\cup\mathcal{X}_{sl}} \{U(s,q_s)\}$.

\textbf{Case II}: $q_s\in[Q-d_1,\infty)$. In this case, the expected utility from (6) is decreasing in $q_s$, hence the optimal $q_s$ in this range is $q_s^*=Q-d_1$.

\textbf{Case III}: $ q_s\in[0, Q-d_{\hat{\imath}}]$. In this case, the satisfaction loss under low demands is $L(Q+q_s-d_{i})=0$ for $i<\hat{\imath}$, and the satisfaction loss under high demands is $L(Q+q_s-d_l)=\kappa(Q+q_s-d_i)$ for $i\geq\hat{\imath}$. The expected utility from (6) is
\begin{align}
&U(s,q_s) = \sum_{i=1}^{\hat{\imath-1}}w(p_i)V[-\pi q_s - \kappa(Q - d_{I})] \notag\\
&+ \sum_{i=\hat{\imath}}^I w(p_i)V[\kappa(Q - d_i) - \pi q_s + \kappa q_s  -  \kappa(Q - d_I)].
\end{align}

Similar to Case I, we know that $U(s,q_s)$ is a concave function in $q_s$, and the optimal solution must lie at the critical point, where $U'(s,q_s)=0$, or one of the boundary points. 

Hence, $q_s^*=\arg\max_{q_s\in\{0,Q-d_{\hat{\imath}}\}\cup\mathcal{X}_{sh}\cup\{0\}} \{U(s,q_s)\}$.

Combining the analysis in Case I, Case II, and Case III, we have\\ 
 $q_s^*=\arg\max_{q_s\in\{Q-d_i,i\in\mathcal{I}\}\cup\mathcal{X}_{sh}\cup\{0\}} \{U(s,q_s)\}$.

\subsection{Proof of Theorem 4}
In the proof of all three cases, we divide the feasible set of selling quantity $q_s$ into two subsets, $[0,Q-d_l]$ and $[Q-d_l,\infty)$, and analyze the optimal selling quantity $q_s^*$ that maximizes $U(s,q_s)$ within each subset. Such a division is due to the fact that the satisfaction loss $L(Q-q_s-d_l)=\kappa(Q-q_s-d_l)$ when $q_s\in [Q-d_l,\infty)$ which simplifies our analysis.

\subsubsection{Seller's Problem under EUT}
\begin{itemize}
	\item Case I: $q_s\in [0,Q-d_l]$. In this case, the satisfaction loss under low demand is $L(Q-q_s-d_l)=0$, and the satisfaction loss under high demand is $L(Q-q_s-d_h)=\kappa(Q-q_s-d_h)$. The expected utility is
	\begin{align}
	U(s,q_s) =(\pi_b^{\max}-\kappa p)q_s+\kappa p(Q-d_h),
	\end{align}
	which is a linear function in $q_s$, hence increasing in $q_s$ when $\pi_b^{\max}>\kappa p$ and decreasing in $q_s$, when $\pi_b^{\max}<\kappa p$. The optimal selling quantity is then $q_s^*=Q-d_{l}$ when $\pi_b^{\max}>\kappa p$ and $q_s^*=0$ when $\pi_b^{\max}<\kappa p$. When $\pi_b^{\max}=\kappa p$, the utility is independent of $q_s$. Without loss of generality, we assume that $q_s^*=0$ when $\pi_b^{\max}=\kappa p$.
	\item Case II: $q_s\in [Q-d_l,\infty)$. In this case, the satisfaction loss under low demand is $L(Q-q_s-d_l)=\kappa(Q-q_s-d_l)$, and the satisfaction loss under high demand is $L(Q-q_s-d_h)=\kappa(Q-q_s-d_h)$. The expected utility is  
\begin{align}
	U(s,q_s)=(\pi_b^{\max}-\kappa)q_s+\kappa Q-\kappa d_h p-\kappa d_l(1-p).
\end{align}	
 Since the utility function $U(s,q_s)$ is linearly decreasing in $q_s$, we have $q_s^*=Q-d_l$ in this case.
\end{itemize}
Combing the above analysis, we obtain (20).

\subsubsection{Seller's Problem under PT with $R_p=0$}
\begin{itemize}
\item Case I: $q_s\in [0,Q-d_l]$. In this case, the satisfaction loss under low demand is $L(Q-q_s-d_l)=0$, and the satisfaction loss under high demand is $L(Q-q_s-d_h)=\kappa(Q-q_s-d_h)$. The expected utility is  
\begin{align}
	U(s,q_s)=&-\lambda((\kappa-\pi_b^{\max}) q_s+\kappa (d_{h}-Q))^{\beta}w(p)\notag\\
	&+(\pi_b^{\max} q_s)^{\beta}w(1-p).
\end{align}

\begin{pps}
There is at most one local maximum point of $U(s,q_s)$ in the case $q_s\in [0,d_h-Q]$. When $1>\frac{\lambda(\kappa -\pi_b^{\max})^\beta w(p)}{{\pi_b^{\max}}^\beta w(1-p)}(1+\frac{\kappa (Q-d_h)}{( \kappa -\pi_b^{\max})(Q-d_l)})^{\beta-1}$, the local maximum point is at the right boundary point $q_s=d_h-Q$. When $1\leq\frac{\lambda(\kappa -\pi_b^{\max})^\beta w(p)}{{\pi_b^{\max}}^\beta w(1-p)}(1+\frac{\kappa (Q-d_h)}{( \kappa -\pi_b^{\max})(Q-d_l)})^{\beta-1}$, the local maximum point is at an interior point $q_s=\frac{\frac{\kappa }{\kappa -\pi_b^{\max}}(Q-d_h)}{(\frac{w(1-p){\pi_b^{\max}}^{\beta}}{w(p)\lambda(\kappa -\pi_b^{\max})^{\beta}})^{\frac{1}{\beta-1}}-1}$.
\end{pps}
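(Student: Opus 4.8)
The plan is to analyze the stated objective
\[
U(s,q_s)= w(1-p)\,(\pi_b^{\max}q_s)^{\beta}-\lambda\,w(p)\,\bigl[(\kappa-\pi_b^{\max})q_s+\kappa(d_h-Q)\bigr]^{\beta}
\]
on the interval $q_s\in[0,d_h-Q]$ as a difference of two power functions. Write $h(q_s)=\pi_b^{\max}q_s$ for the gain argument and $g(q_s)=(\kappa-\pi_b^{\max})q_s+\kappa(d_h-Q)$ for the magnitude of the loss argument; both are strictly positive and increasing on the interval because $\kappa>\pi_b^{\max}$ and $d_h>Q$. First I would compute
\[
U'(s,q_s)=\beta\bigl[w(1-p)(\pi_b^{\max})^{\beta}q_s^{\beta-1}-\lambda w(p)(\kappa-\pi_b^{\max})g(q_s)^{\beta-1}\bigr],
\]
and observe that, since no second-derivative test is available (the first term is concave and the second convex, so $U$ is neither), the entire argument must be driven by the sign of $U'$.

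The main obstacle is the \emph{at most one local maximum} claim, and the device I would use is to divide $U'$ by the strictly positive factor $\beta q_s^{\beta-1}$, reducing its sign to that of
\[
\Psi(q_s):=w(1-p)(\pi_b^{\max})^{\beta}-\lambda w(p)(\kappa-\pi_b^{\max})\Bigl(\tfrac{g(q_s)}{q_s}\Bigr)^{\beta-1}.
\]
Because $g(q_s)/q_s=(\kappa-\pi_b^{\max})+\kappa(d_h-Q)/q_s$ is strictly decreasing in $q_s$ and the exponent $\beta-1$ is negative, the bracketed power is strictly increasing, so $\Psi$ is strictly decreasing. Since $\Psi(0^+)=w(1-p)(\pi_b^{\max})^{\beta}>0$, the function $\Psi$ (and hence $U'$) changes sign at most once, and only from positive to negative. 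This forces $U$ to rise and then fall, giving unimodality and therefore at most one local maximum.

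The boundary-versus-interior split then follows from the sign of $\Psi$ at the relevant endpoint. I would substitute the endpoint value of the ratio, factored as $\frac{\kappa-\pi_b^{\max}}{\pi_b^{\max}}\bigl(1+\frac{\kappa(d_h-Q)}{(\kappa-\pi_b^{\max})(Q-d_l)}\bigr)$, and compare $\Psi$ against zero; this reproduces exactly the stated threshold involving $\frac{\lambda(\kappa-\pi_b^{\max})^{\beta}w(p)}{(\pi_b^{\max})^{\beta}w(1-p)}\bigl(1+\frac{\kappa(Q-d_h)}{(\kappa-\pi_b^{\max})(Q-d_l)}\bigr)^{\beta-1}$. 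When this quantity is below $1$, $\Psi$ stays nonnegative so $U$ is increasing and the maximizer is the right boundary $q_s=d_h-Q$; otherwise $\Psi$ crosses zero inside the interval, and the unique stationary point is the maximizer.

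Finally, for the interior case I would solve $U'(s,q_s)=0$ in closed form. Cancelling $\beta$ reduces stationarity to $\bigl(g(q_s)/h(q_s)\bigr)^{\beta-1}=\frac{w(1-p)\pi_b^{\max}}{\lambda w(p)(\kappa-\pi_b^{\max})}$; inverting the power, setting the ratio $g/h$ equal to the resulting constant, and clearing $q_s$ yields a linear equation whose solution, after using the identity $\frac{\pi_b^{\max}}{\kappa-\pi_b^{\max}}\bigl(\frac{w(1-p)\pi_b^{\max}}{\lambda w(p)(\kappa-\pi_b^{\max})}\bigr)^{1/(\beta-1)}=\bigl(\frac{w(1-p)(\pi_b^{\max})^{\beta}}{\lambda w(p)(\kappa-\pi_b^{\max})^{\beta}}\bigr)^{1/(\beta-1)}$, matches the stated closed-form interior maximizer. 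The delicate points I expect are purely bookkeeping: tracking the sign of $d_h-Q$ consistently so that the threshold and the interior expression carry the intended signs, and verifying that precisely in the regime where the threshold condition holds the interior root is positive and lies within the interval, so that it is the genuine maximizer rather than a spurious root of the derivative.
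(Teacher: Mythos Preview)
Your proposal is correct and follows essentially the same argument as the paper's own proof. Both compute $U'(s,q_s)$, divide by the strictly positive factor $\beta\,w(1-p)(\pi_b^{\max})^{\beta}q_s^{\beta-1}$ to obtain a sign function (your $\Psi$, the paper's $g$), show it is strictly decreasing in $q_s$ so that $U'$ changes sign at most once (from $+$ to $-$), check the limit at $q_s\to 0^+$ and the value at the right endpoint $q_s=Q-d_l$ to identify the threshold, and then solve $U'=0$ for the interior maximizer; your write-up is in fact tidier about the monotonicity step and the bookkeeping around the $d_h-Q$ versus $Q-d_l$ endpoints than the paper's version.
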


\begin{proof}
The first order partial derivative of $U(s,q_s)$ with respect to $q_s$ is
	\begin{align}
	&\frac{\partial U(s,q_s)}{\partial q_s}=	\beta(\pi_b^{\max} q_s)^{\beta-1}\pi_b^{\max} w(1-p)
	\notag\\
	&-\lambda\beta( \kappa  - \pi_b^{\max})( ( \kappa  - \pi_b^{\max} ) q_s + \kappa (d_h - Q))^{\beta - 1}w(p ).
	\end{align}
	We define 
	\begin{align}
	g(q_s) = 1-\frac{w(p)\lambda(\kappa -\pi_b^{\max})^{\beta}}{w(1-p){\pi_b^{\max}}^{\beta}}(1+\frac{\kappa (Q-d_h)}{q_s})^{\beta-1},
	\end{align}
	and we can rewrite 
	$${\partial U(s, q_s )}/{\partial q_s}=w(1-p){\pi_b^{\max}}^{\beta}q_s^{\beta-1}\beta g(q_s),$$ 
	where $w(1-p){\pi_b^{\max}}^{\beta}q_s^{\beta-1}\beta>0$. The function $g(q_s)$ is a strictly decreasing function of $q_s$, which means the first order partial derivative  ${\partial U(s, q_s )}/{\partial q_s}$ will only be zero at most once, thus at most one local maximum point. 
		
	We then consider the two boundary points: $q_s=\epsilon$ and $q_s=Q-d_l-\epsilon$, with $\epsilon$ being a small positive number approaching zero (i.e. $\epsilon\rightarrow 0^+$).
	
	1) When $q_s=\epsilon$, we have
	\begin{align}
	\lim_{\epsilon\rightarrow 0^+}  g(\epsilon)  =  & \lim_{\epsilon\rightarrow 0^+}   1 \notag\\
	&- \frac{w(p)\lambda(\kappa  - \pi_b^{\max})^{\beta}}{w(1 - p){\pi_b^{\max}}^{\beta}}( 1 + \frac{\kappa (Q - d_l)}{\epsilon})^{\beta-1}\nonumber\\
	=&\infty.
	\end{align}
	2) When $q_s=Q-d_l-\epsilon$, we have:
	\begin{align}
	\lim_{\epsilon\rightarrow 0^+}   g(\epsilon)  &=    \lim_{\epsilon\rightarrow 0^+}   1 \notag\\
	&- \frac{w(p)\lambda(\kappa  - \pi_b^{\max})^{\beta}}{w(1 - p){\pi_b^{\max}}^{\beta}}( 1  +  \frac{\kappa (Q - d_l)}{Q - d_l - \epsilon})^{\beta-1}.
	\end{align}
	We can obtain 
	\begin{align}
&\lim_{\epsilon\rightarrow 0^+} g(Q-d_l-\epsilon) >0\notag\\ &\Leftrightarrow 1>\frac{\lambda(\kappa  - \pi_b^{\max})^\beta w(p)}{{\pi_b^{\max}}^\beta w(1 - p)}(1 \notag\\
&~~~~~~~~+ \frac{\kappa (Q-d_h)}{( \kappa  - \pi_b^{\max})(Q - d_l)})^{\beta-1}.
	\end{align}
	Since $g(q_s)$ and ${\partial U(s,q_s)}/{\partial q_s}$ have the same sign, from (63) we can obtain that 
	$${\partial U(s,Q-d_l-\epsilon)}/{\partial q_s}\geq0$$ when $$1\leq\frac{\lambda(\kappa -\pi_b^{\max})^\beta w(p)}{{\pi_b^{\max}}^\beta w(1-p)}(1+\frac{\kappa (Q-d_h)}{( \kappa -\pi_b^{\max})(Q-d_l)})^{\beta-1},$$ and $${\partial U(s,Q-d_l-\epsilon)}/{\partial q_s}<0$$ when $$1>\frac{\lambda(\kappa -\pi_b^{\max})^\beta w(p)}{{\pi_b^{\max}}^\beta w(1-p)}(1+\frac{\kappa (Q-d_h)}{( \kappa -\pi_b^{\max})(Q-d_l)})^{\beta-1}.$$ 
	For the same reason, from (61) we can obtain that $${\partial U(s,\epsilon)}/{\partial q_s}>0.$$ Thus ${\partial U(s, q_s )}/{\partial q_s}$ may be all positive within the interval $[0,Q-d_l]$ or first positive then negative within that interval based on the value of $\pi_b^{\max}$, $\mu$, $\beta$ and $\lambda$. 
	
	To sum up, the optimal selling amount $q_s^*$ in the case $q_s\in[0,Q-d_l]$ depends on the value of $\pi_b^{\max}$, $\mu$, $\beta$ and $\lambda$ as follows. (i) When $1\leq\frac{\lambda(\kappa -\pi_b^{\max})^\beta w(p)}{{\pi_b^{\max}}^\beta w(1-p)}(1+\frac{\kappa (Q-d_h)}{( \kappa -\pi_b^{\max})(Q-d_l)})^{\beta-1}$, the local maximum point is at an interior point $q_s^*=\frac{\frac{\kappa }{\kappa -\pi_b^{\max}}(Q-d_h)}{(\frac{w(1-p){\pi_b^{\max}}^{\beta}}{w(p)\lambda(\kappa -\pi_b^{\max})^{\beta}})^{\frac{1}{\beta-1}}-1}$. (ii) When $1>\frac{\lambda(\kappa -\pi_b^{\max})^\beta w(p)}{{\pi_b^{\max}}^\beta w(1-p)}(1+\frac{\kappa (Q-d_h)}{( \kappa -\pi_b^{\max})(Q-d_l)})^{\beta-1}$, the local maximum point is at the right boundary $q_s^*=Q-d_l$.

\end{proof}
	
\item Case II: $q_s\in [Q-d_l,\infty)$. In this case, the satisfaction loss under low demand is $L(Q-q_s-d_l)=\kappa(Q-q_s-d_l)$, and the satisfaction loss under high demand is $L(Q-q_s-d_h)=\kappa(Q-q_s-d_h)$. The expected utility is 
\begin{align}
	U(s,q_s)=&-\lambda((\kappa-\pi_b^{\max}) q_s+\kappa (d_{h}-Q))^{\beta}w(p)\notag\\
	&-\lambda((\kappa-\pi_b^{\max}) q_s+\kappa (d_{l}-Q))^{\beta}w(1-p).
\end{align}	
 Since the first order partial derivative ${\partial U(s,q_s)}/{\partial q_s}<0$, $U(s,q_s)$ is a decreasing function of $q_s$, and $q_s^*=Q-d_{l}$ in this case. 
\end{itemize}

Combing the above analysis, we obtain (21).

\subsubsection{Seller's Problem under PT with $R_p=\kappa(Q-d_h)$}
\begin{itemize}
\item Case I: $q_s\in [0,Q-d_l]$. In this case, the satisfaction loss under low demand is $L(Q-q_s-d_l)=0$, and the satisfaction loss under high demand is $L(Q-q_s-d_h)=\kappa(Q-q_s-d_h)$. The expected utility is  
\begin{align}
	U(s,q_s)=&-\lambda((\kappa-\pi_b^{\max}) q_s)^{\beta}w(p)\notag\\
	&+(\pi_b^{\max} q_s+\kappa (d_{h}-Q))^{\beta}w(1-p).
\end{align}

\begin{pps}
The maximum point of $U(s,q_s)$ in the case $q_s\in [0,Q-d_l]$ lies at one of the boundary points. When $U(s,0)\geq U(s,Q-d_l)$, the maximum point is at the left boundary point $q_s^*=0$. When $U(s,0)< U(s,Q-d_l)$, the maximum point is at the right boundary point $q_s^*=Q-d_l$.
\end{pps}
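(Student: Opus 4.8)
The plan is to reduce $U(s,q_s)$ to a one‑parameter normal form and then control the sign of its derivative. On $q_s\in[0,Q-d_l]$ we may write
\[
U(s,q_s) = -A\,q_s^{\beta} + B\,(q_s+c)^{\beta},
\]
where $A \triangleq \lambda\, w(p)\,(\kappa-\pi_b^{\max})^{\beta}$, $B \triangleq w(1-p)\,(\pi_b^{\max})^{\beta}$, and $c \triangleq \kappa(d_h-Q)/\pi_b^{\max}$. All three constants are strictly positive: $A>0$ since $\pi_b^{\max}<\kappa$, and $c>0$ since $d_h>Q$. The case $\beta=1$ is immediate, since then $U(s,q_s)=(B-A)q_s+Bc$ is affine and its maximum over the compact interval is attained at an endpoint; so assume $0<\beta<1$.

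First I would compute, for $q_s>0$,
\[
\frac{\partial U(s,q_s)}{\partial q_s} = \beta\, q_s^{\beta-1}\Bigl[\, B\bigl(1+\tfrac{c}{q_s}\bigr)^{\beta-1} - A \,\Bigr],
\]
and observe that the prefactor $\beta\, q_s^{\beta-1}$ is strictly positive, so the sign of the derivative equals the sign of $h(q_s)\triangleq B(1+c/q_s)^{\beta-1}-A$. Because $\beta-1\in(-1,0)$ and $q_s\mapsto 1+c/q_s$ is strictly decreasing and positive, the map $q_s\mapsto(1+c/q_s)^{\beta-1}$ is \emph{strictly increasing}; hence $h$ is strictly increasing on $(0,\infty)$, with $h(q_s)\to -A<0$ as $q_s\to 0^+$ and $h(q_s)\to B-A$ as $q_s\to\infty$.

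Consequently $\partial U(s,q_s)/\partial q_s$ changes sign at most once on $(0,\infty)$, and when it does it goes from negative to positive. So $U(s,\cdot)$ is either strictly decreasing on $(0,\infty)$ (when $B\le A$) or strictly decreasing and then strictly increasing (when $B>A$), its unique interior critical point being a local \emph{minimum}. In either case $U(s,\cdot)$ has no interior local maximum, so the maximum of the continuous function $U(s,\cdot)$ over the compact interval $[0,Q-d_l]$ is attained at one of the two endpoints. Comparing $U(s,0)$ with $U(s,Q-d_l)$ then gives $q_s^*=0$ when $U(s,0)\ge U(s,Q-d_l)$ and $q_s^*=Q-d_l$ otherwise (breaking ties in favor of $0$ without loss of generality), which is the claim.

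The one point that needs care is bookkeeping of the sign conventions: one must verify that $d_h>Q$ and $\pi_b^{\max}<\kappa$ make $c$ and $A$ positive, so that every power is taken of a positive argument on the relevant range, and that the exponent $\beta-1$ is negative — it is precisely this negativity that makes $(1+c/q_s)^{\beta-1}$ increasing in $q_s$ and thereby forces the lone critical point, if it exists, to be a minimum rather than a maximum. The remaining steps are routine monotonicity and limit computations.
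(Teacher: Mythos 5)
Your argument is correct and is essentially the paper's own proof: both factor $\partial U(s,q_s)/\partial q_s$ as a strictly positive prefactor ($\beta q_s^{\beta-1}$ times positive constants) multiplying a strictly increasing function of $q_s$ that is negative near $q_s=0^+$ (your $h$, the paper's $g$), so the derivative crosses zero at most once from negative to positive, any interior critical point is a local minimum, and the maximum over $[0,Q-d_l]$ is attained at an endpoint, decided by comparing $U(s,0)$ with $U(s,Q-d_l)$. Your normal-form rewriting $-Aq_s^{\beta}+B(q_s+c)^{\beta}$ and the explicit $\beta=1$ case are only cosmetic differences from the paper's treatment.
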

 
\begin{proof}
The first order partial derivative of $U(s,q_s)$ with respect to $q_s$ is
	\begin{align}
	&\frac{\partial U(s, q_s )}{\partial q_s}=-\lambda\beta( \kappa  - \pi_b^{\max})^{\beta} q_s^{\beta-1} w(p)\notag\\
	&+\beta\pi_b^{\max}[\pi_b^{\max} q_s + \kappa(d_h - Q)]^{\beta-1}w(1 - p).
	\end{align}
	We define 
	\begin{align}
	g(q_s)  =&\frac{w(1 - p){\pi_b^{\max}}}{\lambda w(p)}[\frac{\pi_b^{\max}}{\kappa - \pi_b^{\max}} + \frac{\kappa (d_h - Q)}{(\kappa - \pi_b^{\max})q_s}]^{\beta-1}\notag\\
	&- 1,
	\end{align}
	and we can rewrite $${\partial U(s, q_s )}/{\partial q_s}=\lambda\beta( \kappa  - \pi_b^{\max})^{\beta} q_s^{\beta-1} w(p) g(q_s),$$ where $$\lambda\beta( \kappa  - \pi_b^{\max})^{\beta} q_s^{\beta-1} w(p)>0.$$ The function $g(q_s)$ is a strictly increasing function of $q_s$, which means the first order partial derivative  ${\partial U(s,q_s)}/{\partial q_s}$ will only be zero at most once, and the point that satisfies the first order condition is a local minimum point. Hence the maximum point of $U(s,q_s)$ in the case $q_s\in [0,Q-d_l]$ lies at one of the boundary points. We compare the corresponding $U(s,q_s)$ with $q_s=0$ and $q_s=Q-d_l$ to find the optimal solution in this case.
	
	To sum up, the optimal selling amount $q_s^*$ in the case $q_s\in[0,Q-d_l]$ depends on the value of $\pi_b^{\max}$, $\mu$, $\beta$ and $\lambda$ as follows. (i) When $\lambda w(p)[(\kappa-\pi_b^{\max})(Q-d_l)]^{\beta}\geq w(1-p)\{[(\pi_b^{\max}-\kappa)Q+\kappa d_h-\pi_b^{\max} d_l]^{\beta}-[\kappa(d_h-Q)]^{\beta}\}$, the local maximum point is at the left boundary point $q_s=0$. (ii) When $\lambda w(p)[(\kappa-\pi_b^{\max})(Q-d_l)]^{\beta}<w(1-p)\{[(\pi_b^{\max}-\kappa)Q+\kappa d_h-\pi_b^{\max} d_l]^{\beta}-[\kappa(d_h-Q)]^{\beta}\}$, the local maximum point is at the right boundary point $q_s=Q-d_l$.

\end{proof}
	
\item Case II: $q_s\in [Q-d_l,\infty)$. In this case, the satisfaction loss under low demand is $L(Q-q_s-d_l)=\kappa(Q-q_s-d_l)$, and the satisfaction loss under high demand is $L(Q-q_s-d_h)=\kappa(Q-q_s-d_h)$. The expected utility is 
\begin{align}
	&U(s,q_s) = - \lambda((\kappa-\pi_b^{\max}) q_s)^{\beta}w(p)\notag\\
	& - \lambda((\kappa - \pi_b^{\max}) q_s + \kappa (d_h - d_{l} - 2Q))^{\beta}w(1 - p).
\end{align}	
 Since the first order partial derivative ${\partial U(s,q_s)}/{\partial q_s}<0$, the utility function $U(s,q_s)$ is a decreasing function of $q_s$, and $q_s^*=Q-d_{l}$ in this case. 
\end{itemize}

Combing the above analysis, we obtain (22).

\subsection{Detailed Dynamic Data Trading Algorithm}
Next, we present an algorithm for computing the dynamic data trading decisions starting at day $\hat{\jmath}$. At the beginning of each day, the user obtains the demand prediction according to our sliding window method, and the remaining quota since yesterday's trade. The pseudo code is shown in Algorithm 2.

\begin{algorithm}
\caption{Dynamic Trading Decision}
\textbf{Input}: Remaining quota ($Q_{\hat{\jmath}}$), risk parameters ($\mu$, $\beta$, $\lambda$, $R_p$), demand prediction ($d_i$, $p_i$, $\forall i \in \mathcal{I}$), market information ($\pi_s^{\min}$, $\pi_b^{\max}$).\\ $//$ Initialize the risk parameters and personal profiles.\\

\For{$j=J$ \text{to} $T$}
{
\For{$i=1$ \text{to} $I$}{
$d_{i}:= d_{i-1}-\delta_{\hat{m}-i,j-1}$; $//$ Update the prediction of demand based on current usage.\\
}
Calculate $q^*_b$ by solving Problem (\ref{equ:stage2b}) and $q^*_s$ by solving Problem (\ref{equ:stage2s})\;
\If{$U(q^*_b)>U(q^*_s)$}{$q_j^*:= q^*_b$ and $a^*:= b$;}
\Else{$q_j^*:= -q^*_s$ and $a^*:= s$;}
$Q_{j+1}:= Q_{j}+q_{j^*}-\delta_{\hat{m},\hat{\jmath}}$; $//$ Update the remained quota after every trade.\\
}
\end{algorithm}

\subsection{Proof of Unique solution by Indifference Equations (25) and (26)}
We show the unique solution of $\beta$ by the indifference equations (25) and (26) by the following theorem.
\begin{thm}
 Given $0\leq p\leq1$, $1<B<A<C$, the curve $f_A(\beta)=A^\beta$ and the curve $f_{BC}(\beta)=pB^{\beta}+(1-p)C^{\beta}$ have at most one intersection in the interval $\beta \in (0,1)$.
 \end{thm}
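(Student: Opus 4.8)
The plan is to divide both curves by the strictly positive function $f_A(\beta)=A^\beta$ and study the quotient, which collapses a three‑term expression into a sum of just two exponentials. Define
\begin{equation}
g(\beta)\triangleq\frac{f_{BC}(\beta)}{f_A(\beta)}=p\left(\frac{B}{A}\right)^{\beta}+(1-p)\left(\frac{C}{A}\right)^{\beta},
\end{equation}
and set $b\triangleq B/A$, $c\triangleq C/A$, so that $0<b<1<c$ since $1<B<A<C$. An intersection of $f_A$ and $f_{BC}$ in $(0,1)$ is precisely a solution of $g(\beta)=1$ in $(0,1)$, so it suffices to bound the number of such solutions. A crucial free fact is the boundary value $g(0)=p+(1-p)=1$: the point $\beta=0$ is always a solution, though it lies outside the open interval.

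First I would dispose of the degenerate cases $p\in\{0,1\}$. If $p=1$ then $g(\beta)=b^{\beta}$ is strictly decreasing with $g(0)=1$, hence $g(\beta)<1$ for every $\beta>0$; if $p=0$ then $g(\beta)=c^{\beta}$ is strictly increasing with $g(0)=1$, hence $g(\beta)>1$ for every $\beta>0$. In either case $g(\beta)=1$ has no solution in $(0,1)$ and the claim is immediate.

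For $p\in(0,1)$ the key observation is that $g$ is strictly convex on all of $\mathbb{R}$: writing $g(\beta)=p\,e^{\beta\ln b}+(1-p)\,e^{\beta\ln c}$ gives $g''(\beta)=p(\ln b)^{2}b^{\beta}+(1-p)(\ln c)^{2}c^{\beta}>0$, because $b\neq 1$ and $c\neq 1$. A strictly convex function attains any fixed value at no more than two points — if $g(\beta_1)=g(\beta_2)=g(\beta_3)=1$ with $\beta_1<\beta_2<\beta_3$, then writing $\beta_2$ as a strict convex combination of $\beta_1$ and $\beta_3$ forces $g(\beta_2)<1$, a contradiction. Since $\beta=0$ is already one such point, $g(\beta)=1$ has at most one further solution in $\mathbb{R}$, and in particular at most one in $(0,1)$. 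This establishes the theorem.

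I do not expect a real obstacle here; the only subtlety worth flagging is the reduction step. One must \emph{divide} by $f_A$ rather than subtract: the difference $f_{BC}-f_A$ is a linear combination of three exponentials and need not be convex, whereas the quotient is a positive combination of exactly two exponentials and hence strictly convex, which is exactly what makes the counting work. The boundary identity $g(0)=1$ is what upgrades ``at most two roots'' to ``at most one root in $(0,1)$'', so it should be stated explicitly. If one further wants the uniqueness of the threshold prices $\bar{\pi}_s^{PTh}$ and $\bar{\pi}_s^{PTl}$ used after (\ref{equ:pispth}) and (\ref{equ:pisptl}), it then only remains to substitute the specific $A,B,C$ arising there and verify $1<B<A<C$ from the ranges of $\pi$, $\kappa$, $d_l$, $d_h$, and $Q$, which is routine bookkeeping.
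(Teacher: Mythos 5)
Your proof is correct, and it takes a genuinely different route from the paper's. The paper parametrizes the intersection condition as $A=(pB^{\beta}+(1-p)C^{\beta})^{1/\beta}$ and proves that this weighted power mean $A(\beta)$ is strictly increasing in $\beta$, by differentiating its logarithm and invoking Jensen's inequality for the convex function $h(x)=x\ln x$; monotonicity then makes the level set $\{\beta: A(\beta)=A\}$ a single point. You instead divide through by $A^{\beta}$, note the anchor value $g(0)=1$, and use strict convexity of $g(\beta)=p(B/A)^{\beta}+(1-p)(C/A)^{\beta}$ (a positive combination of two exponentials) to cap the number of solutions of $g(\beta)=1$ at two on all of $\mathbb{R}$, hence at one in $(0,1)$. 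Both arguments ultimately rest on convexity, but yours avoids the logarithmic-derivative computation entirely and is shorter; it also buys a small technical improvement: you treat $p\in\{0,1\}$ explicitly, whereas the paper's strict inequality $A'(\beta)>0$ degenerates there (the power mean becomes constant), a boundary case covered by the theorem's hypothesis $0\leq p\leq 1$ and handled only implicitly in the paper. Your flagged subtlety is also well taken — the difference $f_{BC}-f_A$ is a three-exponential combination that need not be convex, so dividing rather than subtracting (or, equivalently, normalizing by the middle exponential) is what makes the two-root bound available, and the identity $g(0)=1$ is what sharpens it to one root in the open interval.
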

 \begin{proof}
 To show $f_A(\beta)$ and $f_{BC}(\beta)$ have at most one intersection, we need to show the solution $A(\beta)=(pB^\beta+(1-p)C^\beta)^{\frac{1}{\beta}}$ is monotone in $\beta$.
 
 We write 
 \begin{align}
 A(\beta)=(pB^\beta+(1-p)C^\beta)^{\frac{1}{\beta}}.
 \end{align}
 To find first order derivative of $A(\beta)$, we first write $$g(\beta)=\ln(A(\beta))=\frac{1}{\beta}\ln(pB^\beta+(1-p)C^\beta).$$
 Hence, the first order derivative of $g(\beta)$ 
 $$g'(\beta)=\frac{1}{A(\beta)}A'(\beta),$$
 which implies
 \begin{align}
 A'&(\beta)=g'(\beta)A(\beta)\notag\\
 =&(pB^\beta + (1 - p)C^\beta)^{\frac{1}{\beta}}[-\frac{1}{\beta^2}\ln(pB^\beta + (1 - p)C^\beta) \notag\\
 &+ \frac{1}{\beta}\frac{1}{pB^\beta + (1 - p)C^\beta}(pB^\beta\ln B + (1 - p)C^\beta\ln C)]\notag\\
 =&\frac{(pB^\beta + (1 - p)C^\beta)^{\frac{1}{\beta}}}{\beta^2(pB^\beta + (1 - p)C^\beta)}[\beta(pB^\beta\ln B + (1 - p)C^\beta\ln C)\notag\\
 &-(pB^\beta + (1 - p)C^\beta)\ln(pB^\beta + (1 - p)C^\beta)]\notag\\
  =&\frac{(pB^\beta + (1 - p)C^\beta)^{\frac{1}{\beta}}}{\beta^2(pB^\beta + (1 - p)C^\beta)}[(pB^\beta\ln B^\beta + (1 - p)C^\beta\ln C^\beta)\notag\\
  &- (pB^\beta + (1 - p)C^\beta)\ln(pB^\beta + (1 - p)C^\beta)].
 \end{align}
 
 Then we study the second order derivative of the function $h(x)=x\ln x$:
 $$h''(x)>0,$$
 which implies
 $$h(px_1+(1-p)x_2)<ph(x_1)+(1-p)h(x_2).$$
 Hence, we know that 
 \begin{align}
     &- (pB^\beta + (1 - p)C^\beta)\ln(pB^\beta + (1 - p)C^\beta) \notag\\
     &+ (pB^\beta\ln B^\beta + (1 - p)C^\beta\ln C^\beta)
     >0,
 \end{align}
 and then 
 $$A'(\beta)>0.$$
 Since $A(\beta)$ is monotone in $\beta$, we prove Theorem 5. 
 
 \end{proof}

From Theorem 5, we can find the unique $\beta$ value by solving the indifference equations. After we find the $\beta$ value, we substitute this value into (26) only to find the value of $\lambda$, because the $\lambda$ in both sides of (25) can be cancelled, and then (25) is not a function in $\lambda$. Since (26) is a linear equation in $\lambda$, we can find the unique solution of $\lambda$.

\begin{IEEEbiography}
[{\includegraphics[width=1in,height=1.25in,clip,keepaspectratio]{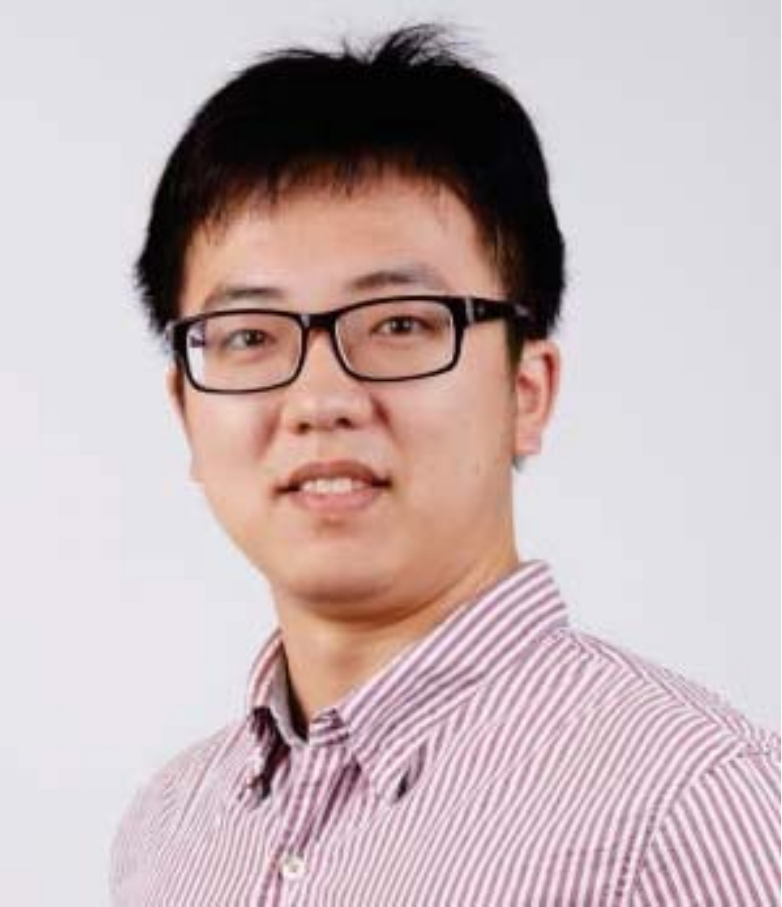}}]
{Junlin Yu} (S'14) is working towards his Ph.D. degree in the Department of Information Engineering at the Chinese University of Hong Kong. His research interests include behavioral economical studies in wireless communication networks, and optimization in mobile data trading. He is a student member of IEEE.
\end{IEEEbiography}

\begin{IEEEbiography}
[{\includegraphics[width=1in,height=1.25in,clip,keepaspectratio]{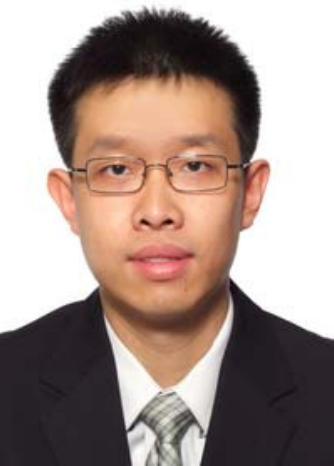}}]
{Man Hon Cheung} received the B.Eng. and M.Phil. degrees in Information Engineering from the Chinese University of Hong Kong (CUHK) in 2005 and 2007, respectively, and the Ph.D. degree in Electrical and Computer Engineering from the University of British Columbia (UBC) in 2012.
 Currently, he is a postdoctoral fellow in the Department of Information Engineering in CUHK.
 He received the IEEE Student Travel Grant for attending {\it IEEE ICC 2009}. He was awarded the Graduate Student International Research Mobility Award by UBC, and the Global Scholarship Programme for Research Excellence by CUHK.
 He serves as a Technical Program Committee member in {\it IEEE ICC}, {\it Globecom}, and {\it WCNC}.
 His research interests include the design and analysis of wireless network protocols using optimization theory, game theory, and dynamic programming, with current focus on mobile data offloading, mobile crowd sensing, and network economics.
\end{IEEEbiography}

\begin{IEEEbiography}[{\includegraphics[width=1in,height=1.25in,clip,keepaspectratio]{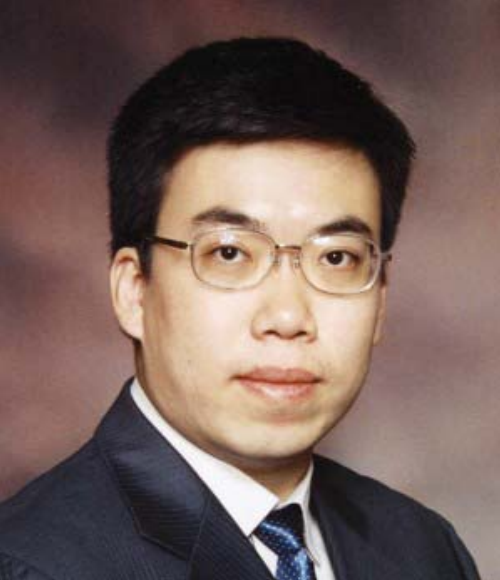}}]
{Jianwei Huang} (S'01-M'06-SM'11-F'16) is an Associate Professor and Director of the Network Communications and Economics Lab (ncel.ie.cuhk.edu.hk), in the Department of Information Engineering at the Chinese University of Hong Kong. He received the Ph.D. degree from Northwestern University in 2005, and worked as a Postdoc Research Associate at Princeton University during 2005-2007. Dr. Huang is the co-recipient of 8 Best Paper Awards, including IEEE Marconi Prize Paper Award in Wireless Communications in 2011. He has co-authored six books, including the textbook on ``Wireless Network Pricing.'' He received the CUHK Young Researcher Award in 2014 and IEEE ComSoc Asia-Pacific Outstanding Young Researcher Award in 2009. Dr. Huang has served as an Associate Editor of IEEE/ACM Transactions on Networking, IEEE Transactions on Cognitive Communications and Networking, IEEE Transactions on Wireless Communications, and IEEE Journal on Selected Areas in Communications - Cognitive Radio Series. He has served as the Chair of IEEE ComSoc Cognitive Network Technical Committee and Multimedia Communications Technical Committee. He is an IEEE Fellow, a Distinguished Lecturer of IEEE Communications Society, and a Thomson Reuters Highly Cited Researcher in Computer Science.
\end{IEEEbiography}

\begin{IEEEbiography}[{\includegraphics[width=1in,height=1.25in,clip,keepaspectratio]{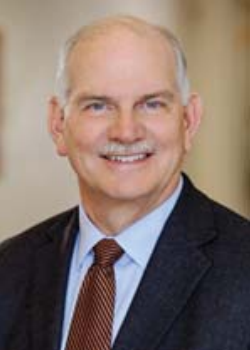}}]
{H. Vincent Poor} (S'72, M'77, SM'82, F'87) received the Ph.D. degree in EECS from Princeton University in 1977.  From 1977 until 1990, he was on the faculty of the University of Illinois at Urbana-Champaign. Since 1990 he has been on the faculty at Princeton, where he is currently the Michael Henry Strater University Professor of Electrical Engineering. During 2006 to 2016, he served as Dean of Princeton's School of Engineering and Applied Science. His research interests are in the areas of information theory, statistical signal processing and stochastic analysis, and their applications in wireless networks and related fields such as smart grid and social networks. Among his publications in these areas is the book \emph{Mechanisms and Games for Dynamic Spectrum Allocation} (Cambridge University Press, 2014).

Dr. Poor is a member of the National Academy of Engineering, the National Academy of Sciences, and is a foreign member of the Royal Society. He is also a fellow of the American Academy of Arts and Sciences, the National Academy of Inventors, and other national and international academies. He received the Marconi and Armstrong Awards of the IEEE Communications Society in 2007 and 2009, respectively. Recent recognition of his work includes the 2016 John Fritz Medal, the 2017 IEEE Alexander Graham Bell Medal, Honorary Professorships at Peking University and Tsinghua University, both conferred in 2016, and a D.Sc. \emph{honoris causa} from Syracuse University awarded in 2017.

\end{IEEEbiography}

\end{document}